    \let\Cref\crtCref
    \let\cref\crtcref
\newcommand{\cqed}{\ensuremath{\lhd}}
\newenvironment{claimproof}{\par
	\pushQED{\cqed}%
	\normalfont \topsep6\p@\@plus6\p@\relax
	\trivlist
	\item\relax
	{\itshape
		Proof of the claim\@addpunct{.}}\hspace\labelsep\ignorespaces
}{%
	\hfill\popQED\endtrivlist\@endpefalse
}
\theoremstyle{plain}
\newtheorem{theorem}{Theorem}[section]
\newtheorem{lemma}[theorem]{Lemma}
\newtheorem{observation}[theorem]{Observation}
\newtheorem{Definition}[theorem]{Definition}
\newtheorem{claim}[theorem]{Claim}
\newtheorem{assumption}{Assumption}
\def\poly{\operatorname{poly}}
\def\polylog{\operatorname{polylog}}
\newcommand{\email}[1]{\href{mailto:#1}{#1}}
\title{Strongly Polynomial Parallel Work-Depth Tradeoffs\\ for Directed SSSP\\
}
\date{}
\author{Adam Karczmarz\thanks{University of Warsaw, Poland. \email{a.karczmarz@mimuw.edu.pl}. Supported by the National Science Centre (NCN) grant no. 2022/47/D/ST6/02184.}
\and Wojciech Nadara\thanks{University of Warsaw, Poland. \email{w.nadara@mimuw.edu.pl}. Supported by European Union's Horizon 2020 research and innovation programme, grant agreement No. 948057 — BOBR.}
\and Marek Sokołowski\thanks{Max Planck Institute for Informatics, Saarland Informatics Campus, Germany. \email{msokolow@mpi-inf.mpg.de}.}}
\newcommand{\Oh}{\ensuremath{O}}
\newcommand{\Ot}{\ensuremath{\widetilde{O}}}
\newcommand{\dist}{\mathrm{dist}}
\newcommand{\wei}{w}
\newcommand{\floor}[1]{\left\lfloor #1 \right\rfloor}
\newcommand{\ceil}[1]{\left\lceil #1 \right\rceil}
\newcommand{\M}{\mathbb{M}}
\newcommand{\N}{\mathbb{N}}
\newcommand{\R}{\mathbb{R}}
\newcommand{\Rn}{\R_{\geq 0}}
\newcommand{\Yc}{\mathcal{Y}}
\newcommand{\known}{S}
\newcommand{\near}[2]{N_{{#1}}({#2})}
\newcommand{\NL}{\mathrm{NL}}
\newcommand{\indeg}{\mathrm{indeg}}
\newcommand{\outdeg}{\mathrm{outdeg}}
\newcommand{\iso}{\mathrm{iso}}
\newcommand{\diff}{\mathrm{diff}}
\begin{document}

\maketitle

\iftrue
\begin{textblock}{20}(-1.8, 8.7)
	\includegraphics[width=40px]{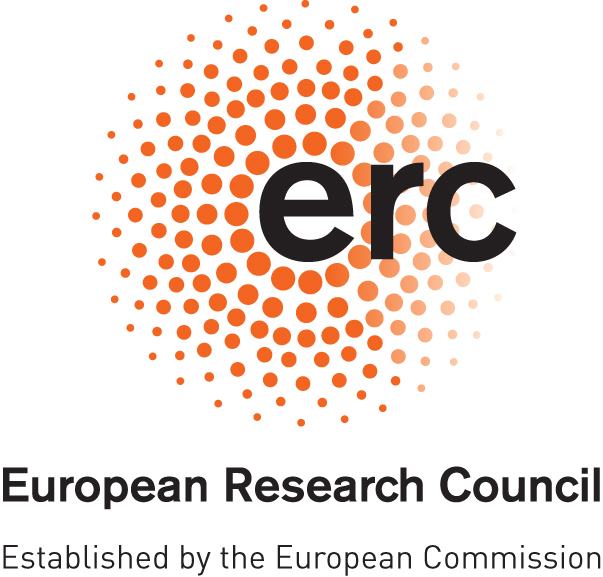}%
\end{textblock}
\begin{textblock}{20}(-2.05, 9.0)
	\includegraphics[width=60px]{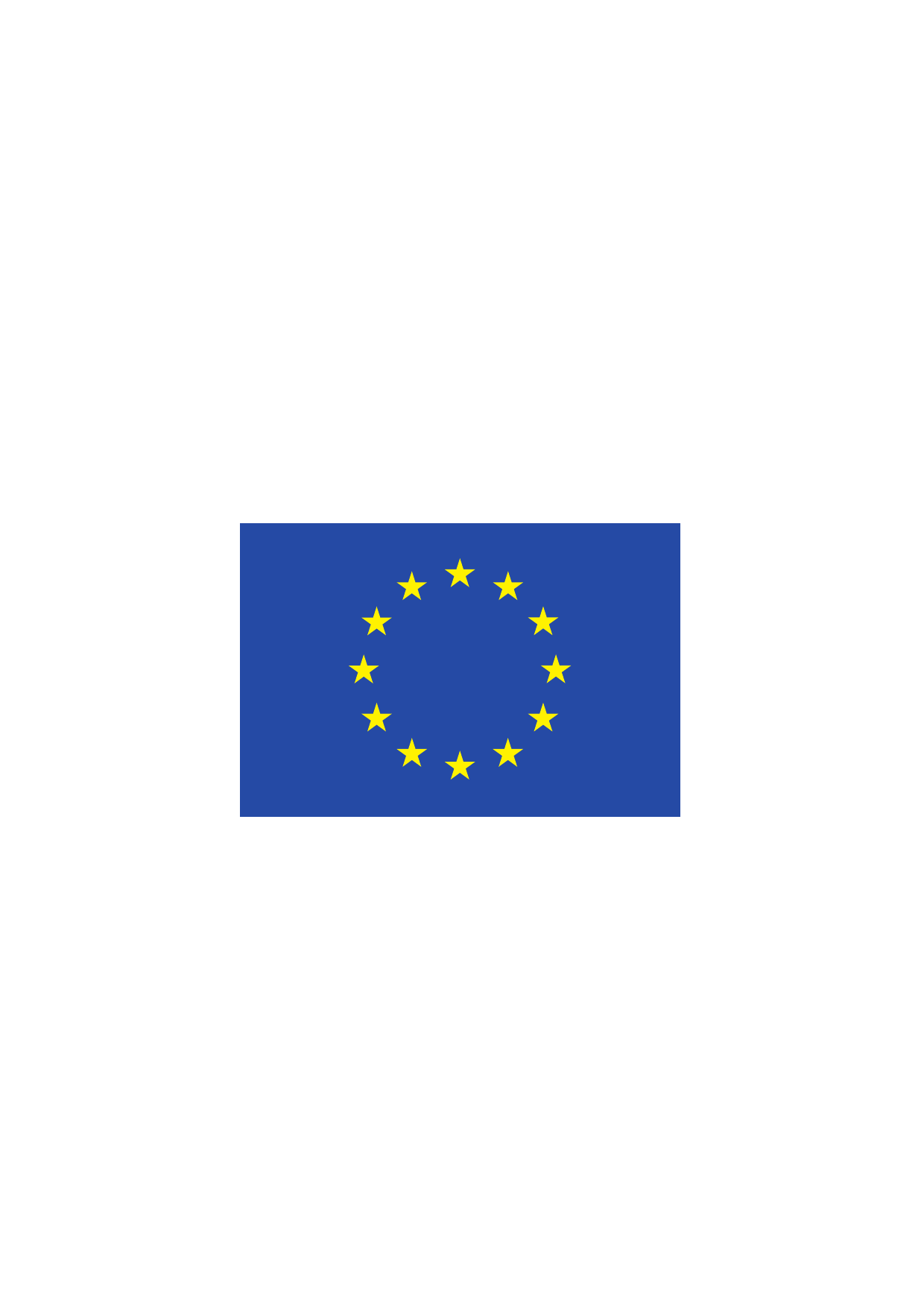}%
\end{textblock}
\fi

\begin{abstract}
  In this paper, we show new strongly polynomial work-depth tradeoffs for computing single-source shortest paths (SSSP) in non-negatively weighted directed graphs in parallel.
  Most importantly, we prove that directed SSSP can be solved within $\Ot(m+n^{2-\epsilon})$ work and $\Ot(n^{1-\epsilon})$ depth
  for some positive $\epsilon>0$.
  In particular, for dense graphs with non-negative real weights, we provide the first nearly work-efficient strongly polynomial algorithm with sublinear depth.
  
  Our result immediately yields improved strongly polynomial parallel algorithms for min-cost flow and the assignment problem.
  It also leads to the first non-trivial strongly polynomial dynamic algorithm for minimum mean cycle.
  Moreover, we develop efficient parallel algorithms in the Word RAM model for several variants of SSSP in graphs with exponentially large edge weights.
\end{abstract}

\section{Introduction}

Single-source shortest paths (SSSP) is one of the most fundamental and well-studied graph problems. 
The famous Dijkstra's algorithm~\cite{dijkstra1959note} solves the problem on directed graphs ${G=(V,E)}$ with~$n$ vertices, $m$ edges, and non-negative real edge weights
in ${O(m+n\log{n})}$ time if an appropriate priority queue is used~\cite{FredmanT87}.
Dijkstra's algorithm, in its classical form, finds an exact solution by operating on the input weights via additions and comparisons only.
It is also \emph{strongly polynomial}, i.e., its running time on real data is polynomial in $n,m$ exclusively\footnote{And does not abuse the arithmetic model, e.g., runs in polynomial space for rational inputs on a Turing machine.}.
In particular, it does not depend on the input weights' magnitude, which is the case for \emph{weakly polynomial} algorithms.

Devising improved strongly polynomial shortest paths algorithms has historically proved very challenging.
Only very recently, the decades-old state-of-the-art strongly polynomial SSSP bounds \cite{bellman1958routing,FredmanT87} have witnessed first breakthroughs: for non-negative weights~\cite{DuanMMSY25} (in the sparse case), and for possibly-negative weights~\cite{Fineman24}.
This is despite numerous previous improvements in the weakly polynomial regime (for integer data), e.g.,~\cite{BernsteinNW22, BoasKZ77, ChenKLPGS22, CohenMSV17, GabowT89, Goldberg95, Hagerup00, Thorup00, Thorup04} or in the undirected case~\cite{DuanMSY23}.

\paragraph{Parallel SSSP.} In this paper, our focus is on strongly polynomial \emph{parallel} SSSP algorithms for non-negatively weighted \emph{directed} graphs (digraphs).
The current de facto standard for studying the efficiency of parallel shared-memory (PRAM) algorithms is the \emph{work-depth} model (e.g.,~\cite{jaja1992introduction,blelloch1996programming}).
The work of a parallel algorithm is its sequential running time, whereas the depth is the longest chain of sequential dependencies between the performed operations.
In this paper, when quantifying work and depth, we always stick to the $\Ot(\cdot)$ notation\footnote{$\Ot$ suppresses $\polylog(n)$ factors. That is, whenever we write $\Ot(f(n))$, we mean $O(f(n)\polylog{n})$. In particular, $\Ot(1)$ captures $O(\log^2{n})$ and $\Ot(n/t)$ captures $O((n/t)\polylog{n})$ for all parameters $t\in[1,n]$.} as we are concerned about polynomial improvements anyway.\footnote{Moreover, this way, we do not need to specify the precise shared memory access model (e.g., EREW, CRCW) as polylogarithmic-overhead reductions between the variants are known~\cite{jaja1992introduction}.}
A parallel algorithm is called \emph{nearly work-efficient} if its work matches, up to polylogarithmic factors, the best-known sequential bound for the studied problem.

Dijkstra's algorithm can be parallelized to run in $\Ot(m)$ work and $\Ot(n)$ depth if one uses parallel priority queues~\cite{PaigeK85, BrodalTZ98}.
It is also well-known that all-pairs shortest paths (APSP) -- and thus also SSSP -- can be computed within $\Ot(n^3)$ work and $\Ot(1)$ depth via repeated squaring-based exponentiation of the weighted adjacency matrix using the $(\min,+)$ product.
Identifying feasible work-depth pairs or tradeoffs for exact SSSP beyond these classical ones has been the subject of a~large body of prior work~\cite{Blelloch0ST16, BringmannHK17, CaoF23, Cohen97, ForsterN18, GazitM88, KleinS93, KleinS97, RozhonHMGZ23, ShiS99, Spencer97, UllmanY91}; however, their majority studied weakly polynomial bounds, and mostly in integer-weighted graphs.
Notably,~\cite{RozhonHMGZ23} showed a general $\Ot(1)$-overhead reduction to $(1+\epsilon)$-approximate SSSP for digraphs with polynomially-bounded non-negative integer weights, which enabled lifting known approximate SSSP bounds (e.g.~\cite{CaoFR20, CaoFR20a, Zwick02}) to the exact setting.
In particular, this implied a nearly work-efficient algorithm
with $n^{1/2+o(1)}$ depth, a result shown independently by~\cite{CaoF23}.
Spencer~\cite{Spencer97} showed an $\Ot(m+nt^2)$ work and $\Ot(n/t)$ depth trade-off (for any $t\in [1,n]$) that does not require integrality, but merely $\poly(n)$-bounded weights.
\Cref{tab:bounds} lists the state-of-the-art bounds (wrt.\ $\Ot$) for exact parallel SSSP under various restrictions.

\paragraph{Strongly polynomial parallel SSSP.} Parallel SSSP in the strongly polynomial regime appears much less understood.
A simplified version of Spencer's SSSP tradeoff~\cite{Spencer97} (not described explicitly in the literature, to the best of our knowledge; see~\Cref{sec:basic}) yields $\Ot(n^2t)$ work and $\Ot(n/t)$ depth for any parameter $t\in [1,n]$.
Note that this constitutes a natural interpolation between parallel Dijkstra and repeated-squaring for dense graphs.
For sparse graphs,~\cite{BringmannHK17} show a better tradeoff with $\Ot(mn+t^3)$ work and $\Ot(n/t)$ depth that even works for negative weights.

Strikingly, for directed SSSP with non-negative weights, none of the known strongly polynomial parallel algorithms for directed graphs with non-trivial, truly sublinear depth achieve subquadratic work.
In particular, none attain sub-Bellman--Ford $O(mn^{1-\epsilon})$ work for all densities $m\in [n,n^2]$, and none are nearly-work efficient for \emph{any} density, such as~$m\approx n^2$.
It is thus natural to ask:
\begin{quote}
  \emph{Is there a strongly polynomial SSSP algorithm for non-negatively weighted directed graphs with
  $\Ot(m+n^{2-\epsilon})$ work and $\Ot(n^{1-\epsilon})$ depth, for some $\epsilon>0$?}
\end{quote}
Note that in the \emph{weakly polynomial} regime, algorithms with such work/depth characteristics have been known for subexponential real weights~\cite{Spencer91,Spencer97}
for more than 30 years.
Interestingly, the posed question has also been known to have an affirmative answer for the special case of \emph{undirected} graphs~\cite{ShiS99}.
Both Spencer~\cite{Spencer97} and Shi and Spencer~\cite{ShiS99} noted that their methods fail to produce non-trivial bounds
for directed graphs in the strongly polynomial regime.
They explicitly asked for strongly polynomial parallel SSSP algorithms for digraphs
with comparable bounds.

It is worth noting that, in the strongly polynomial regime, directed SSSP computation (with non-negative weights) is the bottleneck
of many state-of-the-art optimization algorithms, e.g.,~\cite{EdmondsK72, HuangJQ25, OlverV20, Orlin93}.
Therefore,~improving the parallel SSSP bounds can immediately lead to improved parallel algorithms for those.

\subsection{Our results}
As our main result, we give an affirmative answer to the posed question and show the following \emph{deterministic} strongly polynomial work-depth tradeoff for directed SSSP.
\begin{theorem}
  \label{thm:main}
  Let $G = (V, E)$ be a~weighted digraph with weights in $\R_{\ge 0}$.
  Single-source shortest paths in $G$ can be computed within $\Ot(m + n^{9/5}t^{17/5})$ work and $\Ot(n / t)$ depth for any $t \in [1, n^{1/17}]$.
\end{theorem}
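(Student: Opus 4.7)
The plan is to extend Spencer's phase-based parallel SSSP framework~\cite{Spencer97} by inserting an additional hub hierarchy, so that the per-phase work drops below the quadratic cost and sub-Bellman--Ford total work becomes achievable in the strongly polynomial regime. The outer skeleton is a Dijkstra-like loop that commits a batch of $t$ new vertices per round to a growing set $\known$ of vertices with correct final distances. With at most $\Ot(n/t)$ rounds of polylog depth, the total depth target $\Ot(n/t)$ is met provided each round can be implemented with very shallow dependency structure. Correctness of each batch relies on the standard Dijkstra invariant: the $t$ frontier vertices with smallest tentative estimates are committed, safely, once one can certify that no unsettled vertex has an incoming shortest path shorter than the $t$-th tentative value.

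Next, I would introduce two primitives. The first is a hop-bounded relaxation: given $U\subseteq V$ equipped with tentative distances, compute $\min_{u\in U}\bigl(d[u]+\dist_{\le h}(u,v)\bigr)$ for every $v\in V$ in $\Ot(h)$ depth via a parallel Bellman--Ford/\hspace{0pt}$(\min,+)$-product over the $n\times n$ weighted adjacency matrix. The second is a (deterministic) hub sample: a hitting set $H$ of size $\Ot(n/h)$ such that every shortest path with more than $h$ edges meets $H$, together with maintained single-source distances from every hub to every vertex. A shortest path from $\known$ to any candidate either uses at most $h$ edges outside $\known$ (captured by the hop-bounded primitive applied to the newly committed batch) or passes through some already-discovered hub vertex (captured by the hub tables). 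After each batch commit, we refresh only the relaxations that could have changed, namely those touching the $t$ newly committed vertices, which is the step where the non-trivial work saving comes from.

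With hop budget $h$ and hub sample of size $n/h$, the per-phase work decomposes as (i) $\Ot(h\cdot n\cdot t)$ for spreading the freshly committed batch through $h$-bounded DP; (ii) $\Ot(n\cdot n/h)$ for sweeping hub shortcuts against the new frontier; (iii) an amortized contribution $\Ot(n^2/h)$ per phase for refreshing hub-to-$V$ tables, charged globally across $\Ot(n/h)$ hub-refresh events. Multiplying by $n/t$ phases and optimizing $h$ as a fractional power of $n$ and $t$ balances the three terms against each other and against the $\Ot(m)$ base cost, yielding $\Ot(m+n^{9/5}t^{17/5})$ work; the restriction $t\le n^{1/17}$ is exactly the range in which the optimum $h$ simultaneously satisfies $h\ge 1$ and $h\le n$, so all three contributions remain meaningful.

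The step I expect to be the main obstacle is the correctness certificate in the directed real-weighted setting, where Spencer and Shi--Spencer explicitly noted that their techniques fail. One cannot use weight-magnitude bucketing ($\Delta$-stepping and relatives) without losing strong polynomiality, so the bulk-commit of $t$ vertices per phase must be justified by a purely combinatorial argument: every incoming shortest path to a top-$t$ candidate is, by the hop+hub decomposition, already captured in the current tentative estimates. Turning this into a water-tight invariant that survives across all $\Ot(n/t)$ phases, and showing that the hub refresh schedule is aggressive enough to keep the invariant while cheap enough to respect the budget in (iii), is the delicate core of the proof.
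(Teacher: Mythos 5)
There is a genuine gap, and it sits exactly where you predicted it would: the proposal never constructs the object that makes the work bound subquadratic. Your hub tables --- ``maintained single-source distances from every hub to every vertex'' --- are circular in the strongly polynomial directed setting: computing them \emph{is} the SSSP problem. The standard non-circular substitute (hop-bounded Bellman--Ford from each of the $\Ot(n/h)$ hubs, then min-plus squaring on the hub graph) costs $\Omega(mn)$ work and only reproduces the known $\Ot(mn+t^3)$ tradeoff of Bringmann et al.; the weakly polynomial shortcuts around this all use weight bucketing, which you correctly rule out. Your claimed $\Ot(n^2/h)$ amortized refresh in shallow depth is asserted, not constructed. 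There is also a depth-accounting error: an $h$-hop relaxation implemented as $h$ rounds of Bellman--Ford has depth $\Ot(h)$, and invoking it in each of the $\Ot(n/t)$ phases gives total depth $\Ot(nh/t)$, not $\Ot(n/t)$, unless $h=\Ot(1)$; if you instead use repeated $(\min,+)$ squaring restricted to $t$-nearest sets to get $\Ot(1)$ depth per phase, the per-phase work is $\Omega(nt^2)$ and you land back on the basic $\Ot(n^2t)$ bound. Finally, the arithmetic does not close: balancing your terms (i) $\Ot(hn^2)$ total against (ii) $\Ot(n^3/(ht))$ total gives $h=\sqrt{n/t}$ and work $\Ot(n^{5/2}t^{-1/2})$, which is superquadratic; the exponents $9/5$, $17/5$ and the range $t\le n^{1/17}$ come from a three-parameter optimization that your two-parameter setup cannot reproduce.

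The missing idea --- which the paper spends Sections 4 and 5 on --- is how to make precomputed neighborhood information survive the commits. Committing $t$ vertices can invalidate \emph{all} $n$ precomputed $t$-nearest sets at once (they can all coincide), so ``refresh only the relaxations touching the $t$ newly committed vertices'' is not a bound on anything without further structure. The paper's resolution is to compute \emph{near-lists} by $n$ synchronized, truncated parallel Dijkstra runs that declare a vertex \emph{heavy} the moment it appears in $\Theta(p)$ lists and exclude it from further exploration; this caps the congestion of every light vertex, so a batch of $\ell t$ commits can only damage $O(\ell t p)$ lists, and each discovery step can then run the $t$-nearest primitive on a small subgraph $H$ spanned by the heavy, damaged, and locally-reachable vertices. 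For dense graphs a second filtering layer (permanently heavy vertices and an ``alive'' subgraph $G_0$ with in-degrees $\le p$ and out-degrees $O(t)$, plus a one-edge ``improvement'' of the near-lists back to $G$) is needed to control in-degrees during the parallel Dijkstra preprocessing. None of this congestion machinery, nor the heavy/light distinction, appears in your proposal, and without it the central invariant you flag as ``the delicate core'' cannot be established.
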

For dense enough graphs with $m=\Omega(n^{1+{4/5}+\epsilon})$, the algorithm behind~\Cref{thm:main} can run in near-optimal $\Ot(m)$ work and truly sublinear in $n$ depth.
This is the first strongly polynomial~result to improve upon parallel Dijkstra's depth while remaining nearly work-efficient for some graph~density. 
For sparse graphs, using a significantly simpler algorithm,
we achieve a slightly better trade-off.
\begin{restatable}{theorem}{sparsetradeoff}\label{thm:sparse-tradeoff}
  Let $G=(V,E)$ be a weighted digraph with weights in $\Rn$.
  Single-source shortest paths in $G$ can be computed within
  $\Ot(m^{5/3}t^2+m^{3/2}t^{7/2})$ work and $\Ot(m/t)$ depth for $t\in [1,m^{1/2}]$.
\end{restatable}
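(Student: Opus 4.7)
The plan is to combine a hop-bounded pivot (hitting-set) reduction with the Spencer-style strongly polynomial batched Dijkstra described in \Cref{sec:basic}, parameterized in $m$ rather than $n$ to exploit sparsity. Let $h$ denote a hop bound and $b$ a batch size, both to be fixed at the end.

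First, I construct a (deterministic) hitting set $R \subseteq V$ of size $\widetilde\Theta(n/h)$ such that every shortest $s$-to-$v$ path using more than $h$ edges contains a vertex of $R$; such an $R$ can be obtained via standard derandomization of uniform sampling, or by reusing the pivot machinery that underlies \Cref{thm:main}. This yields the two-level decomposition $\dist(s,v) = \min_{r \in R \cup \{s\}} \bigl(\dist(s,r) + \dist_h(r,v)\bigr)$, where $\dist_h$ denotes the $h$-hop bounded distance in $G$.

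Second, I compute $\dist(s,r)$ for every $r \in R$ by an incremental batched Dijkstra on the virtual pivot graph $G^\star$ on $R \cup \{s\}$ whose edge $(u,v)$ has weight $\dist_h(u,v)$. Each round sorts tentative distances over unsettled pivots, selects the $b$ smallest, and runs an internal Spencer-style relaxation to certify them as correct. Whenever a batch becomes settled, I launch $h$-hop parallel Bellman-Fords from the newly settled pivots in $G$, costing $\Ot(bmh)$ work and $\Ot(h)$ depth per batch. Summed over the $|R|/b$ batches, this contributes $\Ot(|R|\cdot mh)$ total Bellman-Ford work with depth $\Ot((|R|/b)\cdot h)$, while the in-batch sorting/relaxation work totals $\Ot(|R|^2 b)$ with only polylogarithmic per-batch depth overhead. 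Once all pivot distances are settled, a single parallel $h$-hop Bellman-Ford from $R \cup \{s\}$ extends distances to all vertices in $\Ot(mh)$ work and $\Ot(h)$ depth.

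The target depth $\Ot(m/t)$ forces $(|R|/b)\cdot h = \widetilde O(m/t)$, which combined with $|R| = \widetilde\Theta(n/h)$ determines $h$ and $b$ up to one degree of freedom. Optimizing the total work $\Ot(|R|\cdot mh + |R|^2 b)$ subject to the depth constraint and the range $t \in [1, m^{1/2}]$ should yield the two-term bound of the statement, with the pivot-to-all Bellman-Ford contributing the $\Ot(m^{5/3}t^2)$ term and the in-batch Spencer work on the pivot graph contributing $\Ot(m^{3/2}t^{7/2})$.

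The hard part will be the in-batch verification: because pivot-graph edges are derived $h$-hop shortest distances rather than given weights, certifying a batch of tentative pivot distances within the claimed work requires a strongly polynomial internal relaxation that meshes cleanly with the outer hop-bounded computations. Getting both summands in the work bound exactly right needs careful bookkeeping between the Bellman-Ford and Spencer components; a secondary subtlety is rephrasing the pivot count and hop bound purely in terms of $m$ (which is natural since $n \le m+1$ in any connected instance, but must be handled carefully near the sparse end $m = \Theta(n)$).
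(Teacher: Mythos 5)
Your approach does not match the paper's and, more importantly, it has a gap in the work accounting that no choice of $h$ and $b$ can repair. The total cost of the lazy $h$-hop Bellman--Fords that you launch from settled pivots is, by your own count, $\Ot(|R|\cdot mh)$; with $|R|=\widetilde{\Theta}(n/h)$ this equals $\Ot(nm)$ \emph{independently of} $h$ and $b$. For sparse graphs, where $n=\Theta(m)$ (and these are exactly the graphs the theorem targets), this is $\Theta(m^2)$, which exceeds the claimed $\Ot(m^{5/3}t^2+m^{3/2}t^{7/2})$ for all $t\le m^{1/6}$. This is not a bookkeeping subtlety: the fact that $n/h$ separate $h$-hop explorations cost $\Theta(nm)$ work in a directed graph is precisely why hitting-set/hop-reduction schemes have not yielded subquadratic-work strongly polynomial directed SSSP, and it is why the paper abandons this template entirely. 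Two further steps you defer are also not routine. First, a deterministic, strongly polynomial hitting set for $h$-hop shortest-path prefixes cannot be obtained by ``standard derandomization'' without first computing path information of comparable cost, and \Cref{thm:main} contains no pivot machinery to reuse --- its preprocessing produces near-lists, not hitting sets. Second, the in-batch certification you flag as ``the hard part'' is exactly the crux: settling the $b$ smallest tentative pivot distances at once requires something like the $t$-nearest-vertices computation of \Cref{lem:nearest-neighbors} on the pivot graph, but the pivot graph's edges are themselves $h$-hop distances whose computation is the $\Ot(nm)$ bottleneck above, and Spencer's alternative data structure is inherently weakly polynomial (as noted in the introduction).

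For contrast, the paper's proof stays within the contraction-based framework of \Cref{sec:basic}: it reduces to constant degrees by vertex splitting (whence the bounds in $m$), runs phases of $\ell$ discovery steps, and in each phase precomputes \emph{near-lists} via $n$ synchronized truncated Dijkstra runs together with a set of \emph{heavy} vertices of bounded congestion. Each discovery step then applies \Cref{lem:nearest-neighbors} only to a small subgraph $H=G_t[Z^*\cup B\cup Y]$ that provably contains the next $t$ closest vertices and their shortest paths. The two terms $m^{5/3}t^2$ and $m^{3/2}t^{7/2}$ arise from balancing the per-phase preprocessing against the per-step work on $H$ over the two regimes of $\ell$, not from a Bellman--Ford/pivot split.
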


\begin{table}[ht!]
  \centering
  \begin{tabular}{|c|c|c|c|}
  \hline
  \textbf{Algorithm} & \textbf{Work} & \textbf{Depth} & \textbf{Restrictions} \\ \hline
    \multicolumn{4}{|c|}{\emph{strongly polynomial}}\\ \hline
    parallel Dijkstra \cite{PaigeK85} & $\Ot(m)$ & $\Ot(n)$ & --- \\ \hline
    \makecell{repeated squaring wrt.\\ min-plus product~\cite{Williams18}} & $n^{3-o(1)}$ & $\Ot(1)$ & --- \\ \hline
    \makecell{$t$-nearest vertices-based\\\Cref{thm:simple-tradeoff} (simplified~\cite{Spencer97})} & $\Ot(n^2t)$ & $\Ot(n/t)$ & --- \\ \hline
    Bringmann et al.~\cite{BringmannHK17} & $\Ot(mn+t^3)$ & $\Ot(n/t)$ & --- \\ \hline
    \multirow{2}{*}{Shi and Spencer \cite{ShiS99}} & $\Ot(m + nt^2)$ & $\Ot(n/t)$ & \multirow{2}{*}{undirected} \\ \cline{2-3}
   & $\Ot(mn/t + t^3)$ & $\Ot(n/t)$ & \\ \hline
    \multirow{2}{*}{\textbf{Ours}} & $\Ot(m^{5/3}t^2+m^{3/2}t^{7/2})$ & $\Ot(m/t)$ & $t \le m^{1/2}$ \\ \cline{2-4}
     & $\Ot(m + n^{9/5}t^{17/5})$ & $\Ot(n/t)$ & $t \le n^{1/17}$ \\ \hline
    \multicolumn{4}{|c|}{\emph{weakly polynomial}}\\ \hline
    Spencer \cite{Spencer97} & $\Ot((m + nt^2)\log{L})$ & $\Ot((n/t)\log{L})$ & --- \\ \hline
    Klein and Subramanian~\cite{KleinS93} & $\Ot(m^2\log{L})$ & $\Ot(\log{L})$ & integer \\ \hline
    Cao and Fineman \cite{CaoF23} & $\Ot(m\log{L})$ & $n^{1/2 + o(1)}\log{L}$ & integer \\ \hline
   \makecell{~\cite{RozhonHMGZ23}\\+ approximate SSSP \cite{CaoFR20a}} & $\Ot((mt^2 + nt^4)\log{L})$ & $(n^{1/2 + o(1)}/t)\log{L}$ & \makecell{integer\\$t\leq n^{1/2}$} \\ \hline
    \makecell{~\cite{RozhonHMGZ23}\\+ approximate APSP \cite{Zwick02}} & $\Ot(n^\omega \log{L})$ & $\Ot(\log{L})$ & integer \\ \hline
    \cite{BrandGJV25} & $\Ot(m\log{L}+n^{1.5}\log^2{L})$ & $\Ot(n^{1/2}\log{L})$ & integer \\ \hline
\end{tabular}
  \caption{State-of-the-art weakly- and strongly polynomial exact SSSP bounds for general graphs (with $m\geq n$) and the additional assumptions they require.
  The table only contains results that~have constituted strict improvements wrt $\Ot(\cdot)$ in some regime for some graph density
  and have stood since.
  The tradeoff parameter $t$, unless restricted further, is any integer in $[1,n]$ to be chosen freely.
  $L$ denotes the maximum edge weight; for~\cite{Spencer97} we assume the minimum positive weight equals~$1$.
  }\label{tab:bounds}
\end{table}

\subsubsection{Applications}\label{sec:intro-applications}

\paragraph{Strongly polynomial min-cost flow.} \Cref{thm:main} has immediate applications in strongly polynomial min-cost flow computation.
The state-of-the-art (sequential) algorithm~\cite{Orlin93} (see also a variant in~\cite[Section~10.7]{network-flows}) in this regime runs in $\Ot(m^2)$ time.
Viewed as a parallel algorithm, it runs within $\Ot(m\cdot W_{\mathrm{SSSP}}(n,m))$ work and $\Ot(m\cdot D_{\mathrm{SSSP}}(n,m))$ depth (see~\cite[p. 339]{Orlin93}), given a strongly polynomial (non-negative) SSSP algorithm running within $W_{\mathrm{SSSP}}(n,m)$ work and $D_{\mathrm{SSSP}}(n,m)$ depth.
By using~\Cref{thm:main} for SSSP computation therein, we obtain an~algorithm with $\Ot(m^2 + mn^{9/5}t^{17/5})$ work and $\Ot(mn/t)$ depth for $t \in [1, n^{1/17}]$.
For sufficiently dense graphs, this yields an~improved depth bound without increasing work (ignoring polylogarithmic factors).

An analogous improvement in depth is obtained for the assignment problem, i.e., minimum-cost (perfect) matching.
The best known strongly polynomial algorithm for the assignment problem is the \emph{successive shortest paths algorithm}
(see~\cite{EdmondsK72} or~\cite[Section~12.4]{network-flows}) that runs in $\Ot(nm)$ time, which amounts to $O(n)$ SSSP computations.
Apart from running SSSP $O(n)$ times, computing the optimal assignment requires additional $O(n^2)$ work and $\Ot(n)$ depth.
Hence again, \Cref{thm:main} implies a~strongly polynomial parallel algorithm for the assignment problem with $\Ot(mn + n^{14/5}t^{17/5})$  work and $\Ot(n^2/t)$ depth -- improving depth without increasing work for sufficiently dense graphs.

\paragraph{Lexicographical optimization, exponential weights.} We argue that our strongly polynomial parallel SSSP algorithm might
prove superior to the best weakly polynomial algorithms
even in a realistic word RAM model where only arithmetic operations are performed in $O(1)$ time.

Suppose $G=(V,E)$ is a digraph with edges assigned unique labels from $\{1,\ldots,m\}$.
Let us define the \emph{lex-bottleneck weight} of a~path in $G$ to be the sequence of weights of edges on the path, sorted non-increasingly.
The \emph{lex-bottleneck shortest path} between two vertices of the graph is the path with the lexicographically minimum lex-bottleneck weight.
That is, it is the path minimizing the maximum weight of an~edge, breaking ties by the second-maximum weight, followed by the third-maximum weight etc.
Consider computing single-source lex-bottleneck paths in parallel.

Observe that the lex-bottleneck paths problem can be easily cast as SSSP problem if the edge labeled $j$ is assigned an integer weight $2^j$.
This way, we can reach out for fast weakly-polynomial parallel SSSP algorithms (such as~\cite{CaoF23, RozhonHMGZ23}) for integer weights to solve the problem.
However, even ignoring arithmetic cost, their work is $\Omega(m\log{L})$ and their depth is $\Omega(\log{L})$, where $L$ is the largest weight.
$L=\Theta(2^m)$ in our case, so we get $\Omega(m^2)$ work- and $\Omega(m)$ depth bounds.

On the other hand, our SSSP algorithm works in the comparison-addition model, so it could run within the same or slightly worse bounds on the (parallel) word RAM if only we could support all the performed operations on edge/path weights in low work and depth.
We indeed show that this is possible when using weights of the form $2^x$ (where $x\in [0,m]$ is an integer), even when these weights are not necessarily distinct.
In such a case, we still achieve $\Ot(m+n^{2-\epsilon})$ work and $\Ot(n^{1-\epsilon})$ depth on the parallel word RAM.
These developments are discussed in~\Cref{sec:large-weights}.

\paragraph{Dynamic minimum mean cycle and min-ratio cycle.} Finally, \Cref{thm:main} might be useful combined with parametric search~\cite{Megiddo83} to yield new strongly polynomial bounds in the \emph{sequential} setting as well.
As an example, we show the first non-trivial strongly polynomial \emph{dynamic} algorithm maintaining the minimum mean cycle.
The minimum mean cycle problem is a classical graph problem with applications in, e.g., min-cost-flow computation~\cite{Tardos85} and Markov Decision Processes (e.g.,~\cite{Madani02}).
It can be solved in strongly polynomial $O(nm)$ time~\cite{Karp78}.

We prove that, using~\Cref{thm:main}, we can devise a procedure updating the minimum mean cycle in $\Ot(mn^{1-1/22})$ worst-case time after an edge insertion,
i.e., polynomially faster than recompute-from-scratch. To the best of our knowledge, before our work, no non-trivial dynamic minimum mean cycle algorithms, or even applications of parametric search in the dynamic setting, have been described so far.
In fact, our dynamic algorithm applies even to a more general minimum cost-to-time ratio (min-ratio) cycle problem~\cite{DantzigBR67, Lawler1972},
whose static state-of-the-art strongly polynomial bound is currently $\Ot(m^{3/4}n^{3/2})$~\cite{BringmannHK17}.
See~\Cref{sec:mean-cycle}.
We believe that our parallel SSSP algorithm may find more applications in combination with parametric search in the future.

\subsection{Technical summary}
Our starting point is a variant of the general strategy used by~\cite{Spencer97}: to simulate $t$ consecutive steps
of Dijkstra's algorithm using a variant of low-depth repeated squaring.
Repeated squaring can identify the $t$ closest vertices $\near{t}{u}$ from each $u\in V$ in $G$ in $\Ot(m+nt^2)$ work and $\Ot(1)$ depth~\cite{ShiS99, Cohen97}.
Note that $\near{t}{s}$ contains precisely the $t$ first vertices that Dijkstra's algorithm with source $s$ would discover.
To allow reusing this idea for identifying subsequent groups of $t$ remaining vertices nearest from the source $s$,
it is enough to ``contract'' $\near{t}{s}$ into $s$ and adjust the graph's weights properly.
Unfortunately, the sets $\near{t}{u}$ computed in one iteration might not be very useful in later iterations if they overlap with $\near{t}{s}$ significantly.
Thus, the basic strategy~(\Cref{sec:basic}) simply repeats $t$ closest vertices computation
$\lceil n/t\rceil$ times; this requires at least $\Theta(n^2t)$ work.

Roughly speaking, the main technical development of~\cite{Spencer97} is a certain decremental data structure for maintaining\footnote{Actually, the \emph{nearby list} data structure of~\cite{Spencer97} maintains much more relaxed objects and only supports deletions with a certain structure imposed, but this is strong enough for the studied applications.} the sets $\near{t}{u}$ subject to vertex deletions from $G$.
This way, the sets of closest $t$ vertices do not need to be recomputed from scratch when vertices are deemed discovered.
Sadly, the amortized analysis in~\cite{Spencer97} is inherently weakly polynomial; it charges costly $t$-nearest vertices recomputations for each $u$
to a factor-$2$ increase in the radius of a ball containing $\near{t}{u}$.

We use a different approach. 
Our main idea is to apply a more depth-costly preprocessing computing static nearest-neighbors-like data (the \emph{near-lists}) that:
\begin{enumerate}[(1)]
  \item remains useful through $\poly(n)$ steps of discovering $t$ closest remaining vertices, and
  \item allows locating $\near{t}{s}$ by running the low-depth algorithm of~\cite{ShiS99, Cohen97} on a smaller graph~$H$ that preserves distances from $s$ to its nearby vertices.
\end{enumerate}
The overall strategy in the preprocessing is to identify a sublinear set of \emph{heavy} vertices such that the remaining \emph{light} vertices do not appear in too many precomputed lists of close vertices.
In a way\footnote{Whenever we discover new vertices in our SSSP algorithm, the subsequent discovery steps should only use shortest paths avoiding these vertices. So, in a way, the graph we need to repeatedly explore is decremental, i.e., it evolves by vertex deletions. In the fully dynamic APSP algorithms, such as~\cite{AbrahamCK17, GutenbergW20b, Mao24a, Thorup05},  handling the vertex-decremental case is the main challenge as well. In~\cite{GutenbergW20b} (and other dynamic shortest paths papers, e.g.,~\cite{BrandK23}), the heavy-light vertex distinction is applied to guarantee that a single vertex deletion can only break a bounded number of preprocessed shortest paths.}, this is reminiscent of an idea used in the fully dynamic APSP algorithm of~\cite{GutenbergW20b}.
Implementing this strategy in low depth turns out highly non-trivial, though, and we cannot rely on the sequential process in~\cite{GutenbergW20b}.
We end up computing near-lists using $n$ parallel Dijkstra~runs on the graph~$G$ undergoing vertex deletions, synchronized after each vertex visit.
For good congestion guarantees, such a preprocessing crucially requires the in-degrees of~$G$ to be sufficiently low, e.g., constant.

The resulting algorithm (\Cref{sec:sparse}) can already achieve subquadratic work and sublinear depth for sparse enough graphs.
However, it does not yield any interesting bounds if the average degree of $G$ is high enough,
let alone if $G$ is dense.
To deal with that, we devise an additional global data structure that filters the vertices' neighborhoods and keeps both
in-degrees sufficiently small and out-degrees $O(t)$ during the described preprocessing.
Again, to this end, the data structure introduces another subset of $o(n)$ \emph{permanently heavy} vertices
that require special treatment and needs, as we prove, only $\Ot(n/t)$ low-depth updates while the original SSSP computation proceeds.

Preprocessing the obtained filtered graph $G_0\subseteq G$ with $O(nt)$ edges yields weaker properties~of the produced near-lists, though (compared to preprocessing the entire graph $G$).
Luckily, as we prove, the issue can be addressed with sufficiently low polynomial overhead: An extra low-depth improvement step can restore the guarantees we could get if we had preprocessed $G$ in the first~place.
\subsection{Further related work}
Parallel SSSP has also been studied for negative weights in the weakly polynomial regime~\cite{AshvinkumarBCGH24, CaoFR22, FischerHLRS25} and the best known bounds~\cite{AshvinkumarBCGH24, FischerHLRS25} match those for the non-negative case up to polylogarithmic factors.
Approximate SSSP has~also gained much attention in the parallel setting~\cite{AndoniSZ20, CaoFR20, Cohen00, ElkinN19, ElkinM21, Li20, RozhonGHZL22} and a near-optimal deterministic algorithm has been shown in the undirected case~\cite{RozhonGHZL22}.
\subsection{Organization}
The rest of this paper is organized as follows.
We start by setting up notation and simplifying assumptions in~\Cref{sec:preliminaries}.
In~\Cref{sec:basic}, we describe the $\Ot(n^2t)$-work $\Ot(n/t)$-depth tradeoff for directed SSSP that constitutes the base of our main results.
In~\Cref{sec:sparse}, we develop our novel subquadratic-work sublinear-depth trade-off in the sparse case (see~\Cref{thm:sparse-tradeoff}). 
The framework presented there is then extended in~\Cref{sec:dense} to arbitrary densities, thus proving~\Cref{thm:main}.
In Sections~\ref{sec:large-weights}~and~\ref{sec:mean-cycle} we discuss the applications mentioned in~\Cref{sec:intro-applications}.

\section{Preliminaries}\label{sec:preliminaries}
In this paper, we deal with non-negatively real-weighted digraphs $G=(V,E)$.
We use $n$ and $m$~to denote the numbers of vertices and edges of $G$, respectively.
W.l.o.g.\ we assume $m\geq n$.
We write $e=uv$~to denote a directed edge from its tail $u$ to its head $v$. We denote by $\wei(e)$ the weight of edge~$e$.
By $N^{+}_G(v)$ and $N^{-}_G(v)$ we denote the out- and in-neighborhoods, respectively, of a vertex in $G$.
We also use the standard notation $\outdeg_G(v)=|N^{+}_G(v)|$ and $\indeg_G(v)=|N^{-}_G(v)|$.

We define a path $P=s\to t$ in $G$ to be a sequence of edges $e_1\ldots e_k$, where $u_iv_i=e_i\in E$, such that $v_i=u_{i+1}$, $u_1=s$, $v_k=t$.
If $s=t$, then $k=0$ is allowed.
Paths need not be simple, i.e., 
they may have vertices or edges repeated.
We often view paths as subgraphs and write~$P\subseteq G$.

The weight (or length) $\wei(P)$ of a path $P$ equals the sum $\sum_{i=1}^k\wei(e_i)$.
We denote by $\dist_G(s,t)$ the weight of the shortest $s\to t$ path in $G$. 
If the graph in question is clear from context, we sometimes omit the subscript and write $\dist(s,t)$ instead.
\subsection{Simplifying assumptions}\label{sec:simplifying}
We make the following additional simplifying assumptions, somewhat similar to those in~\cite{DuanMMSY25}.
\begin{assumption}\label{ass:prefix}
For any path $P$ in $G$ and its proper subpath $P'\subseteq P$, $\wei(P')<\wei(P)$.
\end{assumption}
\begin{assumption}\label{ass:path-weights}
For any fixed $u\in V$, no two paths $P_1=u\to x$ and $P_2=u\to y$ in $G$, where $x\neq y$, have the same weight.
\end{assumption}
In~\Cref{sec:assumptions}, we argue that these assumptions can be met in the comparison-addition model.
Although ensuring them might require using $O(1)$-size vector-weights instead of scalar~weights, for brevity, we often neglect that and pretend they still come from $\Rn$.
E.g.,~we may write $0$ to denote the weight of an empty path and assume the edge weights are simply strictly positive real numbers.

In the following, we also focus only on computing the distances from a single source in $G$, and not the actual paths or an SSSP tree.
This is due to the fact that equipped with single-source distances and~\Cref{ass:prefix} (i.e., assuming positive weights), an SSSP tree can be found by picking for each $v$ (in parallel) an arbitrary \emph{tight} edge $e=uv$ satisfying $\dist(s,u)+\wei(uv)=\dist(s,v)$.

\section{The basic trade-off for directed SSSP}\label{sec:basic}
In this section, we present the basic trade-off for non-negative SSSP in directed graphs.
At a high level, this is a simplification of Spencer's SSSP algorithm~\cite{Spencer97} that one would arrive at after depriving it of the efficient but complex data structures exploiting the bounded-weights assumption.

Let $G=(V,E)$ be a weighted digraph with weights in $\Rn$ and a source vertex $s\in V$.
When computing distances from the source $s$, w.l.o.g.\ we can assume that $s$ has no incoming edges.
Moreover, we can w.l.o.g.\ assume that $G$ has no parallel edges.

Let $t\in[1,n]$ be an integer.
For simplicity, assume w.l.o.g.\ that $n=t\alpha+1$ for some integer~$\alpha\geq 1$.

The general strategy behind all SSSP algorithms presented in this paper is as follows. Similarly to Dijkstra's algorithm, we maintain a set $\known\subseteq V$ of vertices whose distance from~$s$ has already been established.
Initially, we set $\known=\{s\}$.
While $\known\neq V$, we repeatedly perform a \emph{discovery step} identifying the next $t$ vertices from $V\setminus\known$ that are closest to $s$ in small depth.
If we perform each such step in $\Ot(1)$ depth, the overall depth will be $\Ot(\alpha)=\Ot(n/t)$.

The remainder of this section is devoted to showing a basic implementation of the above strategy. 
Let $\near{t}{u}\subseteq V$ be the set of $t$ nearest vertices from $u$ wrt.\ $\dist(u,\cdot)$ and distinct from $u$.
If fewer than~$t$ vertices are reachable from $u$ in $G$, then we define $\near{t}{u}$ to contain all these vertices.
Note that $\near{t}{u}$ is defined unambiguously by~\Cref{ass:path-weights}.

To proceed, we need the following folklore generalization of the repeated squaring-based APSP algorithm.
Its variants have been previously used in the context of SSSP in both directed~\cite{Spencer97} and undirected graphs~\cite{Cohen97, ShiS99}.
The lemma is essentially proved~in~\cite[Lemma~1]{ShiS99}.
For completeness, we include its proof sketch in~\Cref{sec:nearest-neighbors}.
\begin{restatable}{lemma}{nearestneighbors}\label{lem:nearest-neighbors}
  Let $t\in [1,n]$.
  Using $\Ot(nt^2+m)$ work and $\Ot(1)$ depth, one can compute for each $u\in V$
  the set $\near{t}{u}$ along with the corresponding distances from $u$.
\end{restatable}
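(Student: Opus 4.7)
The plan is to perform $\Ot(1)$ rounds of a truncated min-plus squaring, maintaining for each $u \in V$ a list $L_u^{(k)}$ of at most $t+1$ entries $(v, d)$ with the invariant that $L_u^{(k)}$ contains $(u, 0)$ together with every $(v, \dist(u, v))$ for $v \in \near{t}{u}$ whose shortest $u \to v$ path uses at most $2^k$ edges. I would initialize $L_u^{(0)}$ with $(u, 0)$ and the $t$ out-edges of $u$ of smallest weight; sorting adjacency lists in parallel costs $\Ot(m)$ work and $\Ot(1)$ depth. In each subsequent round, to build $L_u^{(k+1)}$, I enumerate all candidate pairs $(w, d_1) \in L_u^{(k)}$ and $(v, d_2) \in L_w^{(k)}$, emit $(v, d_1 + d_2)$, keep the minimum emitted distance per target vertex, and retain the $t+1$ entries of smallest distance (always preserving $(u, 0)$). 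Since every shortest path uses at most $n-1$ edges, after $\lceil \log_2 n \rceil$ rounds the lists stabilize on $\near{t}{u} \cup \{(u, 0)\}$ with correct distances.

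Correctness of the invariant is by induction on $k$. The base case handles the fact that if $v \in \near{t}{u}$ is reached via a single edge $uv$, then $uv$ must be among the $t$ lightest out-edges of $u$, because otherwise $t$ of $u$'s out-neighbors would be strictly closer than $v$, contradicting $v \in \near{t}{u}$. For the inductive step, fix such a $v$ with a shortest path $P$ of at most $2^{k+1}$ edges and split $P$ at a vertex $w$ so that both halves have at most $2^k$ edges. If $w \in \{u, v\}$ one half is empty and the other witnesses $v \in L_u^{(k)}$, which carries over to $L_u^{(k+1)}$ through the $w = u$ summand of the squaring. Otherwise \cref{ass:prefix} forces $\dist(u, w) < \dist(u, v)$ and $\dist(w, v) < \dist(u, v)$ strictly. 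The first inequality places $w \in \near{t}{u}$. For the second, any $v'$ with $\dist(w, v') < \dist(w, v)$ satisfies $\dist(u, v') \le \dist(u, w) + \dist(w, v') < \dist(u, v)$, so the set of vertices strictly closer to $u$ than $v$ (which has size at most $t$, namely at most $t - 1$ vertices of $V \setminus \{u\}$ plus $u$ itself) contains every vertex strictly closer to $w$ than $v$; after removing $w$ this leaves at most $t - 1$ candidates, so $v \in \near{t}{w}$. The inductive hypothesis then supplies $(w, \dist(u, w)) \in L_u^{(k)}$ and $(v, \dist(w, v)) \in L_w^{(k)}$, so $(v, \dist(u, v))$ is emitted in round $k+1$; every smaller emitted distance corresponds to a vertex of $\near{t}{u} \cup \{u\}$ and \cref{ass:path-weights} eliminates ties, so this entry survives the top-$(t+1)$ truncation.

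For the resource bounds, each vertex emits at most $(t+1)^2 = O(t^2)$ candidates per round; per-vertex aggregation and top-$(t+1)$ selection via parallel sorting costs $\Ot(t^2)$ work and $\Ot(1)$ depth, giving $\Ot(nt^2)$ work and $\Ot(1)$ depth per round. Summed over the $\Ot(1)$ rounds and combined with the $\Ot(m)$ initialization, this yields the claimed $\Ot(m + nt^2)$ work and $\Ot(1)$ depth. The main conceptual obstacle is justifying that truncating each intermediate list to $O(t)$ entries at every round is safe; the splitting argument above, leveraging \cref{ass:prefix} to guarantee strict distance drops at interior vertices and \cref{ass:path-weights} to rule out tie-breaking pathologies, is precisely what licenses the truncation and makes the induction go through.
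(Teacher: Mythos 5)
Your proposal is correct and follows essentially the same route as the paper's proof: $O(\log n)$ rounds of truncated min-plus squaring on lists of size $t+1$ seeded with the $t$ lightest out-edges, with correctness resting on the same path-splitting argument (the midpoint $w$ lies in the truncated list of $u$ and the target lies in the truncated list of $w$, using \cref{ass:prefix} and \cref{ass:path-weights}). The only difference is presentational -- the paper's invariant exactly characterizes the lists as the $(t+1)$ nearest vertices under the hop-bounded distance $d^i$, whereas yours is a one-sided containment of certified-correct entries plus a survival-under-truncation argument -- but the algorithm and resource analysis coincide.
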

Note that \Cref{lem:nearest-neighbors} allows performing the first discovery step of the general strategy, i.e., identifying the~$t$ closest vertices from $s$ in $G$ within $\Ot(nt^2+m)$ work and $\Ot(1)$ depth.
Afterwards, we have ${\known=\{s\}\cup\near{t}{s}}$.
Since in the next step we need to identify the $t$ \emph{next} closest vertices from~$s$, i.e., skipping those we have already found, the computed sets $\near{t}{u}$ might be of no use anymore. For example, it's easy to construct an example where all the sets $\near{t}{u}$ are equal.

Nevertheless, we would still like to reapply~\Cref{lem:nearest-neighbors} for finding the $t$ nearest vertices in $V\setminus\known$ from $s$ in a black-box way by setting up a graph $G'$ on $V'\subseteq V\setminus\known\cup \{s\}$ where the sought vertices correspond to the $t$ nearest vertices from $s$ in $G'$.
One way to achieve that is to contract $\{s\}\cup \near{t}{s}$ into the new source vertex $s$, and shift the corresponding weights of edges previously originating in $\near{t}{s}$ accordingly.
The following definition formalizes the contraction that we use.
\begin{Definition}\label{def:contract}
  Let $X\subseteq V\setminus \{s\}$. Let ${G_0'=(V\setminus X,E')}$ where $E'$ contains:
  \begin{itemize}
    \item for each edge $uv\in E$ with $u,v\in V\setminus X$, the edge $uv$ with the same weight $\wei(uv)$ as in $G$,
    \item for each edge $xv\in E$ with $x\in X$ and $v\in V\setminus X$, the edge $sv$ with weight $\dist_G(s,x)+\wei(xv)$.
  \end{itemize}
  We say that $G'$ is obtained by \emph{contracting $X$ into $s$}, if $G'$ is obtained from $G_0'$ by discarding parallel edges except those with the smallest weights.
\end{Definition}
The following lemma shows the key properties of the contracted graph.
\begin{lemma}\label{lem:contract}
  Let $G'$ be obtained from $G$ by contracting $\near{t}{s}$ into $s$ as in~\Cref{def:contract}.
  Then:
  \begin{itemize}
    \item for any $v\in V\setminus \near{t}{s}$ we have $\dist_G(s,v)=\dist_{G'}(s,v)$.
    \item for all $u,v\in V\setminus \near{t}{s}$ we have $\dist_G(u,v)\leq \dist_{G'}(u,v)$.
  \end{itemize}
\end{lemma}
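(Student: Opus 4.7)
The proof splits into two parts, and I would handle them using a single structural fact about near-lists: for any $x \in \near{t}{s}$, every shortest $s \to x$ path in $G$ visits only vertices in $\{s\} \cup \near{t}{s}$. Indeed, by Assumption~\ref{ass:prefix}, any proper prefix of such a path to an intermediate vertex $y$ has strictly smaller weight, so $\dist_G(s,y) < \dist_G(s,x)$; since $x$ is among the $t$ nearest vertices to $s$, the closer $y$ must be as well.

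For the first claim, I plan to establish both inequalities. For the upper bound $\dist_{G'}(s,v) \leq \dist_G(s,v)$, take a shortest $s \to v$ path $P = v_0 v_1 \cdots v_k$ in $G$ and let $i$ be the largest index with $v_i \in \{s\} \cup \near{t}{s}$. The prefix up to $v_i$ has weight $\dist_G(s,v_i)$ by the standard prefix-of-shortest-path argument, and Definition~\ref{def:contract} guarantees an edge from $s$ to $v_{i+1}$ in $G'$ of weight at most $\dist_G(s,v_i) + \wei(v_i v_{i+1})$ (this bound holds even after discarding parallel edges in favor of the smallest-weight one). The suffix $v_{i+1} \cdots v_k$ lies entirely in $V \setminus \near{t}{s}$ by maximality of $i$, so it survives in $G'$ with the same weights; concatenation yields an $s \to v$ walk in $G'$ of weight at most $\wei(P)$. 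For the reverse inequality $\dist_G(s,v) \leq \dist_{G'}(s,v)$, I would unfold any $s \to v$ walk $Q = s\, u_1 \cdots u_\ell$ in $G'$: by the construction in Definition~\ref{def:contract}, its first edge has weight at least $\dist_G(s,u_1)$, because every candidate parallel edge entering the minimum encodes the length of some $s \to u_1$ walk in $G$ (either the original edge $s u_1$, or a walk $s \to x \to u_1$ via some $x \in \near{t}{s}$). The remaining edges of $Q$ are original edges of $G$ between vertices in $V \setminus \near{t}{s}$, so replacing the first edge by an actual shortest $s \to u_1$ path in $G$ produces an $s \to v$ walk in $G$ of weight at most $\wei(Q)$.

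For the second claim, the key observation is that since $s$ has no incoming edges in $G$ (by the w.l.o.g.\ assumption at the top of Section~\ref{sec:basic}), Definition~\ref{def:contract} does not create any either, so from any $u \in V \setminus (\near{t}{s} \cup \{s\})$ the vertex $s$ is unreachable in $G'$. Hence any $u \to v$ walk in $G'$ uses only original edges between vertices of $V \setminus \near{t}{s}$, and this same walk exists in $G$ with the same total weight. The remaining case $u = s$ is exactly the lower bound already established in the first part.

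I do not anticipate any substantial obstacle; the only subtlety is making sure the ``keep only the smallest-weight parallel edge'' rule in Definition~\ref{def:contract} points in the favorable direction for each inequality, which it does --- it only strengthens the upper bound on $\dist_{G'}(s,v)$, and for the lower bound each candidate edge still encodes a legitimate walk in $G$, so its minimum is no smaller than $\dist_G(s,\cdot)$ either.
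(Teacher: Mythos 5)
Your proposal is correct and follows essentially the same route as the paper's proof: map shortest paths from $G$ into $G'$ by collapsing the prefix ending at the last vertex of $\{s\}\cup \near{t}{s}$, map $G'$-paths back to $G$-walks by unfolding the initial contracted edge, and use the absence of incoming edges at $s$ for the second item. Your write-up is simply a more detailed version of the same argument (and in fact handles the $u=s$ case of the second item more carefully than the paper does, by deferring it to the first item).
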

\begin{proof}
  First of all, it is easy to verify that $G'$ satisfies Assumptions~\ref{ass:prefix}~and~\ref{ass:path-weights}
  if these are ensured in~$G$ as described in~\Cref{sec:simplifying}.

  Consider the former item.
  For any shortest path $P=s\to v$ in $G$, there is a corresponding path $P'=s\to v$ in $G'$ of the same weight.
  Hence, $\dist_{G'}(s,v)\leq \dist_G(s,v)$.
  On the other hand, by construction, a path $P'=s\to v$ in $G'$ does not pass through $s$ as an intermediate vertex.
  Hence, there is a corresponding path in $G$ of weight no larger than $w(P')$ obtained by mapping the initial edge of $P'$ to a path in $G$.
  As a result, $\dist_{G}(s,v)\leq \dist_{G'}(s,v)$.

  Now consider the latter item. Let $P'$ be the shortest $u\to v$ path in $G'$.
  $P'$ does not go through~$s$, since $s$ has no incoming edges in either $G$ or $G'$.
  Consequently, every edge of $P'$ is an edge of $G$ with the same weight.
  Hence, $P'\subseteq G$ and we obtain $\dist_{G'}(u,v)\geq \dist_G(u,v)$, as desired.
\end{proof}

\noindent By~\Cref{lem:contract}, we can specify the general strategy as follows.
While $V\neq \{s\}$:
\begin{enumerate}[(a)]
  \item compute $\near{t}{s}$,
  \item record the distances to these vertices,
  \item contract $\near{t}{s}$ into $s$.
\end{enumerate}
Above, we assume that in item~(c), the original graph $G$ is replaced by the contracted graph; in particular, $V$ shrinks into $V\setminus \near{t}{s}$.
This way, we obtain an SSSP instance of the same kind. 
Note that the vertex set is shrunk in every step by $t$ vertices, so the number of discovery steps is $O(n/t)$.

Let us now explain how the adjacency lists are maintained under the contractions into~$s$ in this strategy.
Whenever some $\near{t}{s}$ is contracted, for all outgoing edges $xy$ of $\near{t}{s}$ in parallel, we add $sy$ to the adjacency list of $s$ with updated weight unless $y\in \{s\}\cup \near{t}{s}$ or $s$
already had an outgoing edge $sy$.
In the latter case, we might decrease the weight of $sy$ if $\dist_G(sx)+\wei(xy)<\wei(sy)$.
Observe that this way, every edge of the initial $G$ is updated at most once in the process.
The contraction can be performed in near-optimal work and $\Ot(1)$ depth if the adjacency lists are stored as parallel dictionaries, e.g., parallel balanced BSTs with edges keyed by their heads.
Therefore, maintaining the adjacency lists of $G$ under the performed contractions can be implemented in near-optimal work and depth.
As a result, the time bottleneck of the computation lies in computing the nearest vertices.
If we use~\Cref{lem:nearest-neighbors} for that, the depth is near-linear in the number of discovery steps, i.e., $\Ot(n/t)$, whereas the total work is $\Ot((n/t)\cdot (nt^2+m))=\Ot(n^2t+mn/t)$.

A closer inspection into the algorithm behind~\Cref{lem:nearest-neighbors} shows that the $\Ot(m)$ term in the work bound in~\Cref{lem:nearest-neighbors} could be avoided if the input graph was provided with adjacency lists with all but the top-$t$ outgoing edges according to the weight filtered out.
Note that we can perform such filtering and maintain it subject to contractions within $\Ot(m)$ total work and $\Ot(n/t)$ total depth if we assume that the adjacency lists are additionally stored as parallel balanced BSTs sorted by edge weights and supporting near-linear work and $\Ot(1)$-depth batch insertions and deletions (see, e.g.~\cite{BlellochFS16, PaulVW83}).
This observation reduces the required work to $\Ot(n^2t + m) = \Ot(n^2t)$.

We summarize the described basic algorithm with the following theorem.
\begin{theorem}[{based on~\cite{Spencer97}}]\label{thm:simple-tradeoff}
  Let $t\in [1,n]$.
  Let $G=(V,E)$ be a weighted digraph with weights in $\Rn$.
  Single-source distances in $G$ can be computed within $\Ot(n^2t)$ work and $\Ot(n/t)$ depth.
\end{theorem}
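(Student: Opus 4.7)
The plan is to iterate the \emph{discovery step} spelled out in the preceding paragraphs: starting from $\known=\{s\}$, repeatedly (i)~use~\Cref{lem:nearest-neighbors} to compute $\near{t}{s}$ in the current graph together with the corresponding distances, (ii)~record $\dist(s,v)$ for every $v\in\near{t}{s}$ as final, and (iii)~contract $\near{t}{s}$ into $s$ via~\Cref{def:contract}. \Cref{lem:contract} guarantees that each round preserves $\dist(s,v)$ for every remaining vertex $v$, while shrinking $|V|$ by exactly $t$. Hence after $\lceil (n-1)/t\rceil=O(n/t)$ rounds, only $s$ is left and all source distances have been output.

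Counting resources directly, each round invokes~\Cref{lem:nearest-neighbors} in $\Ot(1)$ depth, so the overall depth is $\Ot(n/t)$, and each round costs $\Ot(m+nt^2)$ work, giving a preliminary total of $\Ot(mn/t+n^2t)$. To shave the $mn/t$ term I would maintain a \emph{filtered} version of the current graph in which each vertex keeps only its $t$ lightest out-edges. A closer look at~\Cref{lem:nearest-neighbors} shows that the $\Ot(m)$ additive term can be replaced by $\Ot(|E|)$ of the actual input fed to the subroutine; since the filtered graph has $O(nt)$ edges, its per-round work becomes $\Ot(nt^2)$ and the $\lceil n/t\rceil$-fold total collapses to $\Ot(n^2t)$.

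To make this filtering work alongside contractions, I would store each adjacency list as a parallel balanced BST keyed by edge weight, which supports batch insertions and deletions in near-linear work and $\Ot(1)$ depth~\cite{BlellochFS16,PaulVW83}. When $\near{t}{s}$ is contracted, for every edge $xy$ with $x\in\near{t}{s}$ and $y\notin\{s\}\cup\near{t}{s}$ we insert $sy$ with weight $\dist(s,x)+\wei(xy)$ into $s$'s list, keeping the minimum against any pre-existing parallel $sy$. Every original edge of $G$ is relabeled at most once in the entire process, so edge maintenance costs $\Ot(m)$ work in total and $\Ot(n/t)$ depth cumulatively. The filter is then obtained by splitting each BST at its $t$-th smallest key and feeding the lower part to the next invocation of~\Cref{lem:nearest-neighbors}.

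The step I expect to need the most care is the interaction between top-$t$ filtering and contractions: a contraction can insert a fresh, very light out-edge of $s$ that bumps a previously filtered edge back into the top-$t$, so the BST representation must keep all out-edges (not only the kept prefix) and re-derive the filter each round, rather than permanently discarding edges. Once that bookkeeping is in place, summing the $\Ot(n^2t)$ contribution from the discovery steps, the $\Ot(m)\le\Ot(n^2t)$ contribution from edge maintenance, and the $\Ot(n/t)$ depth bound yields the theorem.
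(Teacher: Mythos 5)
Your proposal is correct and follows essentially the same route as the paper: iterate the discovery step via \Cref{lem:nearest-neighbors}, contract $\near{t}{s}$ into $s$ using \Cref{def:contract} and \Cref{lem:contract}, and eliminate the $\Ot(mn/t)$ term by feeding only the top-$t$ out-edges (maintained in weight-sorted parallel balanced BSTs under contractions) to each invocation. Your added remark about re-deriving the filter each round after contractions insert new light out-edges of $s$ is a valid piece of bookkeeping consistent with the paper's implementation.
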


\section{An improved trade-off for sparse graphs}\label{sec:sparse}
In this section, we show that the general strategy from~\Cref{sec:basic} can be implemented more efficiently for sparse enough graphs.

Recall that the basic implementation from~\Cref{sec:basic} recomputes, upon each discovery step, the nearest $t$ vertices from all $v\in V$ on \emph{entire} $G$ (with previously discovered vertices contracted into~$s$) in low depth using~\Cref{lem:nearest-neighbors}.
Moreover, only the set $\near{t}{s}$ is used in a single discovery step, even though~\Cref{lem:nearest-neighbors} yields the sets $\near{t}{v}$ for all $v\in V$ as a by-product; these are disregarded.

At a very high level, we will address the two downsides.
We first propose an additional preprocessing that computes nearest-neighbors-like data that is useful through multiple discovery steps.
Then, we describe a more sophisticated implementation of a discovery step computing $\near{t}{s}$.
This algorithm runs the low-depth algorithm of~\Cref{lem:nearest-neighbors} on a smaller auxiliary graph $H$ devised from the preprocessed data, that preserves distances from $s$ to its nearby vertices.

In the following, we assume that every vertex \emph{except possibly the source} $s$ has degree $O(1)$~and thus $m=O(n)$.
This assumption can be dropped in a standard way by vertex splitting~(see e.g., \cite[Section~2]{DuanMMSY25}); in general, the reduction increases the number of vertices to $\Theta(m)$,~though. 

We will also often use the notation $G-X$, for some $X\subseteq V$, to denote the graph $G[V\setminus X]$, i.e.,~$G$ with
vertices $X$ removed.

The algorithm operates in \emph{phases} of $\ell\leq n/t$ discovery steps; we will pick $\ell$ later.
Every one of the $\lceil \frac{n-1}{\ell\cdot t}\rceil$ phases starts with a costly preprocessing that will help perform the next $\ell$ steps.

\subsection{Preprocessing}\label{sec:sparse-prep}
The goal of the preprocessing is to compute analogues of $\near{t}{v}$ that are robust against vertex deletions.
Specifically, the preprocessing delivers:
\begin{enumerate}
  \item A subset $Z\subseteq V$ of \emph{heavy} vertices (including $s$) that require special treatment.
    
  \item For each $u\in V$, a \emph{near-list} $\NL(u)$ \emph{containing $u$ and at most $t$ other vertices from $V$}.
    \linebreak Every vertex $v$ stored in $\NL(u)$ is accompanied with a \emph{distance}.
    Formally, $\NL(u)\subseteq V\times \Rn$, but we will sometimes
    write $v\in \NL(u)$ if $v$ is stored with some distance in $\NL(u)$.
    Note that the distances stored in $\NL(u)$ are distinct by~\Cref{ass:path-weights}.
\end{enumerate}
Assume w.l.o.g.\ that $|V|\geq t+1$.
We call the vertices $V\setminus Z$ \emph{light}.

Let us now comment on the intuition behind heavy vertices and the obtained near-lists.
Recall that, in the worst case, in the basic strategy, it could happen that the $t$ discovered vertices cover all the sets $\near{t}{v}$, $v\in V$, making these sets useless in the following discovery step.
To circumvent this problem, the close vertices are identified gradually while the processing proceeds.
Vertices close to many $v\in V$ at once are detected early on and excluded from further consideration,
preventing any particular vertex from appearing in the near-lists too frequently.
Heavy vertices~$Z$ are precisely the said excluded vertices, whose ``congestion'' reached some chosen threshold.
As one would expect, excluding the subset $Z$ from consideration requires a more subtle definition of what a near-list $\NL(u)$ is;  it is impossible to guarantee that $\NL(u)$ still contains $\near{t}{u}$.
Instead, one can ensure that $\NL(u)$ contains, roughly speaking, some vertices that are closer in $G$ than the $t$ closest vertices in $G-Z$, i.e., closest via paths avoiding the heavy vertices $Z$.

We now formalize the above intuition. Let $p\in [2,n]$ be another integer parameter.
We want the heavy vertices and the near-lists to have together the following properties:
\begin{enumerate}[(i)]
  \item\label{prop:small-congest} Each vertex $u\in V$ appears in $O(p)$ near-lists.
  \item\label{prop:heavy} Each heavy vertex $z\in Z$ appears in $\Theta(p)$ near-lists.
  \item\label{prop:path-length} If $(v,d)\in \NL(u)$ (that is, $d$ is the distance accompanying $v$ in $\NL(u)$), then $d$ equals the weight of some $u\to v$ path in $G$.
  \item\label{prop:dominate2} Fix $u\in V\setminus Z$. 
    If $\NL(u)=\{(v_0,d_0),\ldots,(v_k,d_k)\}$, and $0=d_0< d_1< \ldots< d_k$, then:
    \begin{enumerate}[(a)]
      \item \label{prop:sparse-a} If a~vertex $v$ is reachable from $u$ in $G-Z$ and $v\notin \NL(u)$, then $k=t$ and ${d_t< \dist_{G-Z}(u,v)}$.
      \item \label{prop:sparse-b} $d_i\leq \dist_{G-Z}(u,v_i)$ holds for all $i=0,\ldots,k$.
    \end{enumerate}

\end{enumerate}
The remainder of this section is devoted to proving the following.
\begin{lemma}\label{lem:sparse-preprocessing}
  Let $p\in [2,n]$ be an integer and assume all vertices of $G$ except $s$ have degree $O(1)$.
  A subset $Z$ of heavy vertices of size $O(nt/p)$, along with a collection of near-lists $\NL(u)$, $u\in V$,
  satisfying properties~\ref{prop:small-congest}-\ref{prop:dominate2}, can be 
  computed within $\Ot(nt)$ work and $\Ot(t)$ depth.
\end{lemma}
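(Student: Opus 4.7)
The plan is to run $n$ Dijkstra-like processes, one rooted at every $u \in V$, in parallel and synchronized, so that in each round every active process peeks the minimum $(v,d)$ of its own queue $Q_u$, and upon admission extracts it, appends $(v,d)$ to $\NL(u)$, and relaxes the $O(1)$ out-neighbors of $v$. A process terminates as soon as $|\NL(u)| = t+1$ or $Q_u = \emptyset$. The bounded-out-degree assumption is what keeps the per-round work across all $n$ processes near-linear in $n$, which is essential for distributing the $\Ot(nt)$ work budget over $\Ot(t)$ rounds of depth.

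To enforce~\ref{prop:small-congest} and~\ref{prop:heavy} I would couple the extractions with a global admission cap. In each round, after aggregating peeks by their target vertex via parallel sort, for each candidate $v$ with current congestion $c_v$ (the number of near-lists in which $v$ currently sits) and peek count $k_v$, admit only the first $\min(k_v, p - c_v)$ peekers and reject the rest. Whenever $c_v$ reaches $p$, mark $v$ heavy, and using an auxiliary per-vertex inverse index $L_v$ listing the processes whose queue currently contains $v$ (maintained alongside pushes and pops), eagerly delete $v$ from every queue that still holds it. This way no vertex ever accumulates congestion above $p$, and every heavy vertex has $c_v = p$ exactly, giving~\ref{prop:small-congest} and~\ref{prop:heavy} at once. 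The bound $|Z| = O(nt/p)$ then follows from $|Z| \cdot p \leq \sum_{v \in V} c_v = \sum_u |\NL(u)| \leq n(t+1)$.

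Property~\ref{prop:path-length} holds by construction, since $\NL(u)$ only stores actually extracted pairs, whose distances are Dijkstra estimates, hence weights of paths in $G$. For~\ref{prop:dominate2}, fix a light $u$ and a vertex $v$ reachable in $G - Z$ with $v \notin \NL(u)$. Because $v \notin Z$, admission of $v$ would never have been rejected, so $v$ must have never been pushed into $Q_u$ at all (otherwise $u$ would eventually extract it, contradicting $v \notin \NL(u)$). A standard frontier argument on the shortest $G-Z$ path $P^*$ from $u$ to $v$ then shows that the first prefix vertex $x_{j^*}$ of $P^*$ not extracted by $u$ must sit in $Q_u$ with estimate at most $w(P^*)$; this forces $u$'s Dijkstra to have terminated via the $(t+1)$-cap (so $k = t$) and yields $d_t \leq w(P^*) = \dist_{G-Z}(u,v)$. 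The strict inequality required by~\ref{prop:sparse-a} then follows from~\Cref{ass:path-weights}: the $u \to v_t$ path realizing $d_t$ and the $u \to v$ path $P^*$ have distinct endpoints and thus cannot share a weight. Property~\ref{prop:sparse-b} follows by the same relaxation bound together with the monotonicity $G - Z \subseteq G - Z_i^{(u)}$, where $Z_i^{(u)}$ denotes the heavy set at the moment $v_i$ was extracted by $u$.

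For efficiency, each process undergoes at most $O(t)$ admitted rounds (by the extraction cap) and at most $O(t)$ rejected-or-lazy-deleted rounds, since each such round removes an item from its queue and the total number of pushes into $Q_u$ is $O(t)$ by the bounded out-degree of extracted vertices and the cap; hence the total round count is $O(t)$ and depth is $\Ot(t)$. Per-round work is $\Ot(n)$: peek/extract/relax per process is $\Ot(1)$, aggregation and admission selection run in $\Ot(n)$ via parallel sort, and the eager-deletion cost amortizes into the $O(nt)$ total push count, giving the claimed $\Ot(nt)$ work overall. The main obstacle I expect is reconciling the correctness of the Dijkstra invariant with on-the-fly vertex removals: once vertices become heavy mid-computation, queue estimates no longer necessarily realize shortest paths in the final graph $G - Z$, and one must argue via the shortest-path frontier that~\ref{prop:dominate2} nevertheless holds, invoking~\Cref{ass:path-weights} precisely at the moment needed to upgrade the bound in~\ref{prop:sparse-a} from $\leq$ to $<$.
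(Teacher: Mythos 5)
Your proposal is correct and follows the paper's strategy in all essential respects: $n$ synchronized, truncated parallel Dijkstra runs, a congestion threshold $p$ defining heavy vertices, eager removal of heavy vertices from all queues, the counting argument $|Z|\cdot p\le \sum_u|\NL(u)|$ for the size of $Z$, and the frontier-plus-\Cref{ass:path-weights} argument for property~\ref{prop:dominate2}. The one genuine difference is the mechanism enforcing the congestion bound. The paper lets \emph{all} processes that select a light vertex $v$ in a given iteration actually add $v$ to their near-lists, and only declares $v$ heavy at the following synchronization; the resulting overshoot beyond $p$ is bounded by a charging argument over the incoming edges of $v$, which is exactly where the constant in-degree assumption is consumed for properties~\ref{prop:small-congest}--\ref{prop:heavy}. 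Your hard admission cap ($\min(k_v,\,p-c_v)$ admissions per round) makes~\ref{prop:small-congest} and~\ref{prop:heavy} immediate without the in-degree bound, at the price of introducing ``rejected'' rounds whose number you must (and correctly do) bound by the total number of queue insertions, which is $O(t)$ per process by the constant out-degree; so the degree assumption reappears there. Both mechanisms are sound, and your rejection of a peeker for $v$ exactly when $v$ becomes heavy is harmless for~\ref{prop:dominate2}, since that property only concerns paths in $G-Z$, which avoid $v$. One small slip: the parenthetical claim that $v\notin\NL(u)$ and $v\notin Z$ imply $v$ was \emph{never pushed} into $Q_u$ is false --- $u$'s run can terminate via the $(t+1)$-cap with $v$ still sitting in $Q_u$ --- but nothing downstream uses this claim, as your frontier argument (which is the paper's inductive argument in disguise) delivers $k=t$ and $d_t<\dist_{G-Z}(u,v)$ on its own.
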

Initially, we put $Z=\{s\}$ and the set $Z$ might grow in time.
For each $u\in V$ in parallel, we initiate and run a truncated-Dijkstra-like computation whose visited vertices (along with the established distances) will form $\NL(u)$.
The Dijkstra-like procedure will be truncated after visiting~$t$ vertices or running out of vertices to visit.
Initially, $\NL(u)=\{(u,0)\}$ for all $u\in V$.

As in Dijkstra's algorithm, for each $u$ we maintain a priority queue $Q_u$ storing the candidate vertices that can be reached from the already visited vertices using a single edge.
However, we require such an edge to connect \emph{light} vertices.
Formally, if $\NL(u)=\{(v_0,d_0),\ldots,(v_j,d_j)\}$ at some point, then $Q_u$ stores the vertices of $V\setminus (Z \cup \NL(u))$ that can be reached via a single edge from $\NL(u)\setminus Z$.
The key of $y\in Q_u$ equals
\begin{equation}\label{eq:queue-key}
  \min\left\{d_i+\wei(v_iy) : v_iy\in E, v_i,y\notin Z, 0\leq i\leq j\right\}.
\end{equation}

In a single \emph{iteration} (numbered $1,\ldots,t$), if $Q_u\neq\emptyset$, then the procedure extracts the minimum-keyed vertex $x$ from $Q_u$ and adds it to $\NL(u)$ with the corresponding distance equal to its key in~$Q_u$.
Once $x$ is inserted into $\NL(u)$, it stays there forever.
Inserting $x$ to $\NL(u)$ might require adding (or decreasing the keys of) some out-neighbors of~$x$ outside $\NL(u)\cup Z$ to $Q_x$ (this is also required upon initialization, before iteration $1$).
If $Q_u=\emptyset$, the iteration does nothing.

Importantly, the $n$ individual Dijkstra runs are synchronized after each run performs a single iteration.
Hence, the synchronization happens at most~$t$ times, after each iteration.
Synchronization is also the only time when the set of heavy vertices may grow.
Specifically, all vertices from $V\setminus Z$ that currently appear in at least $p$ near-lists are declared heavy and enter $Z$.
Note that afterwards, as a result of the move, some of the priority queues $Q_u$ might contain either heavy vertices
or light vertices~$y$ reached optimally via an edge whose tail has now become heavy (i.e., whose key in $Q_u$ does not equal~\eqref{eq:queue-key} anymore).
To maintain the queues $Q_u$ and keys therein efficiently subject to insertions to $Z$,
we additionally:
\begin{itemize}
  \item Store the queues as parallel balanced BSTs. Then, they can be filtered out (in parallel) of the newly established
heavy vertices within near-optimal work and $\Ot(1)$ depth.
  \item For each $u\in V$ and $y\in Q_u$, store the values $d_i+\wei(v_iy)$, where $v_iy\in E$, $i\leq j$, $v_i\notin Z$,
    in a parallel balanced BST as well. Recall that their minimum~\eqref{eq:queue-key} constitutes the key of $y$ in~$Q_u$.
\end{itemize}
Finally, the Dijkstra runs rooted at the newly identified heavy vertices are terminated as we will not need
their near-lists anyway.
\subsubsection{Correctness} We now prove that, after the described preprocessing completes, the desired requirements~\ref{prop:small-congest}-\ref{prop:dominate2} are met.
First of all, note that, by construction, each $\NL(u)$ contains the pair $(u,0)$ and at most $t$ other vertices of $G$.

\paragraph{\ref{prop:small-congest}~and~\ref{prop:heavy}.} By construction, every light vertex appears in less than $p$ near-lists.
Similarly, every heavy vertex except $s$ appears in $\geq p$ near-lists, since no vertices are ever removed from the near-lists.

To see that a vertex $z\in Z$ appears in $O(p)$ near-lists, consider the synchronization that establishes $z$ as heavy, after the $i$-th iteration.
During the $i$-th iteration, $z$ is light and thus is a member of $<p$ near-lists.
For $z$ to become the $i$-th near-list member of $\NL(u)$, there had to be a then-light vertex $v_j\in \NL(u)$, $j<i$, such that $v_jz\in E$.
Let us charge such an appearance of $z$ in $\NL(u)$ to the edge $v_jz\in E$.
To bound the number of possible vertices $u$, it is enough to bound the total charge of the incoming edges $vz\in E$, where $v$ was light.
For a fixed edge $vz$, the lightness of~$v$ guarantees it appears in $<p$ near-lists; hence $vz$ is charged $<p$ times.
The number of possible vertices~$v$ is no more than $\indeg(z)$.
Therefore, the sought total charge of the incoming edges of $z$ is at most
\begin{equation}
  (p-1)\cdot \indeg(z).
\end{equation}
Recall that $\indeg(z)=O(1)$ for $z\neq s$ by our sparsity assumption.
We conclude that upon the $i$-th synchronization, $z$ appears in $(p-1)+(p-1)\cdot O(1)=O(p)$ near-lists.
This completes the proof of properties~\ref{prop:small-congest}~and~\ref{prop:heavy}.

From property~\ref{prop:heavy} it follows that $|Z|=O(nt/p)$ as the total size of near-lists constructed~is~$O(nt)$.

\paragraph{\ref{prop:path-length}.} This property can be proved by induction on the time when some $v$ enters $\NL(u)$.
Initially, we have $\NL(u)=\{(u,0)\}$.
Clearly, the claim holds for the initial element of $\NL(u)$.
When a vertex~$y$ is added to $\NL(u)$, its corresponding distance is $d_i+w(v_iy)$ (see~\eqref{eq:queue-key}) for some $v_i$ that has been added to $\NL(u)$ before $u$.
By the inductive assumption, $d_i$ is the length of some $u\to v_i$ path in $G$.
Thus, $d_i+w(v_iy)$ is the length of a $u\to y$ path in $G$ as well.

\paragraph{\ref{prop:dominate2}.}
Finally, to establish property~\ref{prop:dominate2}, we prove it for a fixed $u\in V\setminus Z$ in a more general version, where~$t$ denotes the iteration performed most recently and $Z$ denotes the heavy vertices after the synchronization following that iteration and the corresponding $Z$-update.
Let us consider the initialization to be the $0$-th iteration.

  Observe that the subsequent distances established in the Dijkstra runs are non-decreasing (as in the standard version of Dijkstra's algorithm), and in fact strictly increasing (by~\Cref{ass:path-weights}).
  As a result, if we view the near-lists as sorted by~$d_i$, their new elements can be thought of as being added only at their ends.

  We prove property~\ref{prop:dominate2} by induction on $t$.
  Consider $t=0$. Both items~(a)~and~(b) follow by~\Cref{ass:prefix} since the $u,v$-distance for $u\neq v$ is always larger than the weight $0$ of an empty path, and for $u=v$ it equals the weight of an empty path.

  Now suppose the lemma holds for $t\geq 0$. We prove that it also holds for $t':=t+1$.
  In the following, let $Z,Z'$ be the heavy sets after iterations $t$ and $t+1$, respectively.
  Similarly, let $\NL(u)$ and $\NL'(u)$ be the respective near-lists of $u$ after iterations $t$ and $t+1$.
  By construction, we have $Z\subseteq Z'$ and $\NL(u)\subseteq \NL'(u)$.
  
  Consider a vertex $v$ reachable from $u\in V\setminus Z'$ in $G-Z'$ such that $v\notin \NL'(u)$.
  The vertex~$v$ is reachable from $u$ in $G-Z$ since $Z\subseteq Z'$ and $v\notin \NL(u)$, so $\NL(u)$ contains $t+1$ vertices $u=v_0,\ldots,v_t$ by item~(a) of the inductive assumption.
  
  Let $P$ be the shortest $u\to v$ path in $G-Z$. 
  Let $y$ be the earliest vertex on $P$ such that $y\notin \NL(u)$.
  As $u\in \NL(u)$ (this is guaranteed in initialization), the vertex $x$ preceding $y$ on~$P$ exists and is a member of $\NL(u)$, i.e., $v_j=x$ for some $j\in \{0,\ldots,t\}$.
  Therefore, upon iteration $t+1$, $y\in Q_u$ holds and the key of $y$ in the queue is at most $d_j+\wei(xy)$.
  As $Q_u$ is non-empty upon iteration $t+1$, a new element $v_{t+1}$ joins the near-list of $u$ as a result of that iteration.
  This proves that indeed the size of $\NL'(u)$ is $t+2$, as desired.
  
  By item~(b) of the inductive assumption, during iteration $t+1$, the key $\gamma$ of $y$ in $Q_u$ satisfies: \[\gamma\leq d_j+\wei(xy)\leq \dist_{G-Z}(u,v_j)+\wei(xy)=\dist_{G-Z}(u,x)+\wei(xy)=\dist_{G-Z}(u,y).\]
  Hence, if it is not~$y$ that is selected as $v_{t+1}$, then since $d_{t+1}$ is the weight of some $u\to v_{t+1}$ path in~$G$, by~\Cref{ass:path-weights} we conclude:
  \[ d_{t+1}<\gamma\leq \dist_{G-Z}(u,y)\leq \dist_{G-Z}(u,v).\]
  Otherwise, if $y$ is selected as $v_{t+1}$, then $y\neq v$ by $v\notin \NL'(u)$ and by~\Cref{ass:prefix} we have
  $\dist_{G-Z}(u,y)<\dist_{G-Z}(u,v)$, so we obtain
  \[ d_{t+1}=\gamma\leq \dist_{G-Z}(u,y)< \dist_{G-Z}(u,v).\]
  As $\dist_{G-Z}(u,v)\leq \dist_{G-Z'}(u,v)$, in both cases we conclude $d_{t+1}<\dist_{G-Z'}(u,v)$ which establishes item~(a), as desired.

  To show item~(b), first consider a shortest path $Q=u\to v_{t+1}$ in $G-Z$.
  Let $v_j,y$ be some two consecutive vertices on $Q$ such that $v_j\in \NL(u)$ and $y\notin \NL(u)$.
  Note that $v_{t+1}$ equals $y$ or appears later than $y$ on $Q$, so by~\Cref{ass:prefix}, $\dist_{G-Z}(u,v_{t+1})\geq \dist_{G-Z}(u,y)$.
  Thus, by $Z\subseteq Z'$ and item~(b) of the inductive assumption, we have:
  \[ \dist_{G-Z'}(u,v_{t+1})\geq \dist_{G-Z}(u,v_{t+1})\geq \dist_{G-Z}(u,y)=\dist_{G-Z}(u,v_j)+w(v_jy)\geq d_j+w(v_jy).\]
  So the key of $y$ in $Q_u$ during iteration $t+1$ is at most $d_j+w(v_jy)\leq \dist_{G-Z'}(u,v_{t+1})$.
  The extracted key $d_{t+1}$ is no larger than that, which proves the desired inequality $d_{t+1}\leq \dist_{G-Z'}(u,v_{t+1})$.
  Clearly, the inequality $d_i\leq \dist_{G-Z'}(u,v_i)$ for $i\in \{0,\ldots,t\}$ holds by $Z\subseteq Z'$ and the inductive assumption.

  \subsubsection{Running time}
  Each of the $n$ parallel Dijkstra runs visits at most $t+1$ vertices $v$ other than $s$ (recall that ${s\in Z}$ initially).
  Processing a visited vertex $v$ requires updating the keys of $O(\outdeg(v))$ vertices
  and subsequently reinserting that many vertices with updated keys into the relevant queues.
  These updates can be performed in parallel for all outgoing edges using a parallel balanced BST,
  and thus processing a vertex costs $\Ot(\outdeg(v))$ work and $\Ot(1)$ depth.
  Since out-degrees except $\outdeg(s)$ are $O(1)$ by our sparsity assumption, both total work and depth per Dijkstra run is $\Ot(t)$.

  Identifying new heavy vertices upon synchronization, and maintaining the queues and their keys to account for that
  can be realized in $\Ot(1)$ depth if we store counters of the numbers of appearances of individual vertices $v\in V$
  in the near-lists constructed so far.
  The additional work needed for that can be charged to the operations performed by the Dijkstra runs, as introducing new heavy vertices can only remove elements that the Dijkstra runs inserted into the relevant parallel trees.

  We conclude that the total work used by the preprocessing is $\Ot(nt)$, and the total depth is~$\Ot(t)$.

  \subsection{Discovery step}
  We now describe the implementation of a single discovery step that makes use of the preprocessed near-lists.
  Recall from~\Cref{sec:basic} that in a single discovery step, our goal is to compute $\near{t}{s}$ in the ``current'' digraph $G$ with all the previously discovered vertices contracted into the current source~$s$.
  The algorithm in~\Cref{sec:basic} applied~\Cref{lem:nearest-neighbors} to the entire graph $G$.
  Our main idea here is that it is enough to apply~\Cref{lem:nearest-neighbors} to a certain carefully chosen subgraph $H$ of the current $G$.

  The preprocessed data is meant to be shared among $\ell$ consecutive discovery steps forming a single phase.
  Thus, before we proceed, we need to introduce some more notation to easily refer to the state of the graph at the time when the preprocessing happened.

  Fix some phase and a moment of time after some discovery steps have been performed.
  Suppose the preprocessing at the beginning of the phase was run on the graph ${G_0=(V_0,E_0)}$ and produced a heavy set $Z\subseteq V_0$ and a collection of near-lists $\NL(u)$ for $u\in V_0$.
  Let~$U\subseteq V_0$ be the set of vertices that have been discovered since the last preprocessing happened.
  That is, the current graph $G=(V,E)$ equals $G_0$ with the vertices $U$ contracted into the source $s$.
  In particular, $V\cap U=\emptyset$: recall that every vertex contracted into $s$ is removed from $V$.
  Since a phase consists of $\ell$ discovery steps, at any time we have $|U|\leq \ell\cdot t$ and $U=\emptyset$ holds at the beginning of the phase.

  For a vertex $v\in V$, let us denote by $E_t(v)$ its $t$ (or all, if $\outdeg(v)<t$) outgoing edges in $G$ with smallest weights\footnote{Assuming super-constant $t$, $E_t(v)$ contains all the outgoing edges of $v$ for all $v\neq s$ (by the constant-degree assumption we use in this section).
  Here, applying the distinction between $E_t(v)$ and all outgoing edges of $v$ might appear excessive and artificial.
  However, it will become much more important later, in our main development (\Cref{sec:dense}).
  Foreseeing that, in the current section, we use slightly more general definitions and constructions than needed.}.
  We will also sometimes refer to $E_t(v)$ as the \emph{top-$t$} edges of $v$, and to the heads of those edges as the \emph{top-$t$} out-neighbors of $v$.
  
  Now, let us define the following two important subsets of $V$ that require special treatment:
  \begin{itemize}
    \item $Z^*=Z\setminus U$ -- the remaining (i.e., undiscovered) heavy vertices of $G$. Recall that $s\in Z^*$.
    \item $B=\{u\in V : \NL(u)\cap U\neq\emptyset\}$ -- the vertices whose near-lists (that, recall, were initialized for the graph $G_0$) contain a vertex discovered since the start of the phase.
      The vertices $B$ may be viewed as \emph{bad} as their near-lists are no longer guaranteed to be of full size, which was the case when the phase started.
  \end{itemize}
  Additionally, we define the following auxiliary local extension of the special vertices $B\cup Z^*$:
  \[ Y=\bigcup_{b\in B\cup Z^*}\bigcup_{bv\in E_t(b), v\in V\setminus (B\cup Z^*)} \NL(v).\]
  The set $Y$ is obtained by unioning, for each of the top-$t$ out-neighbors $v$ of every vertex $b\in B\cup Z^*$, the near-list $\NL(v)$, unless $v$ is heavy or bad.
    In particular, for every such $v$, the near-list $\NL(v)$ included in $Y$ is guaranteed to be free of vertices discovered in the current phase.
    
    Our main claim in this section is that, when looking for the $t$ closest vertices $\near{t}{s}$, we can limit ourselves to a significantly smaller subgraph of $G$ spanned by $Z^*\cup B\cup Y$. Formally, we prove:
  \begin{lemma}\label{lem:equivalent-subgraph}
    Let $G_t=(V,\bigcup_{v\in V}E_t(v))$, that is, $G$ with all but top-$t$ outgoing edges of every $v\in V$ removed.
    Let $H=G_t[Z^*\cup B\cup Y]$. Let $u\in \near{t}{s}$. Then $u\in V(H)$ and $\dist_G(s,u)=\dist_H(s,u)$.
  \end{lemma}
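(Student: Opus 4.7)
The plan is to show that the shortest $s \to u$ path $P = x_0 x_1 \ldots x_k$ in $G$ lies entirely inside $H$; this immediately yields $u \in V(H)$ and $\dist_H(s,u) \leq \dist_G(s,u)$, and the reverse inequality is free since $H$ is an induced edge-subgraph of $G$. Two facts need to be verified: (a) every edge $x_{i-1} x_i$ of $P$ belongs to $E_t(x_{i-1})$, and (b) every vertex $x_i$ belongs to $Z^* \cup B \cup Y$. For (a), I would argue by contradiction. If $x_{i-1} x_i \notin E_t(x_{i-1})$, then $x_{i-1}$ has $t$ outgoing edges strictly lighter than $x_{i-1} x_i$ (weights from a fixed source are distinct by~\Cref{ass:path-weights}); their heads $z_1, \ldots, z_t$ are pairwise distinct (no parallel edges) and distinct from $s$ (which has no incoming edges), and satisfy $\dist_G(s, z_j) < \dist_G(s, x_i) \leq \dist_G(s, u)$. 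This gives $t$ non-$s$ vertices strictly closer to $s$ than $u$, contradicting $u \in \near{t}{s}$.

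For (b), I would split $P$ into maximal subpaths according to whether the vertex is in $B \cup Z^*$ (``special'') or in $V \setminus (B \cup Z^*)$ (``light''; such vertices automatically lie in $V_0 \setminus (Z \cup U)$, since $V = V_0 \setminus U$ and $x_i \notin Z^*$ together with $x_i \notin U$ force $x_i \notin Z$). The source $s \in Z^*$ opens a special segment, and special segments are trivially inside $V(H)$. Whenever a light segment begins at some $x_j$, the preceding vertex $x_{j-1}$ lies in $B \cup Z^*$ and the edge $x_{j-1} x_j$ is top-$t$ by (a); hence $x_j$ is a top-$t$ out-neighbor of $x_{j-1} \in B \cup Z^*$ lying outside $B \cup Z^*$, and the definition of $Y$ yields $\NL(x_j) \subseteq Y$.

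The main obstacle is showing that every interior vertex $x_i$ of the light segment starting at $x_j$ also belongs to $\NL(x_j)$, and hence to $Y$. The subpath $x_j \to \ldots \to x_i$ lies inside $G_0 - (Z \cup U)$, so $x_i$ is reachable from $x_j$ in $G_0 - Z$, and property~(iv)(a) of the near-lists gives either $x_i \in \NL(x_j)$, or $|\NL(x_j)| = t+1$ and $d_t < \dist_{G_0 - Z}(x_j, x_i)$, where $d_t$ is the largest distance stored in $\NL(x_j)$. I would assume the latter toward contradiction. Because $x_j \notin B$, $\NL(x_j)$ is disjoint from $U$; combined with the fact that the Dijkstra-style construction of $\NL(x_j)$ realizes each $d_m$ via a path whose intermediate vertices are all in $\NL(x_j)$, every such path lies inside $G_0 - (Z \cup U)$ and therefore embeds into $G$ (edges between vertices of $V \setminus \{s\}$ are preserved by the contractions). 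This gives $\dist_G(x_j, v_m) \leq d_m$ for every $v_m \in \NL(x_j)$. Together with $\dist_{G_0 - Z}(x_j, x_i) \leq \dist_G(s, x_i) - \dist_G(s, x_j)$, I obtain $\dist_G(s, v_m) \leq \dist_G(s, x_j) + d_m < \dist_G(s, x_i) \leq \dist_G(s, u)$ for each $v_m \in \NL(x_j) \setminus \{x_j\}$, producing $t$ distinct non-$s$ vertices strictly closer to $s$ than $u$ in $G$---the same contradiction with $u \in \near{t}{s}$ as in (a).
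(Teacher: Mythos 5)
Your proof is correct and follows essentially the same route as the paper's: the same counting argument establishes $P\subseteq G_t$, and the same decomposition of $P$ at the last special vertex preceding each light vertex, combined with property~\ref{prop:dominate2}\ref{prop:sparse-a} and a counting contradiction against $u\in \near{t}{s}$, establishes $V(P)\subseteq Z^*\cup B\cup Y$. The one place you diverge is in relating the near-list distances $d_m$ to distances in the current graph $G$: you argue that the path realizing $d_m$ has all of its vertices inside $\NL(x_j)$, hence avoids $U$ and embeds into $G$ with the same weight, giving $\dist_G(x_j,v_m)\le d_m$. This is true, but it is a structural fact about the Dijkstra-based construction that is not among the stated properties \ref{prop:small-congest}--\ref{prop:dominate2} (property~\ref{prop:path-length} only guarantees a realizing path somewhere in $G_0$, with no control over its intermediate vertices). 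The paper instead stays within the stated interface: it concatenates a shortest $s\to x_{j-1}$ path, the edge $x_{j-1}x_j$, and the path from property~\ref{prop:path-length} to get an $s\to v_m$ path in $G_0$, and then invokes the fact that contraction preserves distances from $s$ (\Cref{lem:contract}) to transfer the bound to $G$. The paper's version is the more modular one --- \Cref{lem:dense-equivalent-subgraph} reuses this proof verbatim for the improved near-lists of \Cref{sec:dense}, for which only the listed properties are re-established --- but your argument is sound for the sparse case. A minor imprecision: the realizing path lies in $G_0-U$, which is all you need for the embedding; it need not avoid $Z$, since a vertex already placed in $\NL(x_j)$ may become heavy afterwards.
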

  \begin{proof}
    Let $P$ be the shortest $s\to u$ path in $G$. 
    Recall that $V\cap U=\emptyset$.
    We will show that $P\subseteq G_t$ and $V(P)\subseteq V(H)$.
    This will prove $u\in V(H)$ and $P\subseteq H$, and thus $\dist_H(s,u)\leq \dist_G(s,u)$.
    The inequality $\dist_G(s,u)\leq \dist_H(s,u)$ is trivial by $H\subseteq G$.

    First, consider any edge $ab$ of $P$. For contradiction, suppose $ab\notin E_t(a)$. Then, we have $|E_t(a)|=t$.
    By~\Cref{ass:path-weights}, for any $ac\in E_t(a)$, we have $\wei(ac)<\wei(ab)$.
    Thus, \[\dist_G(s,c)\leq \dist_G(s,a)+\wei(ac)<\dist_G(s,a)+\wei(ab)=\dist_G(s,b)\leq \dist_G(s,u).\]
    Since the number of distinct vertices $c\neq s$ is $t$, it follows that there are at least $t$ vertices $\neq s$ closer from $s$ than $u$.
    This contradicts $u\in \near{t}{s}$.
    Therefore, $P\subseteq G_t$ indeed holds.

    Now, we prove $V(P)\subseteq Z^*\cup B\cup Y=V(H)$. Consider some vertex $v\in V(P)\setminus (B\cup Z^*)$.
    Since $s\in Z^*$, there exists some latest vertex $x\in V(P)\cap (B\cup Z^*)$ appearing before $v$ on $P$.
    Let $y$ be the vertex succeeding $x$ on $P$.
    We have already argued that $xy\in E_t(x)$.

    We will show that $v\in \NL(y)$.
    Note that this will prove $v\in Y$ since $x\in B\cup Z^*$, $xy\in E_t(x)$ is an outgoing  edge to $V\setminus (B\cup Z^*)$, and consequently $\NL(y)\subseteq Y$ by the definition of $Y$.

    Since $y\notin B\cup Z^*$, the near-list $\NL(y)$ satisfies property~\ref{prop:dominate2} and $\NL(y)\cap U=\emptyset$.
    For contradiction, suppose $v\notin \NL(y)$.
    Consider the subpath $Q=y\to v$ of $P$.
    Observe that $Q$ is fully contained in $G-Z^*$, so $\dist_{G-Z^*}(y,v)=\dist_G(y,v)$.
    Since $Q$ is disjoint from $U$ and $y\notin U\cup \{s\}$, we in fact have $Q\subseteq G_0-Z$.
    As a result, $v$ is reachable from $y$ in $G_0-Z$.
    Since $v\notin \NL(y)$, by item~(a) of property~\ref{prop:dominate2}, we have
    $\NL(y)=\{(v_0,d_0),\ldots,(v_t,d_t)\}$, where $v_i\notin U$ and \[d_0,\ldots,d_t< \dist_{G_0-Z}(y,v)\leq \dist_{G-Z^*}(y,v).\]
    The last inequality follows from~\Cref {lem:contract}.
    By property~\ref{prop:path-length}, for any $u=0,\ldots,t$, $d_i$ corresponds to the length of some path $y\to v_i$ in $G_0$.
    Hence, there exists a path $s\to v_i$ in $G_0$ of length at most
    \begin{align*}
      \dist_G(s,x)+\wei(xy)+d_i&<\dist_G(s,x)+\wei(xy)+\dist_{G-Z^*}(y,v)\\
      &=\dist_G(s,x)+\wei(xy)+\dist_G(y,v)\\
      &=\dist_G(s,v).
    \end{align*}
    Since $G$ is obtained from $G_0$ by contraction that preserves distances from $s$ (see~\Cref{lem:contract}),
    we obtain ${\dist_G(s,v_i)<\dist_G(s,v)}$ for $t+1$ distinct vertices $v_i\in V$.
    This contradicts the assumption that $v\in \near{t}{s}$.
    Therefore, $v\in \NL(y)\subseteq Y$, as desired.
  \end{proof}
  Let us remark that, in general, neither the heavy set $Z^*$ alone, nor even $Z^*$ augmented with its out-neighbors' near-lists, always contains the next $t$ (or all remaining) undiscovered vertices.
  The additional subsets $B$ and $Y$ included in the subgraph $H$ in~\Cref{lem:equivalent-subgraph} are tailored specifically to address that problem, as demonstrated in the proof above.

\begin{observation}\label{obs:subgraph-size}
    $|Z^*\cup B\cup Y|=O(nt^2/p+\ell t^2p)$.
  \end{observation}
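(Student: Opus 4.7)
The plan is to bound each of $|Z^*|$, $|B|$, and $|Y|$ separately, and then union-bound. The size estimates for $Z^*$ and $B$ follow almost directly from the preprocessing guarantees, while the bound on $|Y|$ requires a small double-counting argument that crucially exploits the sparsity assumption.

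For $Z^*$, since $Z^*\subseteq Z$, \Cref{lem:sparse-preprocessing} immediately gives $|Z^*|\leq |Z|=O(nt/p)$. For $B$, I would use property~\ref{prop:small-congest}: every vertex of $V_0$ appears in at most $O(p)$ near-lists, so each of the $|U|\leq \ell t$ vertices newly discovered in the current phase can contribute at most $O(p)$ members to $B$. Hence $|B|=O(\ell t p)$.

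The main work is in bounding $|Y|$. I would rewrite $Y$ as $\bigcup_{v\in V'}\NL(v)$ where $V'=\{v\in V\setminus(B\cup Z^*):\exists\,b\in B\cup Z^*,\ bv\in E_t(b)\}$. Since each near-list has at most $t+1$ elements, $|Y|\leq (t+1)|V'|$, so it suffices to bound $|V'|$. Here the sparsity assumption adopted in this section enters: every vertex other than $s$ has out-degree $O(1)$, so $|E_t(b)|=O(1)$ for $b\in (B\cup Z^*)\setminus\{s\}$, while $|E_t(s)|\leq t$ (and we know $s\in Z^*$). Summing the contributions,
\[
|V'|\ \leq\ |E_t(s)|+\sum_{b\in (B\cup Z^*)\setminus\{s\}}|E_t(b)|\ =\ O\!\bigl(t+|B\cup Z^*|\bigr)\ =\ O\!\bigl(t+nt/p+\ell t p\bigr).
\]
Multiplying by $t+1$ and absorbing the lower-order $t^2$ term into $nt^2/p$ (using $p\leq n$) yields $|Y|=O(nt^2/p+\ell t^2 p)$.

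Combining the three bounds gives $|Z^*\cup B\cup Y|=O(nt/p)+O(\ell t p)+O(nt^2/p+\ell t^2 p)=O(nt^2/p+\ell t^2 p)$, as claimed. The only subtle point is the asymmetric treatment of the source~$s$ when bounding $|V'|$; without the sparsity assumption on non-source vertices, the naive estimate $|V'|\leq t\cdot|B\cup Z^*|$ would give an extra factor of $t$, which would be too weak. Everything else is a routine application of the near-list properties already established.
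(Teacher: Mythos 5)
Your proof is correct and follows essentially the same route as the paper: bound $|Z^*|$ via $|Z|=O(nt/p)$, bound $|B|$ via property~\ref{prop:small-congest}, and bound $|Y|$ by counting at most $t+1$ near-list entries per top-$t$ edge out of $B\cup Z^*$, treating the source $s$ separately because of its unbounded out-degree. The only cosmetic difference is that you make the absorption of the lower-order $t^2$ term (via $p\le n$) explicit, which the paper leaves implicit.
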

  \begin{proof}
    First, recall that $|Z|=O(nt/p)$ by~\Cref{lem:sparse-preprocessing} and we have $|Z^*|\leq |Z|$. Thus, $|Z^*|=O(nt/p)$.
    By property~\ref{prop:small-congest} of the near-lists (of $G_0$), for each $v\in U$, there exist $O(p)$ near-lists $\NL(u)$ such that $v\in\NL(u)$.
    Hence, the number of near-lists intersecting the set $U$ is $O(|U|p)=O(\ell tp)$.
    Equivalently, $|B|=O(\ell tp)$.
    Finally, in the set $Y$ we have at most $t+1$ vertices per outgoing edge $bv\in E_t(b)$ of each vertex $b\in B\cup Z^*$.
    Every vertex in $B\cup Z^*$ except $s$ has out-degree $O(1)$, whereas $|E_t(s)|$ is bounded by $t$.
    We conclude that $|Y|=O(|B\cup Z^*\setminus\{s\}|t+t^2)=O(nt^2/p+\ell t^2p)$.
  \end{proof}

  Equipped with~\Cref{lem:equivalent-subgraph} and~\Cref{obs:subgraph-size}, we are now ready to describe how each discovery step is performed precisely, assuming the preprocessing step has been performed at the beginning of the phase.
  We first set up the graph $H=G_t[Z^*\cup B\cup Y]$, as defined in~\Cref{lem:equivalent-subgraph}.
  Note that the vertices of $H$ depend on the discovery steps performed earlier on in the phase, or, in other words, on the set $U$ which grows while the phase proceeds.
  For example, in the first discovery step of a phase, we have $B=\emptyset$ and $Z^*=Z$ (recall that $Z$ includes $s$), whereas the set $Y$ contains the near-lists that can be ``reached'' via a single outgoing edge from $Z^*$.

  Constructing the sets $Z^*$, $B$, and $Y$ from the current $G$ and $U$ in near-optimal work and depth is possible assuming the preprocessed data is accompanied with ``inverse'' near-lists \linebreak $\NL^{-1}(v)=\{u\in V : v\in \NL(u)\}$.
  These can be easily set up after preprocessing in $\Ot(nt)$ work and $\Ot(1)$ depth.
  By~\Cref{obs:subgraph-size}, $H$ has $\Ot((n/p+\ell p)t^2)$ vertices.
  Since every vertex of~$H$ except the source $s$ has degree $O(1)$, we have $|E(H)|=O(|V(H)|)$.
  As a result, the graph $H$ can be constructed within $\Ot((n/p+\ell p)t^2)$ work and $\Ot(1)$ depth.

  From~\Cref{lem:equivalent-subgraph} it follows that the set $\near{t}{s}$ of $t$ closest vertices from $s$ is the same in the two graphs $G$ and $H$.
  Consequently, to find $\near{t}{s}$ in $G$, it is enough to apply~\Cref{lem:nearest-neighbors} to the subgraph~$H$
  instead of~$G$.
  This requires $\Ot((n/p+\ell p)t^4)$ work and $\Ot(1)$ depth.
  \subsection{Running time analysis}
  Let us summarize the total work and depth of the proposed SSSP algorithm that applies the near-list computation every $\ell$ discovery steps, and uses that to enable nearest vertices computation on a subgraph of $G$ as opposed to the entire graph $G$.
  
  Recall that due to our sparsity assumption, we have $m=O(n)$.
  We have already argued in~\Cref{sec:basic} that maintaining the adjacency lists (and even the top-$t$ adjacent edges of each vertex) subject to the performed contractions costs $\Ot(n)$ total work and $\Ot(n/t)$ depth.
  
  By~\Cref{lem:sparse-preprocessing}, the computation of the heavy vertices and near-lists required $\Ot(nt)$ work
  and $\Ot(t)$ depth.
  This preprocessing happens $O(n/(\ell t))$ times in sequence, so the total work spent on preprocessing is $\Ot(n^2/\ell)$, whereas the required depth is $\Ot(n/\ell)$.

  Finally, given the preprocessing, the total work consumed while performing the $O(n/t)$ discovery steps is $\Ot((n/t)\cdot (n/p+\ell p)t^4)=\Ot(n(n/p+\ell p)t^3)$.
  All the discovery steps run in near-optimal $\Ot(1)$ depth, so they are performed within $\Ot(n/t)$ depth in total.

  To sum up, the presented SSSP algorithm runs within $\Ot(n(n/p+\ell p)t^3+n^2/\ell)$ work and ${\Ot(n/t+n/\ell)}$ depth under the constant-degree assumption.

Let us now appropriately set the parameters $p$ and $\ell$ as a function of $n$ and the parameter $t$.
The constraints for the parameters are $p\in [2,n]$, $t\in [1,n]$ and $\ell\in [1,n/t]$.

  First of all, note that the choice $\ell<t$ is suboptimal: if this is the case, setting $t\coloneqq\ell$ decreases the work
  and does not increase the depth asymptotically.

  Note that the optimal choice for $p$ is such that $n/p=\ell p$, i.e., $p=\sqrt{n/\ell}$.
  In such a case, the work bound becomes $\Ot(n^{3/2}\ell^{1/2} t^3+n^2/\ell)$.
  
  Observe that if $t< \ell< n/t$ then the optimal choice for $\ell$ is $\ell=n^{1/3}/t^2$.
  Importantly, this is only valid for $t\in [1,n^{1/9})$ by $\ell>t$.
  With such an $\ell$, we conclude that $p$ should be set to $p=n^{1/3}t\in [1,n]$.
  So for $t\in [1,n^{1/9}]$, we can achieve work $\Ot(n^{5/3}t^2)$ and depth $\Ot(n/t)$.
 
  If $\ell=n/t$, then the work is $\Ot(n^2t^{5/2})$, so the obtained trade-off is no better than in the basic algorithm described in~\Cref{sec:basic}.
  
  If $\ell=t$, then the work is $\Ot(n^{3/2}t^{7/2}+n^2/t)$, which is $\Ot(n^{3/2}t^{7/2})$ for $t\geq n^{1/9}$.
  This bound is tighter than $\Ot(n^{5/3}t^2)$ only for $t<n^{1/9}$, but allows choosing $t\in [1,n^{1/2}]$, so that $\ell=t\leq n/t$ is satisfied.
  For such $\ell$, we infer $p=\sqrt{n/t}\in [1,n]$ as well.
  To sum up, if we aim at depth $\Ot(n/t)$, then for $t\in [1,n^{1/9})$ the work bound is $\Ot(n^{3/2}t^{7/2})$, whereas for $t\in [n^{1/9},n^{1/2}]$ it is
  $\Ot(n^{5/3}t^2)$.

  We can combine these cases into a single bound $\Ot(n^{3/2}t^{7/2}+n^{5/3}t^2)$ valid for the entire range $t\in [1,n^{1/2}]$.
  It is worth noting that this bound is only better than in the basic algorithm (\Cref{sec:basic}) for $t\leq n^{1/5}$.
  Finally, by dropping the $O(1)$-degree assumption, we obtain the bound ${\Ot(m^{3/2}t^{7/2}+m^{5/3}t^2)}$ valid for $t\in [1,m^{1/2}]$.

\sparsetradeoff*

\section{Lifting the algorithm to non-sparse graphs}\label{sec:dense}
The assumption about $G$ having constant degrees was crucial for the analysis of the algorithm presented in \Cref{sec:sparse}.
More specifically, it allowed us to efficiently bound the size of the subgraph which we computed $\near{t}{s}$ in (\Cref{obs:subgraph-size}) due to the following two reasons:
\begin{enumerate}
  \item We could bound $|B|$ by $O(\ell t p)$ since the number of occurrences of a single vertex in near-lists was at most~$p$ times the (constant) indegree.
	\item We had $|Y| \leq O(|B \cup Z^*|t)$ as the vertices of the graph (except the source) had constant outdegree.
\end{enumerate}

Our ultimate aim is to design an algorithm for general graphs with work $\Ot(n^{2-\epsilon} + m)$ and depth $\Ot(n^{1-\epsilon})$ for some $\epsilon > 0$. For these graphs, we cannot use the bounded-degree assumption anymore, so we need to work around both issues above.
The latter problem is easier to deal with: We can simply use the~bound $|E_t(v)| \le t$ rather than $|E_t(v)| = \Oh(1)$ and get a slightly worse, but still satisfactory, bound on $|Y|$.
However, the first issue turns out to be much more prominent. To handle it, we are going to partially reuse the precomputed objects between the phases of the algorithm (rather than discard all precomputed information after each phase).
More precisely, we will additionally compute sets of \emph{permanently heavy} vertices and \emph{alive edges} that will be maintained (and efficiently updated) between the phases. The permanently heavy vertices will exhibit weaker structural properties than the heavy vertices in the sparse variant of the algorithm. Because of that, at the beginning of each phase we will still need to additionally compute an~additional set of \emph{temporarily heavy} vertices.
Then, the set of heavy vertices -- understood as all permanently or temporarily heavy vertices -- will satisfy additional structural properties, which will eventually allow us to argue the effectiveness of the improved algorithm.

\subsection{Permanently heavy vertices and alive edges}
This subsection aims to define and state the properties of permanently heavy vertices and alive edges, explain how to compute these objects, and how to maintain them efficiently.
Recall that the graph $G$ satisfies Assumptions~\ref{ass:prefix} and \ref{ass:path-weights} (\Cref{sec:simplifying}).

\begin{lemma} \label{lem:pre-init}
  Let $t, p \le n$ be positive integers. Then, in $\Ot(m)$ work and $\Ot(n / t)$ depth we can compute a subset $Z_0 \subseteq V(G)$ and a subgraph $G_0$ of $G$ such that $V(G) = V(G_0)$ and:
\begin{enumerate}[(i)]
	
	\item\label{prop:outdeg-lb}For each $v \in V$, we have $\outdeg_{G_0}(v) \ge t$ or $N^{+}_{G_0}(v) \supseteq N^{+}_G(v) \setminus Z_0$.
	\item\label{prop:outdeg-ub} For each $v \in V$, we have $\outdeg_{G_0}(v) \le 3t$.
	\item\label{prop:indeg-ub} For each $v \in V$, we have $\indeg_{G_0}(v) \le p$.
	\item\label{prop:small-perm} The set $Z_0$ is of size $\Oh(nt / p)$.
	\item\label{prop:top-t} For all vertices $v, x, y \in V$, if $w(vx) < w(vy)$ and $y \in N^{+}_{G_0}(v)$, then $x \in N^{+}_{G_0}(v) \cup Z_0$.
\end{enumerate}
\end{lemma}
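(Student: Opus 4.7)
The plan is to construct $G_0$ as a weight-ordered prefix sparsification of $G$: for every $v \in V$, I intend $N^+_{G_0}(v)$ to consist of a weight-prefix of $v$'s out-neighbor list $L_v$ (sorted by $\wei$) with some heavy vertices from $Z_0$ omitted. This way, $N^+_{G_0}(v) \cup Z_0$ is always a weight-prefix of $N^+_G(v)$, so property~(\ref{prop:top-t}) becomes an automatic invariant. I first sort each $L_v$ in parallel using $\Ot(m)$ work and $\Ot(1)$ depth.

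The construction itself is iterative. Initialize $Z_0 := \emptyset$ and set $N^+_{G_0}(v)$ to the top-$t$ elements of $L_v$ (or all of $L_v$ if $\outdeg_G(v) < t$). Then repeatedly perform the following \emph{cap-and-extend} round in parallel: identify every $u$ with $\indeg_{G_0}(u) > p$, add all such $u$ to $Z_0$, and for each prune exactly $\indeg_{G_0}(u) - p$ of its incoming edges so as to saturate $u$ at $\indeg_{G_0}(u) = p$. To limit cascading, the pruned in-neighbors should be chosen \emph{consistently across heavies} (e.g., via a globally fixed priority over $V$), so that the same small subset of vertices $v$ absorbs as many losses as possible. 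Afterwards, every $v$ whose out-degree dropped below $t$ advances its prefix pointer in $L_v$ to restore $|N^+_{G_0}(v)| = \min(t, \outdeg_G(v))$, skipping over any already-saturated heavy vertex whose inclusion would overflow~$p$. The procedure terminates when no new heavy is declared in a round.

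Properties~(\ref{prop:outdeg-lb}), (\ref{prop:outdeg-ub}), (\ref{prop:indeg-ub}), and (\ref{prop:top-t}) are direct invariants of the procedure. For~(\ref{prop:small-perm}), observe that at termination $|E(G_0)| = \sum_v \outdeg_{G_0}(v) \leq 3nt$, while every $u \in Z_0$ is saturated at $\indeg_{G_0}(u) = p$ (because once $u$ enters $Z_0$, later extensions skip over it), so $|Z_0| \cdot p \leq \sum_{u \in Z_0} \indeg_{G_0}(u) \leq |E(G_0)| \leq 3nt$, giving $|Z_0| = \Oh(nt/p)$ as required.

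The main obstacle is the work and depth analysis. The total work $\Ot(m)$ follows because each edge $vu$ can enter $N^+_{G_0}(v)$ at most once (since the prefix pointer of $v$ is monotone) and can leave at most once (when $u$ is declared heavy and pruned), and these events can be realized in $\Ot(1)$ amortized work using parallel balanced BSTs keyed by weight. The depth $\Ot(n/t)$ is more delicate: naively, each round might declare only a single new heavy, giving up to $\Ot(nt/p)$ rounds. I expect to resolve this using the $3t$ slack in property~(\ref{prop:outdeg-ub}) to batch the cap-and-extend operations: each round is allowed to temporarily let $\outdeg_{G_0}(v)$ swell up to $3t$ before further capping is triggered, so a single round processes $\Theta(t)$ new heavy declarations simultaneously. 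Combined with $|Z_0| = \Oh(nt/p) \leq n$, this reduces the round count to $\Ot(n/t)$, each round implementable in $\Ot(1)$ depth via batched BST operations.
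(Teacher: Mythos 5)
Your construction is close in spirit to the paper's (iterative, weight-prefix-based sparsification with on-the-fly declaration of heavy vertices), and properties (\ref{prop:outdeg-lb})--(\ref{prop:small-perm}) and (\ref{prop:top-t}) do hold for it, but the depth bound --- which is the entire difficulty of this lemma --- is not established. In your scheme a round only extends each deficient vertex's prefix by \emph{exactly enough} edges to restore its out-degree to $\min(t,\outdeg_G(v))$. A round in which a single heavy vertex is declared prunes as little as one edge, triggers a one-edge extension, which may create exactly one new heavy vertex in the next round, and so on: a chain of rounds each contributing a single element to $Z_0$. Since $|Z_0|$ can be as large as $\Theta(nt/p)$, and the lemma must hold for \emph{all} $p\le n$ (e.g.\ $p=2$), the round count can reach $\Theta(nt/p)\gg n/t$. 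Your proposed remedy --- ``letting $\outdeg_{G_0}(v)$ swell up to $3t$ so a single round processes $\Theta(t)$ new heavy declarations'' --- is asserted but never argued; nothing in the procedure forces $\Theta(t)$ new heavies per round, and the ``consistent pruning priority'' is a heuristic with no worst-case guarantee.

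The missing idea is a per-round progress certificate. The paper gets one by having every still-deficient vertex propose its top-$2t$ \emph{pending} edges in every step (so candidate batches have size $2t$, not ``just enough''), and by never removing edges once added: when a head $u$ would overflow $p$, only $p-\indeg_{G_0}(u)$ of the incoming candidates are accepted and $u$ enters $Z_0$ immediately. Then any non-final step has a witness vertex whose out-degree is still below $t$ after the step and whose pending list is nonempty; of its $2t$ proposed edges, more than $t$ were rejected, and each rejected edge's head was added to $Z_0$ \emph{in that very step} (heads already in $Z_0$ were removed from pending lists and cannot be proposed). Hence at least $t$ fresh heavy vertices per non-final step, giving $O(n/t)$ steps regardless of $p$. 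To repair your proof you would need to replace the ``restore to $t$'' extension with a fixed-size $2t$ candidate batch per deficient vertex per round and supply this counting argument; as written, the depth claim is a gap.
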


The vertices of $Z_0$ above will be called \emph{permanently heavy}, while the edges forming $G_0$ will be called \emph{alive}.
We remark that the last condition in~\Cref{lem:pre-init} states, in other words, that the alive edges going out from a~vertex~$v$ are its top-$k$ edges, for some $k$, except for edges pointing to $Z_0$ that may or may not be included in $N^{+}_{G_0}(v)$.

\begin{proof}[Proof of~\Cref{lem:pre-init}]
	We will proceed in multiple steps, where we start with $Z_0 = E(G_0) = \emptyset$. In each step, we will add some edges to $E(G_0)$, while hypothetically also enlarging $Z_0$. Throughout the algorithm, each edge will be considered for inclusion in $E(G_0)$ exactly once. For each vertex $v\in V$, we maintain a \emph{pending out-adjacency list} $P_v$ sorted by the edge weights kept in a parallel BST in the increasing order of weights. We initialize $P_v$ with $N^{+}_G(v)$ for each $v \in V$. We think of $P_v$ as a~queue of edges with tail $v$ that are yet to be considered. An~edge is removed from $P_v$ whenever it is extracted for consideration or whenever the head of $e$ is added to $Z_0$.
	
	In each step, if $N^{+}_{G_0}(v)$ contains less than $t$ vertices, we define $C_v$ to be top-$2t$ edges from $P_v$ (or all of them if $|P_v| < 2t$). Otherwise, we define $C_v = \emptyset$. By $C = \bigcup_{v \in V} C_v$ we denote the set of edges that we consider adding to $G_0$ in this step. Extending $E(G_0)$ by the entire set $C$ might make the property \ref{prop:indeg-ub} violated: it could be the case that the indegree of some vertex $u$ in $G_0$ will exceed $p$, that is, $\indeg_{G_0}(u) \le p < \indeg_{G_0 \cup C}(u)$. In this case, we choose arbitrary $p - \indeg_{G_0}(u)$ edges of $C$ with head $u$, add them to $G_0$, and declare $u$ as \emph{permanently heavy} (i.e., add $u$ to $Z_0$). We perform these steps for each such $u$; and for each such $u$, we also remove $u$ from all pending out-adjacency lists $P_v$ in which they occur. For all the remaining vertices, we add to $G_0$ all of their incoming edges from $C$. After that, we remove all edges of $C$ from pending out-adjacency lists where they occur to reflect the fact that they were already considered. This concludes the description of a~single step. Note that a~step can be implemented in $\Ot(1)$ depth. We repeat this process as long as there is any vertex with less than $t$ outgoing alive edges whose pending out-adjacency list is not empty yet.

	Let us analyze the resulting $Z_0$ and $G_0$ and prove that they satisfy the required properties.
	First, it follows from the stopping condition that all vertices have either at least~$t$ alive outgoing edges, or all of their edges have been considered, in which case all their outgoing edges whose heads do not end up in $Z_0$ are included in $E(G_0)$. Hence, item~\ref{prop:outdeg-lb} holds.

  Next, for each vertex $v$, in a single step we add to $E(G_0)$ at most $2t$ edges with tail $v$ (and in particular no edges if the outdegree of $v$ already exceeds $t$). Therefore, the outdegree of $v$ cannot exceed $3t$ after each step, which proves \ref{prop:outdeg-ub}.

  The property \ref{prop:indeg-ub} holds, because whenever we were to add too many incoming edges to a vertex~$u$, we specifically limit ourselves to choosing some $p- \indeg_{G_0}(u)$ edges of the new candidate edges to be added to $E(G_0)$, so that the indegree of such vertex in $G_0$ becomes exactly $p$. 
	
  As for each $z \in Z_0$ we have $\indeg_{G_0}(z) = p$ and $\sum_{v \in V} \indeg_{G_0}(v) = \sum_{v \in V} \outdeg_{G_0}(v) \le 3nt$, we conclude that $|Z_0| \le 3nt/p = \Oh(nt/p)$, asserting \ref{prop:small-perm}.

	The last property \ref{prop:top-t} follows from the fact that we consider edges outgoing from each vertex~$v$ in the increasing order of their weights. Each edge extracted from $P_v$ for consideration and whose head does not end up in $Z_0$ is included in $E(G_0)$.
	
	Let us now analyse the performance of this algorithm. Let us note that in each step, except for the last one, there had to be a vertex whose outdegree in $G_0$ was still less than $t$ after that step and not all of its outgoing edges were considered yet. It means that in that step, we processed its~$2t$ outgoing edges and more than $t$ of them did not get added to $E(G_0)$.
  By construction, for each such edge, its head was introduced to $Z_0$ in that step. Therefore, at least $t$ new vertices were introduced to $Z_0$ in this step. It follows that the number of steps is $\Oh(n / t)$, and as each step can be performed in $\Ot(1)$ depth. Thus, the total depth of this algorithm is $\Ot(n/t)$. As each edge is considered only once, the total work is naturally $\Ot(m)$.
\end{proof}

At the beginning of our improved shortest paths algorithm, we initialize $Z_0$ and $G_0$ with the sets computed by the algorithm of \Cref{lem:pre-init}.
We now show that these objects can be efficiently maintained under contractions of vertices into the source vertex $s$ (as defined in~\Cref{def:contract}).

\begin{lemma} \label{lem:pre-init-upd}
  Let $G$, $Z_0$ and $G_0$ be as in \cref{lem:pre-init}. It is possible to efficiently update $Z_0$ and~$G_0$ subject to a contraction of a subset of vertices $D$ into a distinguished source vertex $s$ (see \Cref{def:contract}). 
  The total work required for processing a sequence of $c \geq 0$ such operations is $\Ot(m)$ and the total depth is $\Ot(c + n/t)$.
\end{lemma}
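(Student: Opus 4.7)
The plan is to convert the offline construction of \Cref{lem:pre-init} into an~incremental procedure that handles each contraction by updating $G_0$, $Z_0$, and the per-vertex pending out-lists $P_v$ locally, and then re-running the same refill loop. For a single contraction of a~set $D$ into $s$, I proceed in three stages. In stage~1 (redirection), for every $x\in D$ and every outgoing edge $xy\in E(G)$ currently stored either in $E(G_0)$ or in $P_x$, if $y\notin D$ I insert $sy$ with weight $\dist_G(s,x)+\wei(xy)$ into $P_s$, keeping only the minimum-weight representative in case of duplicates. In stage~2 (cleanup), I delete every vertex of $D$ from $V$, from $E(G_0)$, from the out- and in-neighborhoods of remaining vertices, and from all pending lists $P_v$ where such a~vertex appears as a~head. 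In stage~3 (refill), I run exactly the iteration of \Cref{lem:pre-init}: in parallel, each vertex $v$ with $\outdeg_{G_0}(v)<t$ and non-empty $P_v$ pulls its $2t$ cheapest pending edges, adds them to $E(G_0)$ subject to the indegree-$p$ cap, and declares any saturated head as permanently heavy. Because the newly arrived pending edges at $s$ may be lighter than some currently alive out-edges of $s$, the refill at $s$ is driven by a~single sorted structure over $P_s$ together with the alive out-edges of $s$, and the alive set is always chosen as the cheapest such edges modulo vertices of $Z_0$, which re-establishes property~\ref{prop:top-t} for $s$.

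Correctness of properties~\ref{prop:outdeg-lb}--\ref{prop:top-t} carries over almost verbatim from \Cref{lem:pre-init}: edges are always considered in order of increasing weight; the outdegree bounds hold by construction; the indegree cap is enforced exactly at the moment of crossing it; and $Z_0$ is monotone across contractions, so permanence of heaviness is preserved.

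For the depth analysis, stages~1 and~2 reduce to parallel batch operations on BST-backed adjacency and pending lists and therefore cost $\Ot(1)$ depth per contraction, contributing $\Ot(c)$ in total. The refill stage reuses the amortized argument of \Cref{lem:pre-init}: in any refill round that is not the last, some vertex with $\outdeg_{G_0}(v)<t$ must have just pulled $2t$ pending edges of which more than $t$ were rejected, so at least $t$ fresh vertices entered $Z_0$ in that round. Since $|Z_0|\le n$ and $Z_0$ grows monotonically across the entire sequence, the total number of refill rounds is $\Oh(n/t)$, which contributes $\Ot(n/t)$ depth, yielding overall depth $\Ot(c+n/t)$. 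The work is analyzed by amortizing over edges: each edge $uv$ of $G$ is inserted into at most two pending lists during its lifetime (initially into $P_u$ and, if $u$ is ever contracted, once into $P_s$), is extracted from each at most once, and is inserted into and removed from $E(G_0)$ a~constant number of times, giving $\Oh(m)$ edge-level operations and $\Ot(m)$ total work once polylogarithmic BST overheads are included.

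The main obstacle I anticipate is the bookkeeping around the source: because $s$ absorbs redirected edges at every contraction while still being subject to the $3t$ outdegree cap of~\ref{prop:outdeg-ub}, preserving~\ref{prop:top-t} for $s$ requires a~careful promotion/demotion mechanism that pulls newly arrived cheap edges into $E(G_0)$ while pushing heavier alive edges of $s$ back to $P_s$. Charging each such swap to the insertion of the edge that triggered it is essential for keeping the amortized work bound at $\Ot(m)$; once this mechanism is in place, the rest of the argument is a~direct adaptation of the proof of \Cref{lem:pre-init}.
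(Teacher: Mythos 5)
Your proposal is correct and follows essentially the same route as the paper: persist the pending out-adjacency lists $P_v$ across contractions, remove $D$ from them, re-run the refill loop of \cref{lem:pre-init}, and bound depth by charging every non-final refill round to $\geq t$ fresh entries into the monotone set $Z_0$ (giving $\Ot(n/t)$) plus one final round per contraction (giving $\Ot(c)$), with $\Ot(m)$ work from each edge being touched $O(1)$ times. The only cosmetic difference is your promotion/demotion machinery at the source; the paper simply restricts $N^{+}_{G_0}(s)$ to its top-$t$ edges after each contraction, which suffices.
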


\begin{proof}
	We remark that from the definition, all sets $D$ in a series of operations will be disjoint. Each vertex can enter $Z_0$ only once and can exit it only once too (when it is a part of $D$ in some update). 
	
	After the contraction of $D$ into $s$, the graph $G_0$ resulting from the contraction may violate the conditions \ref{prop:outdeg-lb}-\ref{prop:top-t} for two reasons. First, whenever we contract a set of vertices into the source $s$, we remove from $G_0$ all edges with head $s$, so the outdegrees of some vertices in $G_0$ may drop below the required threshold $t$. Moreover, $\outdeg_{G_0}(s)$ may become too large.
	The latter issue can be easily resolved: after each contraction, it suffices to restrict $N^{+}_{G_0}(s)$ to its top-$t$ edges. 
	
	For the former issue, we reapply the same process as we did in the initialization.
  That is, we repeat the step outlined in the proof of \Cref{lem:pre-init} as long as there exists a~vertex with less than~$t$ outgoing alive edges and with a non-empty pending out-adjacency list.
  We point out that in order to do this efficiently, we do not discard the pending out-adjacency lists $P_v$ after the initialization -- these lists persist between the discussed contraction operations. In particular, we remark that upon contraction of $D$, we remove from each $P_v$ all vertices belonging to $D$.
	
  The time analysis is analogous to that in the proof of \Cref{lem:pre-init}. Let us call a step \emph{concluding} if it was the last performed step after performing a contraction, or \emph{non-concluding} otherwise. Again, during each non-concluding step, at least $t$ vertices enter $Z_0$. Since throughout any series of operations, each vertex can enter and exit $Z_0$ at most once (and cannot reenter $Z_0$ afterwards), it follows that the total number of non-concluding steps across all $c$ operations is bounded by $\Oh(n/t)$. There is exactly one concluding step per update, so the number of concluding steps is $c$. Hence, the total number of steps is $\Oh(n/t) + c$. As each step can be performed in $\Ot(1)$ depth, the total depth of performing all updates is $\Ot(n/t + c)$. Since each edge is considered only a constant number of times (when extracted from some $P_v$ or when its head is contracted into~$s$), the total work throughout the entire sequence of operations is $\Ot(m)$.
\end{proof}

\subsection{Improved near-lists}
The outline of the improved algorithm remains the same as in the sparse case. We still split the algorithm into phases; each phase consists of $\ell$ steps; in each step we discover $t$ vertices closest to~$s$ and contract them into $s$.
Again, at the beginning of each phase, we run (in parallel) $|V|$ instances of parallel Dijkstra's algorithm to compute the near-lists and temporarily heavy vertices.
Unfortunately, with no sparsity assumption, the computation of near-lists in the original graph $G$ turns out to be too costly.
Instead, near-lists with slightly weaker guarantees are computed in the sparse subgraph $G_0$ of $G$ containing only the alive edges of~$G$.
The details follow below.

\newcommand{\Zinit}{Z_{\mathrm{init}}}

\begin{lemma}\label{lem:dense-preprocessing}
  Let $p\in [2,n]$ and $t \in [1, n]$ be integers and $J = (V, E)$ be a weighted directed graph such that $\outdeg_J(v) = O(t)$ and $\indeg_J(v) \le p$ for each $v \in V \setminus \{s\}$, where $s$ is a distinguished source with no incoming edges. Moreover, let $\Zinit \subseteq V(J)$ be a subset of vertices of $J$ containing~$s$.
  
  Then, in $\Ot(nt^2)$ work and $\Ot(t)$ depth we can compute a set $Z \supseteq \Zinit$ of heavy vertices of size $\Oh(|\Zinit| + nt/p)$ and a collection of near-lists $\NL(u)$, $u \in V$, satisfying the following properties:
\begin{enumerate}[(i)]
	\item\label{prop:dense-basic-nl} Each near-list $\NL(u)$ contains $u$ and at most $t$ other vertices.
	\item\label{prop:dense-small-congest} Each vertex appears in at most $p^2$ near-lists.
	\item\label{prop:dense-path-length} If $(v,d)\in \NL(u)$ (that is, $d$ is the distance accompanying $v$ in $\NL(u)$), then $d$ equals the weight
	of some $u\to v$ path in $J$.
	\item\label{prop:dense-dominate2} Fix $u\in V\setminus Z$. 
	If $\NL(u)=\{(v_0,d_0),\ldots,(v_k,d_k)\}$ and $0=d_0< d_1< \ldots< d_k$, then:
	\begin{enumerate}[(a)]
		\item \label{prop:dense-a} If a~vertex $v$ is reachable from $u$ in $J-Z$ and $v\notin \NL(u)$, then $k=t$ and ${d_t< \dist_{J-Z}(u,v)}$.
		\item \label{prop:dense-b} $d_i\leq \dist_{J-Z}(u,v_i)$ holds for all $i=0,\ldots,k$.
	\end{enumerate}
\end{enumerate}
\end{lemma}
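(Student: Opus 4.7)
The plan is to follow the template of~\Cref{lem:sparse-preprocessing}: for each $u \in V$ we run, in parallel, a truncated Dijkstra-like procedure on $J$ that admits at most $t$ light vertices into $\NL(u)$, and we synchronize after each of the $t$ iterations so that any light vertex currently appearing in at least $p$ near-lists is promoted to the heavy set. The only differences from the sparse version are that we initialize $Z := \Zinit$ rather than $\{s\}$, and that we now work in $J$ with its bounds $\outdeg_J = O(t)$ and $\indeg_J \le p$ in place of the $O(1)$-degree assumption. As before, each queue $Q_u$ and each per-candidate tree of tentative key contributions $d_i + \wei(v_i y)$ is kept as a parallel balanced BST, so that extraction, key updates, and batch removal of newly heavy vertices each cost $\Ot(1)$ depth.

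Properties~\ref{prop:dense-basic-nl}, \ref{prop:dense-path-length}, and~\ref{prop:dense-dominate2} are inherited from the corresponding parts of~\Cref{lem:sparse-preprocessing} essentially verbatim, since those inductions depend only on~\Cref{ass:prefix}, \Cref{ass:path-weights}, and the monotone growth of the sets $\NL(u)$ and $Z$ -- none of which are affected by swapping $G$ for $J$ or by starting from a non-singleton $\Zinit$. For~\ref{prop:dense-small-congest}, a vertex never promoted to $Z$ trivially lives in $< p$ near-lists; a vertex $z \in \Zinit$ lives only in $\NL(z)$, because it is heavy from iteration $0$ and is thus never considered by any Dijkstra run; and for a newly promoted $z$ we charge each appearance introduced during its promoting iteration $i$ to an edge $vz \in E(J)$, where $v$ is the predecessor of $z$ in the relevant near-list and is still light during iteration $i$. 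Since such a $v$ appears in $< p$ near-lists and $z$ has $\indeg_J(z) \le p$ in-neighbors, iteration $i$ adds fewer than $p(p-1)$ new appearances of $z$ on top of the $< p$ that existed before, yielding a final count $< p^2$. The bound $|Z| \le |\Zinit| + O(nt/p)$ then follows because each member of $Z \setminus \Zinit$ contributes at least $p$ to the total near-list mass of $O(nt)$.

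The complexity accounting is similarly mechanical: each of the $n$ Dijkstra runs visits $\le t$ vertices, and each visit performs $O(\outdeg_J(\cdot)) = O(t)$ parallel BST updates in $\Ot(t)$ work and $\Ot(1)$ depth, for a per-run cost of $\Ot(t^2)$ work and $\Ot(t)$ depth; the synchronization bookkeeping (updating appearance counters and pruning newly heavy vertices from all queues) is charged to the insertions it cancels. Summing across all $n$ runs in lockstep yields $\Ot(nt^2)$ work and $\Ot(t)$ depth. The main obstacle, relative to the sparse case, is the tightness of~\ref{prop:dense-small-congest}: the additional factor of $\indeg_J(z) \le p$ in the charging step -- absent under the former $O(1)$-degree assumption -- is precisely what forces the relaxation of the congestion bound from $O(p)$ to $p^2$, and the charging must be set up carefully so that the bound genuinely stays at $p^2$ rather than degenerating to a higher polynomial in $p$.
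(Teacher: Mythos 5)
Your proposal is correct and follows essentially the same route as the paper: rerun the preprocessing of \Cref{lem:sparse-preprocessing} with $Z$ initialized to $\Zinit$, absorb the $O(t)$ out-degrees into an $\Ot(t^2)$-work-per-run bound, and redo the charging argument with $\indeg_J \le p$ to get the $(p-1)(p+1) < p^2$ congestion bound and $|Z\setminus \Zinit| = O(nt/p)$. The paper's proof is exactly this "diff" against the sparse lemma, so there is nothing to add.
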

\begin{proof}
	This lemma is an analogue of \cref{lem:sparse-preprocessing} and its proof is very similar, so we are just going to highlight the differences between them and explain how these differences influence the resulting statement of the lemma.
	
	The differences in the assumptions are that in \cref{lem:sparse-preprocessing}, the vertices were assumed to have constant degrees (except $s$), while here outdegrees and indegrees are upper-bounded by $O(t)$ and~$p$, respectively. Moreover, some subset $\Zinit$ of vertices is already initially classified as heavy.
	
  The preprocessing algorithm is the same as the one described in~\Cref{sec:sparse-prep}; however, its output and running time are different.
	
	The only difference stemming from the higher outdegrees is the fact that when adding a new vertex to some near-list, we have to process its at most $\Oh(t)$, rather than $\Oh(1)$, outgoing edges, which causes the work of a single Dijkstra run to grow from $\Ot(t)$ to $\Ot(t^2)$. One can also easily verify that the inclusion of a~vertex into a~near-list can still be processed in $\Ot(1)$ depth, so the total depth of a~single Dijkstra run remains $\Ot(t)$. The total work, however, increases from $\Ot(nt)$ to $\Ot(nt^2)$.

	The only place where the constant indegree assumption was used in the proof of \cref{lem:sparse-preprocessing} was to argue that the number of occurrences of any vertex $u$ on all near-lists cannot be larger than $(p-1) \cdot (\indeg(u)+1)$.
  As indegrees in $J$ are now bounded by $p$, the analogous upper bound in property~\ref{prop:dense-small-congest} becomes $p^2-1<p^2$ in our case.
	
	As for the initialization of heavy vertices with $\Zinit$, its size is just added to the final size of~$Z$. The bound on the number of heavy vertices added throughout the process (that is, $Z \setminus \Zinit$) remains $\Oh(nt/p)$ since the total size of all near-lists is at most $n(t+1)$ and each vertex from $Z \setminus \Zinit$ appears in at least $p$ near-lists.
\end{proof}

Note that the preconditions of \Cref{lem:dense-preprocessing} give restrictions on out- and indegrees of the vertices of the graph; therefore, we cannot apply this lemma to the graph $J \coloneqq G$, and instead we apply it to $J \coloneqq G_0$ with $\Zinit \coloneqq Z_0$ (i.e., $\Zinit$ equal to the set of permanently heavy vertices). In the following, let $\{\NL(v):v\in V\}$ be the collection of near-lists returned by \Cref{lem:dense-preprocessing} and let $Z \supseteq Z_0$ be the accompanying set of heavy vertices.

Unfortunately, the near-lists $\NL(\cdot)$ are only valid for the \emph{subgraph} $G_0$ of $G$ containing the alive edges of $G$.
This poses a~problem: After preprocessing, we want to perform $\ell$ discovery steps, with each step finding $t$ next vertices closest to the source $s$ in $G$ and contracting these vertices into~$s$. Following the proof strategy of \cref{thm:sparse-tradeoff}, we would like to give an analogue of \cref{lem:equivalent-subgraph} providing at each discovery step a small subgraph of the original graph $G$ that is guaranteed to contain all $t$ vertices that are currently closest to $s$ in $G$, along with the shortest paths corresponding to them.
However, \Cref{lem:dense-preprocessing} is applied to $G_0$ rather than $G$, so the computed near-lists are only valid with respect to $G_0$, and so \cref{lem:equivalent-subgraph} does not readily apply to $G$.
To alleviate this issue, we will now define \emph{improved near-lists} that will essentially extend the structural properties of near-lists to the original graph $G$, allowing us to apply an analogue of \cref{lem:equivalent-subgraph} to the original graph $G$.

For every vertex $u \in V$, we define the \emph{improved near-list} $\NL'(u)$ as follows.
If $u \in Z$, then $\NL'(u) \coloneqq \NL(u)$.
Otherwise, let \[
	C_u \coloneqq \NL(u) \,\cup\, \{(y,\, d_i + w(v_i y)) \,:\, (v_i, d_i) \in \NL(u),\, v_iy\text{ is an alive edge} \}
\] be the set of \emph{candidates} for the improved near-list of $u$, containing all vertices of the near-list of~$u$, as well as all vertices reachable from a~vertex of a~near-list of $u$ via a~single alive edge (i.e., an~edge of $G_0$).
Then $\NL'(u)$ is defined to contain the $(t+1)$ \emph{distinct} vertices of $C_u$ (or all of them if $C_u$ contains fewer than $t+1$ distinct vertices) with smallest associated distances (of the form either $d_i$ or $d_i+w(v_iy)$).
We remark that $u$ is always included in $\NL'(u)$ and that $|\NL'(u)| \le t+1$.
\begin{lemma} \label{lem:improved-properties}
	The improved near-lists $\NL'(u)$, $u\in V$, satisfy the following properties:
	
	\begin{enumerate}[(i)]
		\setcounter{enumi}{4}
		\item \label{prop:improved-congestion}
		Each vertex appears in at most $p^3+p^2$ improved near-lists.
		\item \label{prop:improved-paths}
			If $(v, d) \in \NL'(u)$, then $d$ equals the weight of some $u \to v$ path in $G$.
		\item \label{prop:improved-dominate}
		Fix $u \in V \setminus Z$. Let $\NL'(u) = \{(v'_0, d'_0), \ldots, (v'_k, d'_k)\}$ and $0=d'_0 < d'_1 < \ldots < d'_k$. Then,
		if a vertex $v$ is reachable from $u$ in $G - Z$ and $v \not\in \NL'(u)$, then $k=t$ and $d'_t < \dist_{G - Z}(u, v).$
	\end{enumerate}
\end{lemma}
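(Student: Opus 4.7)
My plan is to verify the three properties in sequence, each by combining the guarantees of \Cref{lem:dense-preprocessing} for $\NL(\cdot)$ on $J = G_0$ with the structural properties of $G_0$ from \Cref{lem:pre-init}.

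I would dispatch property \ref{prop:improved-paths} first, as it is essentially immediate from the construction of $C_u$: a candidate of the form $(v_i, d_i) \in \NL(u)$ inherits property \ref{prop:dense-path-length} since every path in $G_0$ is a path in $G$, while a candidate of the form $(y, d_i + w(v_iy))$ is witnessed by concatenating the $\NL(u)$-witness path with the alive edge $v_i y \in E(G_0) \subseteq E(G)$ (recalling that paths in this paper are not required to be simple). For property \ref{prop:improved-congestion}, I would fix a vertex $w$ and bound the number of $u$ with $w \in \NL'(u)$ by the number of $u$ with $w \in C_u$. The inclusion $w \in C_u$ arises either from $w \in \NL(u)$, contributing at most $p^2$ values of $u$ by \ref{prop:dense-small-congest}, or from $w = y$ with $v_i y$ alive and $v_i \in \NL(u)$, in which case $v_i$ ranges over $N^{-}_{G_0}(w)$, of size at most $p$ by \ref{prop:indeg-ub}, and for each such $v_i$ there are at most $p^2$ choices of $u$. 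Summing yields the $p^2 + p^3$ bound.

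The main obstacle is property \ref{prop:improved-dominate}. Fix $u \in V \setminus Z$ and a vertex $v$ reachable from $u$ in $G - Z$ with $v \notin \NL'(u)$. My goal is to exhibit at least $t+1$ distinct vertices in $C_u$ whose minimum associated distance is strictly below $\dist_{G - Z}(u, v)$; since every associated distance in $C_u$ equals the weight of a $u \to \cdot$ path in $G$ (by \ref{prop:improved-paths}), \Cref{ass:path-weights} makes these minimum distances distinct across distinct endpoints, so this indeed forces both $|\NL'(u)| = t+1$ and $d'_t < \dist_{G-Z}(u,v)$. I would let $P = u = w_0, w_1, \ldots, w_k = v$ be the shortest $u \to v$ path in $G - Z$ and split on whether $P$ uses a non-alive edge. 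If every edge of $P$ is alive, then $P \subseteq G_0 - Z$, which forces $\dist_{G_0-Z}(u,v) = \dist_{G-Z}(u,v)$, and property \ref{prop:dense-dominate2} applied to $\NL(u) \subseteq C_u$ directly supplies the desired $t+1$ candidates (handling separately whether $v \in \NL(u)$, in which case \ref{prop:dense-b} bounds $d_v$, or $v \notin \NL(u)$, in which case \ref{prop:dense-a} applies).

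The delicate case is when $P$ contains a non-alive edge; let $w_{i-1} w_i$ be the earliest such edge and note $w_{i-1} \ne v$. Because $w_i \notin Z \supseteq Z_0$, properties \ref{prop:outdeg-lb} and \ref{prop:top-t} of \Cref{lem:pre-init} tell me that $w_{i-1}$ has at least $t$ alive out-neighbors whose edge weights are all at most $w(w_{i-1} w_i)$; applying \Cref{ass:path-weights} to the single-edge paths out of $w_{i-1}$ sharpens this to strict inequality, which is essential for the strict conclusion. Since the prefix of $P$ up to $w_{i-1}$ uses only alive edges and realizes the $G - Z$ distance, I also obtain $\dist_{G_0-Z}(u, w_{i-1}) = \dist_{G-Z}(u, w_{i-1})$. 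If $w_{i-1} \in \NL(u)$, then property \ref{prop:dense-b} bounds $d_{w_{i-1}} \le \dist_{G-Z}(u, w_{i-1})$, and extending $(w_{i-1}, d_{w_{i-1}})$ along each of the $t$ alive out-edges of $w_{i-1}$ produces $t$ candidates in $C_u$ with distance strictly below $\dist_{G-Z}(u, w_i) \le \dist_{G-Z}(u, v)$; together with $(u, 0)$ and $(w_{i-1}, d_{w_{i-1}})$ from $\NL(u)$ this gives at least $t+1$ distinct vertices (a brief check covers the collision cases where $u$ coincides with $w_{i-1}$ or with one of the alive out-neighbors). If instead $w_{i-1} \notin \NL(u)$, then property \ref{prop:dense-a} forces $|\NL(u)| = t+1$ with every stored distance strictly below $\dist_{G_0-Z}(u, w_{i-1}) = \dist_{G-Z}(u, w_{i-1}) < \dist_{G-Z}(u, v)$, and $\NL(u) \subseteq C_u$ finishes the case.
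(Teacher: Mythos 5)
Your proposal is correct and follows essentially the same route as the paper: the same congestion count via $\{v\}\cup N^{-}_{G_0}(v)$, and for property \ref{prop:improved-dominate} the same case analysis (your split on whether the shortest $G-Z$ path uses a non-alive edge is equivalent, via \Cref{ass:path-weights}, to the paper's split on $\dist_{G-Z}(u,v)$ versus $\dist_{G_0-Z}(u,v)$), with the key step in both being the use of items \ref{prop:outdeg-lb} and \ref{prop:top-t} of \Cref{lem:pre-init} to manufacture $t$ alive out-edges of $w_{i-1}$ that are strictly lighter than the non-alive edge. The only cosmetic difference is your tally of the $t+1$ witnesses in that subcase ($u$, $w_{i-1}$, and the $t$ alive out-neighbors, versus the paper's $w_{i-1}$ together with its $c\ge t$ alive out-neighbors).
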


Note that the statement above does not contain an analogue of item \ref{prop:dense-b} of the property \ref{prop:dense-dominate2} of \cref{lem:dense-preprocessing}, as it will not be required in the improved algorithm anymore.

\begin{proof}
	The property \ref{prop:improved-congestion} results from the fact that for a vertex $v$ to be included in $\NL'(u)$, either~$v$ must have been included in $\NL(u)$ or one of its in-neighbours in $G_0$ must have been included in $\NL(u)$. However, we have that $|\{v\} \cup N^{-}_{G_0}(v)| \le p+1$ and each of these vertices was included in at most $p^2$ near-lists, so we conclude that $v$ appears in at most $p^3+p^2$ improved near-lists $\NL'(u)$.
	
	The property \ref{prop:improved-paths} is clear from the construction.
	
	For the property \ref{prop:improved-dominate}, let us consider a vertex $v$ that is reachable from $u$ in $G - Z$, but $v \not\in \NL'(u)$. Let $\NL(u) = \{(v_0, d_0), \ldots, (v_z, d_z)\}$. First, let us observe that paths corresponding to elements of $\NL'(u)$ were chosen as the shortest paths to distinct vertices from a superset of paths corresponding to elements of $\NL(u)$. Therefore, $d'_i \le d_i$ for any $0 \le i \le z$.
	
	Let us consider the following case distinction.
	\begin{enumerate}[(1)]
		\item $\dist_{G - Z}(u, v) = \dist_{G_0 - Z}(u, v)$.
		
		\begin{enumerate}
			\item $v \in \NL(u)$.
			
			In this case, based on the part \ref{prop:dense-b} of property \ref{prop:dense-dominate2} of \cref{lem:dense-preprocessing}, we know that $\NL(u)$ contains a pair $(v, d)$, where $d \le \dist_{G_0 - Z}(u, v) = \dist_{G - Z}(u, v)$. Then, in the construction of $\NL'(u)$, $v$ was added to $C_u$ at a distance at most $d$, but $v$ was not chosen as corresponding to one of the smallest $(t+1)$ distances. This means that $k=t$ and $d'_t<d \le \dist_{G - Z}(u, v)$.
		
			\item $v \not\in \NL(u)$.
		
			In this case, based on the part \ref{prop:dense-a} of property \ref{prop:dense-dominate2} of \cref{lem:dense-preprocessing}, we obtain $|\NL(u)| = t+1$ and $d'_t \le d_t < \dist_{G_0 - Z}(u, v) = \dist_{G - Z}(u, v)$.
		\end{enumerate}
		
		\item $\dist_{G - Z}(u, v) < \dist_{G_0 - Z}(u, v)$.
		
		We note that as $G_0$ is a subgraph of $G$, it is the case that $\dist_{G - Z}(u, v) \leq \dist_{G_0 - Z}(u, v)$, so this case is the complement of the previous one. Let us consider a shortest path $P = u \to v$ in $G - Z$. As $\dist_{G - Z}(u, v) < \dist_{G_0 - Z}(u, v)$, it is not contained in $G_0 - Z$, so it has some first edge $xy$ that is not in $G_0$, that is, $xy$ is not alive. Note that as $xy$ is the first edge on this path that is not in $G_0$, we have that $\dist_{G_0 - Z}(u, x) = \dist_{G - Z}(u, x)$. We again consider two cases, this time dependent on whether $x \in \NL(u)$:
		
		\begin{enumerate}
			\item $x \in \NL(u)$.
			
			In this case, $x=v_i$ for some $0 \le i \le z$ and $d_i \le \dist_{G_0 - Z}(u, x)$ by the part \ref{prop:dense-b} of property \ref{prop:dense-dominate2} of \cref{lem:dense-preprocessing}. In the construction of $\NL'(u)$, we considered all paths formed by extending the path corresponding to $(x, d_i)$ by a single alive edge. Their lengths are $d_i + w(x y_j)$ for some $y_1, \ldots, y_c$.
        Since $xy$ is not alive and $y\notin Z_0$, by item~\ref{prop:outdeg-lb} of~\Cref{lem:pre-init}, we have $c\ge t$.
        Similarly, as $xy$ is not alive and $y\notin Z_0$, by item~\ref{prop:top-t} of~\Cref{lem:pre-init}, we have $w(x y_j) < w(xy)$ for each $1 \le j \le c$.
        Therefore, we have
        \begin{align*}
          d_i + w(x y_j)
          &< \dist_{G_0 - Z}(u, x) + w(xy)\\
          &= \dist_{G - Z}(u, x) + w(xy)\\
          &= \dist_{G - Z}(u, y)\\
          &\le \dist_{G - Z}(u, v).
        \end{align*}
        Hence, $d_i$ and $d_i + w(xy_j)$ for $1 \le j \le c$ form at least $t+1$ distances smaller than $\dist_{G - Z}(u, v)$ corresponding to $t+1$ distinct vertices, proving that $|\NL'(u)| = t+1$ and $d'_t < \dist_{G - Z}(u, v)$. 
			
			\item $x \not\in \NL(u)$.
			
			In this case, based on part \ref{prop:dense-a} of property \ref{prop:dense-dominate2} of \cref{lem:dense-preprocessing} we have that $|\NL(u)| = t+1$ and $d'_t \le d_t < \dist_{G_0 - Z}(u, x) = \dist_{G - Z}(u, x) \le \dist_{G - Z}(u, v)$.
		\end{enumerate}
	\end{enumerate}
	We conclude the proof by case exhaustion.
\end{proof}

We note that the computation of the improved near-lists can be done in $\Ot(nt^2)$ work and $\Ot(1)$ depth directly from the definition using parallel BSTs.

\subsection{Discovery phase}
Having established the properties of improved near-lists $\NL'(u)$ and explained how they are computed, we are ready to present the full algorithm. It proceeds similarly as in the sparse case (\Cref{sec:sparse}), but incorporating the permanently heavy vertices, alive edges and improved near-list into the picture. Recall that at the beginning of the whole procedure, we initialize the set of permanently heavy vertices $Z_0$ ($|Z_0| \leq \Oh(nt / p)$) and the subgraph of alive edges $G_0$ by applying \cref{lem:pre-init}. Afterwards, we proceed in $\ceil{\frac{n-1}{\ell t}}$ phases, where in each phase we perform $\ell$ discovery steps, each discovering $t$ new closest vertices to $s$ (or all of them, if there are less than~$t$ of them in the last step) after contracting the previously discovered vertices into $s$. At the beginning of each phase, we run the initialization phase of \cref{lem:dense-preprocessing} on $G_0$ and $Z_0$ to produce $Z$ of size $\Oh(nt/p)$ and the near-lists $\NL(u)$, and then we improve them to $\NL'(u)$. After computing the improved near-lists, we use them to perform the discovery steps essentially in the same way as in the sparse algorithm in \cref{lem:equivalent-subgraph}: We again define $E_t(v)$ as the top-$t$ (i.e., $t$ with smallest weights) outgoing edges of $v$ in $G$ (or all of them if $\outdeg_G(v)<t$) and the following auxiliary subsets of $V$:
\begin{itemize}
	\item $Z^*=Z\setminus U$ -- the remaining heavy vertices in $G$ (where $U$ is the set of already discovered vertices in this phase), with $s\in Z^*$,
	\item $B=\{u\in V : \NL'(u)\cap U\neq\emptyset\}$ -- the vertices whose improved near-list contains a vertex discovered since the beginning of the phase,
	\item $Y=\bigcup_{b\in B\cup Z^*}\bigcup_{bv\in E_t(b), v\in V\setminus (B\cup Z^*)} \NL'(v)$ -- the vertices appearing in improved near-lists
    $\NL'(v)$ such that (1) $v$ is among the top-$t$ out-neighbors in $G$ of a vertex in $B\cup Z^*$, (2) $v\notin B\cup Z^*$, that is, $v$ is light and $\NL'(v)$ contains no vertices discovered in the current phase.
\end{itemize}

As each vertex appears in at most $\Oh(p^3)$ improved near-lists $\NL'(u)$ (as opposed to $\Oh(p)$ in the sparse algorithm) and we bound $|E_t(v)|$ with $t$ for each $v \in V$ (compared to $\Oh(1)$ for each $v \neq s$ in~\Cref{sec:sparse}), we get the following looser bound on $|Z^* \cup B \cup Y|$:
\begin{observation}\label{obs:dense-subgraph-size}
	$|Z^*\cup B\cup Y|=\Oh(nt^3/p+\ell t^3p^3)$.
\end{observation}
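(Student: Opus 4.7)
\medskip

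The plan is to bound each of the three sets $Z^*$, $B$, $Y$ separately, using precisely the three ingredients available: the heavy-set bound from \Cref{lem:dense-preprocessing}, the improved-congestion property \ref{prop:improved-congestion} of \Cref{lem:improved-properties}, and the top-$t$ cap $|E_t(v)|\le t$ that holds uniformly over $v\in V$ (including $v=s$). The argument mirrors the one used in \Cref{obs:subgraph-size}, but with two degradations: each vertex may now occur in $\Oh(p^3)$ improved near-lists instead of $\Oh(p)$ ordinary ones, and out-degrees along $E_t(\cdot)$ are bounded by $t$ rather than by $\Oh(1)$.

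First I would handle $Z^*$: since $Z^*\subseteq Z$ and $|Z|=\Oh(|\Zinit|+nt/p)$ by \Cref{lem:dense-preprocessing} applied with $\Zinit=Z_0$, and since $|Z_0|=\Oh(nt/p)$ by item~\ref{prop:small-perm} of \Cref{lem:pre-init}, we obtain $|Z^*|=\Oh(nt/p)$. Next, for $B$, observe that each $u\in B$ is witnessed by some $v\in U$ with $v\in \NL'(u)$. Summing over $v\in U$ and using property~\ref{prop:improved-congestion}, which bounds the number of improved near-lists containing a~fixed vertex by $p^3+p^2=\Oh(p^3)$, yields
\[
  |B|\;\le\;\sum_{v\in U}|\{u\in V:v\in \NL'(u)\}|\;\le\;|U|\cdot \Oh(p^3)\;=\;\Oh(\ell t p^3),
\]
where we used $|U|\le \ell t$ (at most $\ell$ discovery steps have happened in the current phase, each contracting exactly $t$ vertices).

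For $Y$, I would use the definition directly: for each $b\in B\cup Z^*$ we iterate over at most $|E_t(b)|\le t$ outgoing edges $bv$, and for each such $v\notin B\cup Z^*$ the contribution to $Y$ has cardinality $|\NL'(v)|\le t+1$ by construction of the improved near-lists. Consequently
\[
  |Y|\;\le\;|B\cup Z^*|\cdot t\cdot (t+1)\;=\;\Oh\bigl((|B|+|Z^*|)\,t^2\bigr)\;=\;\Oh\!\left(\ell t^3 p^3 + \frac{nt^3}{p}\right).
\]
Combining the three estimates gives
\[
  |Z^*\cup B\cup Y|\;\le\;|Z^*|+|B|+|Y|\;=\;\Oh\!\left(\frac{nt}{p}+\ell t p^3+\ell t^3 p^3+\frac{nt^3}{p}\right)\;=\;\Oh\!\left(\frac{nt^3}{p}+\ell t^3 p^3\right),
\]
since the $t^3$-terms dominate term-wise. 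There is no real obstacle here: all the nontrivial work has been pushed into \Cref{lem:pre-init,lem:dense-preprocessing,lem:improved-properties}; the observation is exactly the bookkeeping that converts those structural guarantees into a size bound. The only subtlety worth flagging is that we must appeal to the \emph{improved} congestion bound $\Oh(p^3)$ (not the ordinary one $\Oh(p^2)$ from \Cref{lem:dense-preprocessing}), because the set $B$ is defined via the improved near-lists $\NL'(u)$; this is precisely the point at which the extra factor $p^2$ relative to \Cref{obs:subgraph-size} enters.
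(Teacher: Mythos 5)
Your proposal is correct and follows essentially the same route as the paper's proof: bound $|Z^*|$ via the heavy-set size from \Cref{lem:dense-preprocessing}, bound $|B|$ by $|U|\cdot\Oh(p^3)$ using the improved congestion property, and bound $|Y|$ by $(t+1)\sum_{b\in B\cup Z^*}|E_t(b)|=\Oh(t^2(|B|+|Z^*|))$. The remark that one must invoke the $\Oh(p^3)$ congestion bound for the improved near-lists (since $B$ is defined via $\NL'$) is exactly the right subtlety to flag.
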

\begin{proof}
	We proceed similarly as in the proof of \cref{obs:subgraph-size}. We have $|Z^*| \le |Z| = \Oh(nt/p)$. As each vertex appears in at most $\Oh(p^3)$ improved near-lists, we deduce $|B| = \Oh(|U|p^3) = \Oh(\ell tp^3)$. Consequently, $|Y| \le (t+1) \sum_{b \in B \cup Z^*} |E_t(b)| = \Oh(t^2(nt/p + \ell tp^3)) = \Oh(nt^3/p+\ell t^3p^3)$.
\end{proof}

As the next step, we claim the following.
\begin{lemma} \label{lem:dense-equivalent-subgraph}
	Let $G_t = (V, \bigcup_{v \in V} E_t(v))$ and $H = G_t[Z^* \cup B \cup Y]$. Let $u\in \near{t}{s}$. Then $u\in V(H)$ and $\dist_G(s,u)=\dist_H(s,u)$.
\end{lemma}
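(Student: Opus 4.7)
The plan is to mirror the proof of \Cref{lem:equivalent-subgraph} closely, replacing the original near-lists by the improved ones $\NL'(\cdot)$ and using property~\ref{prop:improved-dominate} of \Cref{lem:improved-properties} in place of property~\ref{prop:dominate2} of \Cref{lem:sparse-preprocessing}. Let $P$ denote the shortest $s \to u$ path in $G$; it suffices to show $P \subseteq G_t$ and $V(P) \subseteq Z^* \cup B \cup Y$, since $H \subseteq G$ gives the trivial inequality $\dist_G(s,u) \leq \dist_H(s,u)$. The containment $P \subseteq G_t$ is proved verbatim as before: if some $ab \in P$ were not in $E_t(a)$, then the $t$ edges $ac \in E_t(a)$ would satisfy $w(ac) < w(ab)$ by \Cref{ass:path-weights}, producing $t$ vertices strictly closer to $s$ than $u$ and contradicting $u \in \near{t}{s}$.

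For the second inclusion, I would fix any $v \in V(P) \setminus (B \cup Z^*)$, let $x$ be the latest vertex of $V(P) \cap (B \cup Z^*)$ preceding $v$ (well-defined since $s \in Z^*$), and let $y$ be its successor on $P$. Then $xy \in E_t(x)$ by the previous step, and $y \notin B \cup Z^*$ by maximality of $x$, so $\NL'(y) \subseteq Y$ by the definition of $Y$. The task reduces to showing $v \in \NL'(y)$.

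Suppose for contradiction $v \notin \NL'(y)$, and consider the subpath $Q = y \to v$ of $P$, which is itself a shortest $y \to v$ path in $G$ by \Cref{ass:prefix}. Maximality of $x$ gives $V(Q) \cap (B \cup Z^*) = \emptyset$, and since $V(Q) \subseteq V$ is disjoint from $U$, we have $V(Q) \cap Z = V(Q) \cap Z^* = \emptyset$; hence $Q \subseteq G - Z$ and $\dist_{G - Z}(y, v) = \dist_G(y, v) = w(Q)$. As $y \notin Z$, property~\ref{prop:improved-dominate} of \Cref{lem:improved-properties} applies and yields $|\NL'(y)| = t + 1$ with $d'_t < \dist_{G - Z}(y, v)$. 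Combined with property~\ref{prop:improved-paths}, each $(v'_i, d'_i) \in \NL'(y)$ corresponds to an actual $y \to v'_i$ path in $G$ of weight $d'_i$; prepending the $s \to y$ prefix of $P$ yields
\[
  \dist_G(s, v'_i) \,\leq\, \dist_G(s, x) + w(xy) + d'_i \,<\, \dist_G(s, x) + w(xy) + \dist_G(y, v) \,=\, \dist_G(s, v) \,\leq\, \dist_G(s, u).
\]
Because $s$ has no incoming edges in $G$ or $G_0$, $s$ appears in no $\NL(u')$ nor in any single-alive-edge extension forming $\NL'(u')$ for $u' \neq s$; hence none of the $t+1$ vertices $v'_i$ equals $s$. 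This exhibits $t + 1$ distinct non-source vertices strictly closer to $s$ than $u$, contradicting $u \in \near{t}{s}$.

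The main obstacle I anticipate is the bookkeeping around the two graphs $G$ and $G_0$: in the sparse setting they coincided, and property~\ref{prop:dominate2} of \Cref{lem:sparse-preprocessing} argued reachability in $G_0 - Z = G - Z$. Here $G_0$ is a strict subgraph of $G$, but the improved near-lists from \Cref{lem:improved-properties} restore the analogous dominance property directly in $G - Z$, which is precisely what allows the sparse proof to be ported with only the minor adjustments described above.
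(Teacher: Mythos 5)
Your proposal is correct and takes exactly the paper's route: the paper's own proof of this lemma simply observes that the argument of \Cref{lem:equivalent-subgraph} only uses properties \ref{prop:path-length} and \ref{prop:sparse-a}, whose analogues for improved near-lists are items \ref{prop:improved-paths} and \ref{prop:improved-dominate} of \Cref{lem:improved-properties}, and declares the sparse proof to apply verbatim -- which is what you have written out. The only place where your write-up is slightly looser than \Cref{lem:equivalent-subgraph} is in ``prepending the $s\to y$ prefix of $P$'': the $y\to v_i'$ path of weight $d_i'$ lives in the phase-start graph rather than the current contracted $G$, so one formally concatenates in the phase-start graph and then invokes \Cref{lem:contract} (contraction preserves distances from $s$, and $v_i'\notin U$ since $y\notin B$) to get $\dist_G(s,v_i')\le \dist_G(s,x)+w(xy)+d_i'$ -- precisely the step already present in the sparse proof you are mirroring.
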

\begin{proof}
  This lemma is an analogue of \cref{lem:equivalent-subgraph} from the sparse case, and its proof proceeds exactly as the proof of~\Cref{lem:equivalent-subgraph}. The properties of near-lists from \cref{lem:sparse-preprocessing} that the proof uses are \ref{prop:path-length} and the part \ref{prop:sparse-a} of \ref{prop:dense-dominate2}.\footnote{In particular, the proof of \cref{lem:equivalent-subgraph} does not require the part \ref{prop:sparse-b} of \ref{prop:dense-dominate2}, whose analogue for improved near-lists is not proved by us in \cref{lem:improved-properties}.
  } As both of these are also satisfied for improved near-lists (see items \ref{prop:improved-paths} and \ref{prop:improved-dominate} of \cref{lem:dense-equivalent-subgraph}), the proof of \cref{lem:equivalent-subgraph} applies here verbatim.
\end{proof}

\Cref{lem:dense-equivalent-subgraph} enables us to efficiently perform a single discovery step in low depth using \cref{lem:nearest-neighbors} for $H = G_t[Z^* \cup B \cup Y]$. Based on \cref{obs:dense-subgraph-size}, we have $|V(H)| = \Oh(nt^3/p+\ell t^3p^3)$. Since each vertex of $H$ has outdegree at most $t$, we also have $|E(H)| \le t |V(H)|$, so applying \cref{lem:nearest-neighbors} to $H$ takes $\Ot(nt^5/p+\ell t^5p^3)$ work and $\Ot(1)$ depth. Similarly to \cref{lem:equivalent-subgraph}, the construction of $H$ itself can be performed within near-optimal $\Ot(|E(H)|)$ work and $\Ot(1)$ depth by precomputing ``reverse'' improved near-lists during preprocessing. After each discovery step, we perform the corresponding contraction into $s$ of $t$ newly found vertices closest to $s$.

This concludes the description and the proof of the correctness of the improved algorithm for non-sparse graphs.
It remains to summarize the efficiency of the algorithm.

\subsection{Running time analysis} \label{subsec:dense-time}
The initialization of permanently heavy vertices and alive edges takes $\Ot(m)$ work and $\Ot(n/t)$ depth according to \cref{lem:pre-init}. They are updated after each discovery step. Since there are at most $\floor{\frac{n-1}{t}}$ such updates, all these updates take $\Ot(m)$ work and $\Ot(n/t)$ depth in total, according to \cref{lem:pre-init-upd}. The contractions themselves take $\Ot(m)$ work and $\Ot(n/t)$ depth in total as well. 

At the beginning of each phase, we perform the computation of near-lists and improved near-lists. The computation of near-lists takes $\Ot(nt^2)$ work and $\Ot(t)$ depth, while their improvement takes $\Ot(nt^2)$ work and $\Ot(1)$ depth. Since there are $\Oh(n/(\ell t))$ phases, the computation of these objects contributes $\Ot(n^2t/\ell)$ to the total work and $\Ot(n/\ell)$ to the total depth.

We also perform $\Oh(n/t)$ discovery steps, where each of them takes $\Ot(nt^5/p+\ell t^5p^3)$ work and $\Ot(1)$ depth, hence in total they contribute $\Ot(n^2t^4/p + n \ell t^4 p^3)$ to the total work and $\Ot(n/t)$ to the total depth. 

Summing up, the total work required by our algorithm is $\Ot(m+n^2t^4/p + n \ell t^4 p^3 + n^2t/\ell)$, while its total depth is $\Ot(n/t + n/\ell)$.

First, let us note that it is suboptimal to choose $t > \ell$, because decreasing $t$ to $\ell$ also decreases the work and does not increase depth asymptotically. The work is minimized for a value of $p$ such that $n^2t^4/p = n \ell t^4 p^3$; hence $p = n^{1/4} / \ell^{1/4}$. Since $1 \le \ell \le n$, we note that $p$ defined this way will also satisfy $1 \le p \le n$, hence it is always a valid choice. 
Plugging $p$ into the expression bounding the work, it becomes $\Ot(m+n^{7/4}t^4 \ell^{1/4} + n^2 t / \ell)$.
This expression is minimized for a value of $\ell$ such that $n^{7/4}t^4 \ell^{1/4} = n^2 t / \ell$; hence we obtain $\ell = n^{1/5} / t^{12/5}$. We note that such a choice always leads to $\ell t \le n$, which is a required condition for the number of phases to be valid.
Since $t \leq \ell$, we have $t \leq n^{1/17}$.
Plugging $\ell$ into the expression for work, it becomes $\Ot(m+n^{9/5} t^{17/5})$, which shows that for any value of a parameter $1 \le t \le n^{1/17}$, we have an algorithm whose work is $\Ot(m+n^{9/5} t^{17/5})$ and whose depth is $\Ot(n / t)$.

\section{Shortest paths with exponential and lexicographic weights}
\label{sec:large-weights}

In this section, we will showcase an~application of \Cref{thm:main} to the graphs whose edge weights and path weights are inherently polynomially unbounded: single-source lexicographic bottleneck shortest paths and single-source binary shortest paths. Namely, we prove the following theorems.
\begin{theorem}
  \label{thm:lex-sssp}
  Let $G = (V, E)$ be a~weighted digraph with weights in $\N$.
  Single-source lexicographic bottleneck shortest paths in $G$ can be computed within $\Ot(m + n^{9/5}t^{21/5})$ work and $\Ot(n / t)$ depth for $t \in [1, n^{1/21}]$.
\end{theorem}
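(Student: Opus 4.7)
The plan is to reduce single-source lexicographic bottleneck shortest paths to a~weighted SSSP instance on the same graph where every edge of original weight $w \in \N$ is assigned the ``symbolic weight'' $(n+1)^w$, and then invoke the algorithm of \Cref{thm:main}. Because every simple path of $G$ has at most $n-1$ edges, each coefficient of the polynomial $\sum_{e \in P} (n+1)^{w(e)}$ along such a~path $P$ remains strictly less than $n+1$, so no carries are generated between exponents; consequently, summed symbolic weights order paths in exactly the same way as the corresponding lex-bottleneck weights. Once the algorithm of \Cref{thm:main} computes the symbolic SSSP distances, the lex-bottleneck distance to each vertex can be read off directly from the polynomial representation of its symbolic distance.

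Since the symbolic weights are exponentially large in~$n$, I would not store them as bit-strings but rather as sparse polynomial representations: each symbolic weight is maintained in a~parallel balanced BST storing the non-zero (exponent, coefficient) pairs, keyed by exponent. This enables the comparison--addition primitives used by the algorithm of \Cref{thm:main} to be implemented as follows. Addition of two path-weight polynomials is a~merge of sorted BSTs, which takes $\Ot(k)$ work and $\Ot(\log k)$ depth, where $k$ is the larger of the two representation sizes; similarly, comparison of two path-weight polynomials is a~lexicographic scan over the two sorted exponent sequences starting from the largest exponent, implementable in $\Ot(k)$ work and $\Ot(\log k)$ depth.

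To obtain the final work bound, I would carefully bound the representation sizes $k$ that arise throughout the algorithm's execution. The dominant $\Ot(n^{9/5}t^{17/5})$ term in the work of \Cref{thm:main} stems from applying the nearest-neighbors subroutine (\Cref{lem:nearest-neighbors}) to the auxiliary graph $H = G_t[Z^* \cup B \cup Y]$ during the $\Oh(n/t)$ discovery steps. For these calls, I would argue that each path weight involved corresponds to a~path in the original graph $G$ of controlled length. Two observations contribute to the bound: first, because the contracted source $s$ has no incoming edges, every path in $H$ starting at $s$ uses at most one ``inherited'' edge leaving~$s$ -- whose weight encodes $\dist_G(s, x) + \wei(xv)$ for some previously discovered vertex $x$ -- while all remaining edges correspond to single edges of the original graph; second, $H$ itself has only $\Ot(nt^3/p + \ell t^3 p^3)$ vertices. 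Combined with the parameter choices from \Cref{subsec:dense-time}, this yields a~sufficiently small overhead per comparison to multiply through the $\Ot(n^{9/5} t^{17/5})$ work term, producing the claimed work bound $\Ot(m + n^{9/5} t^{21/5})$ for $t \in [1, n^{1/21}]$ while preserving the depth $\Ot(n/t)$ up to polylogarithmic factors.

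The main obstacle is precisely this representation-size analysis for path weights arising inside the nearest-neighbors subroutine. A~naive bound would charge each comparison the full length of the underlying path in the original graph -- which can be as large as $n$ -- yielding a~word-RAM simulation that is too expensive. Overcoming this requires careful amortization across the $\Oh(n/(\ell t))$ phases of the algorithm together with sharing of the polynomial representations of the $\dist_G(s, \cdot)$ values that appear as weights of inherited edges; in this way, even though these representations themselves may be large, their sizes contribute to the total work only through the relatively few distinct comparisons they participate in.
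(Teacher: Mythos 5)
Your high-level plan — encode lex-bottleneck weights as formal sums so that the comparison--addition algorithm of \Cref{thm:main} can be run verbatim, and then bound the overhead of each weight operation — is the same as the paper's (cf.\ \Cref{lem:sssp-for-effective-path-weights}). The encoding itself is fine up to a minor fix (paths arising in \Cref{lem:nearest-neighbors} and after contraction can have up to $\Theta(n)$ edges of $G$, so the base must be chosen as $n^{O(1)}$ rather than exactly $n+1$). The gap is in the cost of the primitives. With your sorted-BST polynomial representation, a comparison costs $\Ot(k)$ work where $k$ is the representation size, and $k$ can be $\Theta(n)$: every inherited edge $sv$ of the contracted graph carries a weight encoding $\dist_G(s,x)+\wei(xv)$, i.e., a path with up to $\Theta(n)$ distinct edge-weight values. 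In each discovery step, the sorting inside \Cref{lem:nearest-neighbors} already performs $\Ot(t^2)$ comparisons at the source vertex alone, each involving such a weight, giving $\Omega(nt^2)$ work per step and $\Omega(n^2 t)$ overall — which exceeds $\Ot(n^{9/5}t^{21/5})$ for every admissible $t \le n^{1/21}$. You correctly identify this as the main obstacle, but ``amortization and sharing of representations'' is not a mechanism: a lexicographic scan over two sorted exponent lists still pays for their (possibly long) common prefix unless the representation supports constant-time equality tests of whole sub-structures. That is exactly what the paper supplies in \Cref{lem:dynamic-tree-comparison}: persistent logarithmic-depth trees whose nodes carry canonical identifiers (à la Karp--Miller--Rosenberg), so that two weights are compared by a single $\Ot(1)$-depth descent that skips isomorphic subtrees, and a weight is extended by path-copying only an $\Ot(k)$-size prefix. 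Without this (or an equivalent) data structure, the claimed work bound is not achieved.

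Two further points you would need to repair even granting cheap comparisons. First, your addition primitive is charged $\Ot(k)$ for $k$ the \emph{larger} operand, which is again $\Theta(n)$ for sums involving $\dist_G(s,\cdot)$; the paper instead insists that adding $k$ elements of the edge-weight set to an existing path weight costs $\Ot(k)$ irrespective of the existing weight's size, and then proves the structural fact that every addition performed by the algorithm has this form with $k\le t$ — in the expression $d^i(u,v)+d^i(v,y)$ of \Cref{lem:nearest-neighbors}, the second summand certifies a path of at most $t$ hops that avoids $s$, hence is a sum of at most $t$ genuine edge weights. Second, the final bound is not obtained by multiplying $\Ot(n^{9/5}t^{17/5})$ by a per-operation overhead: the $\Ot(t)$-cost additions change the discovery-step work to $\Ot(nt^6/p+\ell t^6p^3)$, and the parameters are re-optimized ($\ell = n^{1/5}/t^{16/5}$, $p = n^{1/4}/\ell^{1/4}$), which is where the constraint $t\le n^{1/21}$ and the exponent $21/5$ come from.
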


\begin{theorem}
  \label{thm:power2-sssp}
  Let $G = (V, E)$ be a~weighted digraph, where each edge $e$ has weight of the form $2^{w_e}$, $w_e \in \N$, and its weight is given on input as $w_e$.
  Single-source distances in $G$ can be computed within $\Ot(m + n^{9/5}t^{21/5})$ work and $\Ot(n / t)$ depth for $t \in [1, n^{1/21}]$.
\end{theorem}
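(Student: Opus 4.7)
The plan is to implement the algorithm of \Cref{thm:main} directly on the parallel word RAM, using a sparse representation of the exponentially large path weights that supports the comparison-addition operations the algorithm performs with low work and $\Ot(1)$ depth. Since \Cref{thm:main} operates in the comparison-addition model, replacing each arithmetic primitive by a word-RAM implementation of cost $c$ in work blows up the total work by a factor $c$ while leaving the depth essentially unchanged. The task therefore reduces to choosing a representation and bounding $c$ in the regime of interest.

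The representation would rely on one structural observation: every weight manipulated by the algorithm is a sum of at most $n-1$ edge weights of the form $2^{w_e}$ with $w_e \in [0, m]$. Indeed, each such weight encodes the length of an $s$-to-$v$ shortest path in some contracted version of $G$, and these paths can be chosen simple since all weights are non-negative. A standard induction then shows that the binary expansion of any sum of $k$ powers of two has at most $k$ one-bits. Consequently, each weight fits in a sorted set of at most $n-1$ one-bit positions from $[0, m]$, which I would store in a parallel balanced BST augmented with Merkle-style subtree hashes, so that subtree equality can be tested in $\Ot(1)$ and kept up to date under BST modifications at polylogarithmic overhead.

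Given this representation, I would implement the three primitives the algorithm uses: (i) \emph{adding an edge weight $2^{w_e}$ to a path weight $W$}, by locating via a BST range query the smallest position $j \ge w_e$ where $W$ has a zero bit, excising the run of consecutive one-bits $w_e, \ldots, j-1$ from $W$, and inserting $j$; (ii) \emph{adding two arbitrary path weights}, by merging the two sorted bit-sets and iteratively resolving carry collisions; and (iii) \emph{comparing two weights}, by using the Merkle hashes to descend the BSTs and find the highest bit at which they disagree. Primitives (i) and (iii) take $\Ot(1)$ work and depth per invocation, while primitive (ii) takes work linear in the smaller operand size. Because each primitive adds only $\Ot(1)$ to the depth, the $\Ot(n/t)$ depth bound of \Cref{thm:main} is preserved.

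The main obstacle will be bounding the extra arithmetic work tightly enough to match the claimed $\Ot(m + n^{9/5}t^{21/5})$ bound. The Dijkstra-type parts of the algorithm (near-list preprocessing, maintenance of permanently heavy vertices and alive edges, contractions) only use primitives (i) and (iii), so they incur only polylogarithmic overhead. The delicate part concerns the invocations of \Cref{lem:nearest-neighbors} inside the discovery steps, where general weight-to-weight additions of primitive (ii) arise. To keep the amortized cost of these additions under control, I would upper-bound the representation sizes of the weights involved inside each call by the number of vertices of the auxiliary graph $H = G_t[Z^* \cup B \cup Y]$ on which \Cref{lem:nearest-neighbors} is invoked; with the parameter choices of \Cref{subsec:dense-time}, this quantity is $\Ot(n^{4/5}t^{12/5})$. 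Re-running the work accounting of \Cref{subsec:dense-time} with this refined per-operation cost then yields the claimed bound, subject to $t \le n^{1/21}$, which is exactly the range in which $n^{9/5}t^{21/5}$ stays below the trivial $\Ot(n^2)$ of parallel Dijkstra.
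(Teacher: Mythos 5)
Your overall strategy -- run the comparison-addition algorithm of \Cref{thm:main} verbatim and pay only for a word-RAM implementation of the weight primitives via a sparse (set-of-set-bits) representation -- is exactly the paper's strategy. However, the step where you bound the cost of the weight-to-weight additions inside \Cref{lem:nearest-neighbors} is where the argument breaks, and it is precisely the step that determines the exponent $21/5$ and the range $t \le n^{1/21}$. You propose to charge each addition $d^i(u,v)+d^i(v,y)$ a cost equal to the representation size of the operands and to bound that size by $|V(H)|=\Ot(n^{4/5}t^{12/5})$. This fails on two counts. First, the bound itself is false for weights of paths through the source: after contractions, the edges leaving $s$ in $H$ carry weights that are already sums of up to $n-1$ powers of two, so $d^i(s,\cdot)$ can have $\Theta(n)$ set bits regardless of $|V(H)|$. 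Second, even granting the bound, the accounting does not close: one invocation of \Cref{lem:nearest-neighbors} on $H$ performs $\Ot(|V(H)|t^2)$ additions, so a per-addition cost of $\Ot(|V(H)|)$ gives $\Ot(|V(H)|^2t^2)$ work per discovery step and $\Ot(n^{13/5}t^{29/5})$ overall -- far above $n^{9/5}t^{21/5}$ and above even $n^2$. The correct observation is structural, not size-based: in $d^i(u,v)+d^i(v,y)$ with $u\neq v$, the path certifying the \emph{second} summand has at most $t$ hops and uses no edge leaving $s$, hence $d^i(v,y)$ decomposes as a sum of at most $t$ \emph{original} edge weights; the addition primitive only needs to append those $\le t$ powers of two to the (possibly huge) first operand, costing $\Ot(t)$ work. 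It is this uniform $\Ot(t)$ overhead that, after re-optimizing $\ell$ and $p$ (now $\ell=n^{1/5}/t^{16/5}$), yields $\Ot(n^{9/5}t^{21/5})$ work under $t\le\ell$, i.e., $t\le n^{1/21}$. Your closing justification of the range (comparison with $n^2$) is also not where the constraint comes from, though it happens to give the same threshold.

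There are also implementation gaps in the primitives themselves. (1) Comparison by descending two Merkle-hashed \emph{balanced BSTs} does not work unless the tree shape is a canonical function of the stored set: two representations of different numbers built by different insertion histories have incomparable subtree decompositions, so subtree hashes cannot guide a joint descent to the highest differing bit. You need a canonical recursive bisection of the bit-position range (the paper uses such a tree with deterministic KMR-style identifiers, which also avoids randomization). (2) ``Iteratively resolving carry collisions'' after merging two sparse bit-sets can cascade sequentially through $\Theta(k)$ positions (e.g., adding $3$ to $2^{k+1}-1$), so it does not give $\Ot(1)$ depth; a carry-lookahead scheme on the tree representation is needed. (3) The primitives must be non-destructive: the same path weight $d_i$ is extended by many edges in parallel, so excising bits in place from the BST of $W$ is not admissible without persistence (path copying). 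Each of (1)--(3) is fixable, but as written they are asserted rather than established; the first gap (the work accounting) is the substantive one.
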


It turns out that both problems can be solved efficiently via the following unified framework: First, we observe that all path weights appearing in the runtime of the algorithm can be manipulated and compared effectively -- they can be added to each other in the setting of this algorithm with additional $\Ot(t)$ work and $\Ot(1)$ depth overhead, and they can be compared with $\Ot(1)$ work and depth overhead.
Therefore, given this efficient implementation of path weight operations, we can simply invoke the algorithm of \Cref{thm:main} to solve any instance of binary or lexicographic bottleneck shortest paths efficiently.
The details follow below.

Suppose that $\M = (S, +, 0, \leq)$ is a~totally ordered commutative monoid (representing possible path weights), and $W \subseteq S$ is a finite set of permissible edge weights.
We say that $\M$ \emph{efficiently supports $W$} if there exists a~parallel data structure maintaining representations of elements of $\M$ such that:
\begin{itemize}
    \item a representation of $0$ can be constructed in work and depth $\Ot(1)$,
    \item given a representation of $a \in \M$ and $k$ elements $w_1, \ldots, w_k$ of $W$, the representation of $a+w_1+\ldots+w_k$ can be constructed in work $\Ot(k)$ and depth $\Ot(1)$, \emph{without} destroying the representation of $a$, and
    \item elements of $\M$ can be compared in work and depth $\Ot(1)$.
\end{itemize}
The data structure operates under the assumption that each stored element of $\M$ is a~sum of at most $n^{O(1)}$ elements of $W$; then, the $\Ot(\cdot)$ notation hides factors poly-logarithmic in $n$ and $|W|$.

Consider the case of single-source binary shortest paths.
Given an~$n$-vertex, $m$-edge input graph with each edge $e$ of weight $2^{w_e}$, we choose $\M_{\rm bin} = \N$ (i.e., path weights are non-negative integers) and $W_{\rm bin} = \{2^i \,\mid\, i \in \{0, 1, \ldots, m^2-1\}\}$ (i.e., edge weights are integer powers of two smaller than $2^{m^2}$).
The condition that edge weights are smaller than $2^{m^2}$ can be ensured in $\Ot(n + m)$ time -- without changing the shape of the single-source shortest paths tree -- by applying the following observation repeatedly: Suppose that $L, \delta \in \N$ are integers such that $\delta \geq m$ and each $w_e$ is either at most $L$ or at least $L + \delta$.
Then all input values $w_e$ greater than or equal to $L + \delta$ can be decreased by $\delta - (m - 1)$ without changing the shape of the structure of shortest paths.
Similarly, instances of lexicographic bottleneck shortest paths are modeled by choosing $\M_{\rm lex} = (S, \cup, \emptyset, \leq_{\rm lex})$, where $S$ is the family of finite multisets of non-negative integers and $\leq_{\rm lex}$ is the lexicographic comparison of two multisets.
Formally, we say that $A \leq_{\rm lex} B$ if one of the following holds: $A = \emptyset$, $\max A < \max B$, or $\max A = \max B$ and $A \setminus \{\max A\} \leq_{\rm lex} B \setminus \{\max B\}$.
We also choose $W_{\rm lex} = \{\{0\}, \{1\}, \ldots, \{m - 1\}\}$.
Input edge weights can be assumed to be smaller than $m$ after a~straightforward initial preprocessing of the graph.

We delay proving that $\M_{\rm bin}$ (and, respectively, $\M_{\rm lex}$) efficiently supports $W_{\rm bin}$ (resp., $W_{\rm lex}$) until later in this section.
Instead, we will first focus on the following result:

\begin{lemma}
    \label{lem:sssp-for-effective-path-weights}
    Suppose $\M$ efficiently supports $W$.
    Then a~single-source shortest path tree in a~directed graph with edge weights in $W$, where the path weights are evaluated in $\M$, can be computed within $\Ot(m + n^{9/5}t^{21/5})$ work and $\Ot(n/t)$ depth for $t \in [1, n^{1/21}]$.
\end{lemma}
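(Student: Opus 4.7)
The plan is to invoke the algorithm of \Cref{thm:main} on the input graph $G$ while implementing all operations on path weights through the data structure provided by the efficient-support assumption on $\M$. My key structural observation is that, because $s$ has no incoming edges (\Cref{sec:basic}), every weight touched during the execution --- edge weights of the (possibly contracted) current graph, tentative and final distances in Dijkstra-like subroutines, and intermediate distances computed inside the nearest-neighbors routine of \Cref{lem:nearest-neighbors} --- can be written in the form $a + w_1 + w_2 + \cdots + w_k$, where $a \in \M$ is either $0$ or a~previously computed distance $\dist(s,x)$ from $s$ (coming from a single ``contracted'' first hop of an $s$-tailed path), and $w_1, \ldots, w_k \in W$ are weights of original edges of $G$ (none of them $s$-incoming, since $s$ has no predecessors). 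Each weight will be maintained simultaneously in this structured form and in its fully evaluated $\M$-form, updated in lockstep; because every such weight is the weight of a path in $G$, the data structure's assumption that stored elements are sums of $n^{O(1)}$ elements of $W$ is satisfied.

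I~then verify that every operation required by the algorithm of \Cref{thm:main} can be implemented through the efficient-support primitives under this representation. Comparisons are performed in $\Ot(1)$ work and depth on the cached $\M$-form. Extending a weight by a single original edge (as in every Dijkstra-relaxation and in contraction-edge creation via \Cref{def:contract}) simply appends a new $W$-element to the structured list, at cost $\Ot(1)$. The interesting case is \emph{combining} two weights $a_1 + \sum w^{(1)}_i$ and $a_2 + \sum w^{(2)}_j$, which arises inside the nearest-neighbors routine when concatenating subpaths. Crucially, at most one of $a_1, a_2$ is nonzero: the concatenation of two paths that the algorithm ever touches passes through $s$ at most once. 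So the combined weight can be formed by the primitive $a + w_1 + \cdots + w_k$, using the nonzero base (if any) and the concatenated list of basic elements, at cost $\Ot(k_1 + k_2)$ work and $\Ot(1)$ depth.

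Finally, I~propagate this overhead through the running-time analysis of \Cref{subsec:dense-time}. Operations outside the nearest-neighbors subroutine are all extensions or comparisons with $\Ot(1)$ overhead, so their work and depth bounds are unchanged. Inside a discovery step, \Cref{lem:nearest-neighbors} is applied to the auxiliary graph $H$ where the relevant paths have length at most $t$, so each combination costs $\Ot(t)$. This raises the cost of a single invocation of \Cref{lem:nearest-neighbors} on $H$ from $\Ot(|V(H)|\,t^2)$ to $\Ot(|V(H)|\,t^3)$, turning the dominant work term $\Ot(n^2 t^4/p + n\ell t^4 p^3)$ into $\Ot(n^2 t^5/p + n\ell t^5 p^3)$. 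Re-optimizing $p = n^{1/4}/\ell^{1/4}$ and then $\ell = n^{1/5}/t^{16/5}$, subject as before to $\ell \geq t$ (which further restricts $t$ to $[1, n^{1/21}]$), yields total work $\Ot(m + n^{9/5} t^{21/5})$ and total depth $\Ot(n/t)$, as required.

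The main obstacle is justifying the ``at most one nonzero $\M$-base'' invariant across \emph{all} subroutines of the algorithm behind \Cref{thm:main}, and in particular confirming that the nearest-neighbors routine from \Cref{lem:nearest-neighbors} never needs to add two genuinely independent $\M$-elements. This should follow by inspection of each point where path weights interact --- crucially using the source-has-no-incoming-edges assumption to rule out the scenario of summing two distinct $s$-tailed distances.
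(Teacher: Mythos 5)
Your proposal is correct and follows essentially the same route as the paper's proof: you isolate the nearest-neighbors combination $d^i(u,v)+d^i(v,y)$ as the only place where two genuine path weights are added, use the no-incoming-edges-to-$s$ property together with the $\leq t$-hop bound to argue the second summand is a sum of at most $t$ elements of $W$, charge an $\Ot(t)$ work overhead to the discovery steps, and re-optimize with the same $p=n^{1/4}/\ell^{1/4}$ and $\ell=n^{1/5}/t^{16/5}$. This matches the paper's argument in both structure and detail.
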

\begin{proof}
	The key difference between this statement and the \cref{thm:main} is the fact that the arithmetic operations on $\M$ guaranteed by efficiently supporting $W$ are not as general as operations on $\R$ in the addition-comparison model. Namely, given $a, b \in \R$, the addition-comparison model assumes that we can compute $a+b$ in $\Oh(1)$ time, while given $c, d \in \M$, we are not provided with the operation of efficiently constructing the value $c+d$. However, this can be remedied if we assume that $d$ admits a~representation as a sum of a small number of elements of $W$. Hence, in this proof we will revisit all steps of the algorithm of \cref{thm:main} and verify that every value $a \in \M$ representing a path weight is obtained as a sum of: the already-constructed weight of another path and at most~$t$ other edge weights.
    Therefore, a~data structure efficiently supporting $W$ will incur only an $\Ot(t)$ multiplicative overhead to the total work of the algorithm of \cref{thm:main} (in particular, only to the parts of the algorithm where new path weights are constructed) and $\Ot(1)$ overhead to the total depth.
	
	Before we start analyzing the specific phases of the algorithm, we remark that even though we assume that edges of the input graph are representable as single elements of $W$, this stops being true after the operation of contraction of a set $U \subseteq V$ into the source $s$ as after that operation the edges outgoing of $s$ represent weights of some paths in the original graph. However, $s$ will be the only vertex with outgoing edges that are not representable as single elements of $W$ and there are no edges incoming to $s$, so whenever the algorithm appends an~edge to the end of a~path, it is never an edge going out of $s$ (unless the path is a single-vertex path $s$, which is not an issue, either: this path has weight $0$).
	
	First, we consider the computation of permanently heavy vertices and alive edges (\cref{lem:pre-init}). In this part, we never compute any additional path weights; it suffices to maintain the weights of the edges going out from each vertex sorted increasingly. Hence, this part takes $\Ot(m)$ work and $\Ot(n / t)$ depth.
	
	Then comes the computation of near-lists $\NL(u)$ of \cref{lem:dense-preprocessing} and their improvement to $\NL'(u)$ of \cref{lem:improved-properties}. The path weights that we construct in these parts are always of form $d_i + w(v_iy)$, where $d_i$ is an already-constructed path weight and $w(v_iy)$ is the weight of a single edge. Hence, we get the same work and depth guarantees, which are $\Ot(nt^2)$ for work in both parts, $\Ot(t)$ for depth of \cref{lem:dense-preprocessing}, and $\Ot(1)$ for depth of improving near-lists. As these precomputation phases are done at the beginning of each of $\Oh(n / (\ell t))$ phases, the total work required for all such precomputation stages will be $\Ot(n^2t / \ell)$, while their total depth will be $\Ot(n / \ell)$.
	
	The next step is the construction of the graphs $G_t$ and $H$ of \cref{lem:dense-equivalent-subgraph}. It does not refer to any path weights whatsoever, so their construction requires the same work and depth as previously, which are $\Ot(nt^4/p + \ell t^4p^3)$ and $\Ot(1)$, respectively.
	
	After computing $H$ comes the discovery step based on \cref{lem:nearest-neighbors} -- the only part of the algorithm where the difference between the computational models appears. In the algorithm of \cref{lem:nearest-neighbors}, we construct the weights of paths as $d^i(u, v) + d^i(v, y)$ for some vertices $u, v, y \in V$, where $d^i$ is the function representing the weight of the lightest path containing at most $2^i$ hops. However, whenever we use $d^i(a, b)$ in such an~expression, we are guaranteed that $b \in N_t^i(a)$, which is the set of $t+1$ closest vertices to $a$ (including $a$) when considering paths of up to $2^i$ hops. Because of that, it is obvious that the path certifying $d^i(a, b)$ cannot have more than $t$ hops. Moreover, in the expression $d^i(u, v) + d^i(v, y)$, if $u \neq v$, the path certifying $d^i(v, y)$ will not contain any edge going out from $s$ (we remark that we cannot argue the same thing for the path certifying $d^i(u, v)$), hence $d^i(v, y)$ is representable as a sum of at most $t$ elements of $W$ (it is straightforward to also explicitly maintain the $(\leq t)$ elements of $W$ whose sum is $d^i(v, y)$, without further overhead to work and depth) and the expression $d^i(u, v) + d^i(v, y)$ can be efficiently constructed in $\Ot(t)$ work and $\Ot(1)$ depth. As such, in our case we obtain the same $\Ot(1)$ depth guarantee as \cref{lem:nearest-neighbors}; and the work guarantee worsens by a factor of $\Ot(t)$: now it is $\Oh(|V(H)|t^3+|E(H)|t) =  \Ot(nt^6/p + \ell t^6p^3)$. Recall that there are $\Oh(n / t)$ discovery steps, so the total work incurred by them will be $\Ot(n^2t^5/p + n\ell t^5p^3)$
	
    The new edge weights constructed during the contraction process are of the form $\dist(s, x) + w(x, v)$, so they can be constructed in $\M$ with $\Ot(1)$ overhead for both work and depth, resulting in $\Ot(m)$ work and $\Ot(n / t)$ depth across the whole algorithm.
	
	As the last step, we perform the update of permanently heavy vertices and alive edges of \cref{lem:pre-init-upd}. Similarly to the case of their initial computation, this part retains the work and depth guarantees of the original algorithm, that is, $\Ot(m)$ work and $\Ot(n / t)$ depth across the entire process. 
	
  Summing up, the total work changes to $\Ot(m+n^2t^5/p + n\ell t^5p^3 + n^2t / \ell)$, while the total depth remains at $\Ot(n / t + n / \ell)$. Performing a similar running time analysis as in \cref{subsec:dense-time}, for ${p = n^{1/4}/\ell^{1/4}}$ and $\ell = n^{1/5}/t^{16/5}$ we obtain an algorithm whose work is $\Ot(n^{9/5}t^{21/5})$ and whose depth is $\Ot(n / t)$, for any choice of parameter $1 \le t \le n^{1/21}$.
\end{proof}

Having established \Cref{lem:sssp-for-effective-path-weights}, it remains to show that $\M_{\rm bin}$ (resp., $\M_{\rm lex}$) can efficiently support $W_{\rm bin}$ (resp., $W_{\rm lex}$).
In both cases, we will represent elements of $\M_{\rm bin}$ and $\M_{\rm lex}$ using logarithmic-depth persistent binary trees. When implemented correctly, this will allow us to compare two elements of the semigroup by performing a~single tree descent along the tree representation of the elements.
Formally, we will reduce the problem at hand to a~dynamic binary tree comparison data structure.
In what follows, we assume that each non-leaf vertex of a~binary tree may have one or two children (left and right).

We then use the following notion of a~\emph{label} of a~node in a~binary tree.
The label $W(x)$ of a~node $x$ in a~tree $T$ is a~word over $\{{\tt L}, {\tt R}\}$ naturally describing the path from the root of $T$ to $x$: Following this unique simple path, we write ${\tt L}$ when descending to the left child, and ${\tt R}$ when descending to the right child. The nodes of the binary trees that we will consider will be assigned colors from a fixed set $\Sigma$ (maintaining some auxiliary information about the elements of $\M$ that will facilitate efficient comparisons of these elements).

For two binary trees $T_1$ and $T_2$ by $\iso(T_1, T_2)$ let us denote the set of nodes $x \in T_1$ such that there exists a node $y \in T_2$ such that $W(x) = W(y)$ and the corresponding subtrees rooted at $x$ and $y$  
are isomorphic (with the isomorphism respecting colors in a~natural way). It is clear to see that $V(T_1) \setminus \iso(T_1, T_2)$ is a prefix of $T_1$, that is, it is either empty, or it is a connected subset of vertices of $T_1$ containing its root. Consequently, let us define $\diff(T_1, T_2) = T_1[V(T_1) \setminus \iso(T_1, T_2)]$. 
\begin{lemma}
    \label{lem:dynamic-tree-comparison}
    There exists a~deterministic parallel dynamic data structure maintaining a~family of nonempty binary trees of depth at most $b \in \N$ and with nodes colored with colors from a~fixed set $\Sigma$, supporting the following operations:
    \begin{itemize}
        \item Create and return a~one-node binary tree, in work and depth $\Ot(1)$.
        \item Given binary trees $T$ and $T^{+}$, create and return a unique tree $T'$ such that $\diff(T', T) = T^{+}$. The operation does \emph{not destroy} the original tree $T$, i.e., after the operation, the family contains both $T$ and $T'$.
        The operation is performed in work $\Ot(|T^{+}|)$ and depth $\Ot(b)$.
        \item Given two binary trees $T_1$, $T_2$, either determine that $T_1 = T_2$, or find the first node that belongs to both $T_1$ and $T_2$ that witnesses that $T_1 \neq T_2$.
        Formally, find the lexicographically smallest word $w$ over $\{{\tt L}, {\tt R}\}$ such that there exists a~node $x_1$ in $T_1$ and a~node $x_2$ in $T_2$, both with label $w$, such that either exactly one of $x_1$ and $x_2$ has a~left (right) child, or $x_1, x_2$ are nodes of different colors.
        The operation is performed in work and depth $\Ot(b)$.
    \end{itemize}
    The data structure allows multiple queries of the same type to be performed in parallel, i.e., any batch of single-node tree creation queries can be performed in depth $\Ot(1)$, and a~batch of prefix-replacement queries or tree comparisons can be performed in depth $\Ot(b)$.
\end{lemma}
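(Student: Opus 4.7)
The plan is to represent every subtree that ever appears in the family by a~canonical integer \emph{signature} $\mathrm{id}(x)$, in the spirit of hash consing. For a~subtree rooted at $x$ with color $c$ and with left and right children carrying signatures $\ell$ and $r$ (using a~distinguished symbol $\bot$ for an~absent child), we set $\mathrm{id}(x)$ to the integer assigned to the triple $(c, \ell, r)$ by a~single \emph{global dictionary}, implemented as a~deterministic parallel balanced BST keyed by such triples (e.g., the join-based trees of~\cite{BlellochFS16}). The first occurrence of a~triple is assigned a~fresh integer; every later occurrence reuses it. Two subtrees are then isomorphic (respecting colors and the left/right distinction) if and only if their roots carry the same signature, which can be tested in $\Ot(1)$ work and depth. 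Each tree in the family is stored as a~reference to the signature of its root; since existing signatures are never overwritten, every previously constructed tree remains intact and the structure is automatically persistent.

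Creating a~single-node tree of color $c$ amounts to one query on $(c, \bot, \bot)$, and a~batch of such creations collapses into a~single batched dictionary update in $\Ot(1)$ depth. For the prefix-replacement operation, every leaf of $T^{+}$ is either a~genuine leaf of $T'$ or an~attachment point whose subtree in $T'$ is an~unchanged subtree of $T$, and in the latter case we recover the inherited subtree's signature from $T$ in $\Ot(1)$. We then assign signatures to the nodes of $T^{+}$ bottom-up, processing one level per round with a~single batched dictionary update over all triples at that level. Since $T'$ has depth at most $b$, there are at most $b$ rounds, giving $\Ot(|T^{+}|)$ work and $\Ot(b)$ depth; moreover, several prefix-replacement queries can be executed concurrently in the same $\Ot(b)$ depth, because their per-level updates merge into one batched dictionary call.

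For the comparison query we exploit the invariant that equal signatures witness fully identical colored subtrees. Starting at the two roots with current label $w = \varepsilon$, we repeat the following decision: if the two nodes differ in color, or exactly one of them has a~left child, or exactly one of them has a~right child, we report $w$ and stop. Otherwise the colors and child-presence patterns match, but since the two signatures still differ, at least one pair of corresponding children must carry differing signatures; we then descend to the left pair if its signatures differ, appending ${\tt L}$ to $w$, and otherwise to the right pair, appending ${\tt R}$. This greedy rule provably returns the lex-smallest witness label, performs one $\Ot(1)$ step per level, and terminates within $b$ levels, so it costs $\Ot(b)$ work and depth. The main technical hurdle is guaranteeing that the shared dictionary supports \emph{deterministic} batched inserts and lookups in $\Ot(1)$ depth while being accessed simultaneously by many concurrent operations of the same type; this is precisely what the parallel balanced BSTs of~\cite{BlellochFS16} provide, since their batch operations commute on distinct keys and therefore absorb all requests arising at a~given level of a~batch without blowing up the depth.
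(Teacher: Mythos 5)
Your proposal is correct and follows essentially the same route as the paper: canonical integer identifiers for subtrees via a shared parallel dictionary keyed by $(c,\mathrm{id}_{\tt L},\mathrm{id}_{\tt R})$ triples (the Karp--Miller--Rosenberg-style naming the paper uses), path-copying persistence with bottom-up, level-by-level identifier assignment for prefix replacement, and a single synchronized descent guided by identifier (in)equality for comparison. The only nuance you gloss over is that locating the attachment points of $T^{+}$ inside $T$ itself requires parallel descents of depth $\Ot(b)$ rather than $\Ot(1)$ per node, but this does not affect the claimed bounds.
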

\begin{proof}
    In the data structure, we will dynamically assign to each node of each tree an~integer identifier, so that any two nodes stored in the data structure are given the same identifier if and only if the subtrees of the respective trees rooted at these nodes are isomorphic (with the isomorphism naturally respecting colors).
    This approach resembles the parallel variant of the \emph{dictionary of basic factors} by Karp, Miller, and Rosenberg~\cite{KarpMR72}, implemented e.g.\ by Crochemore and Rytter~\cite{CrochemoreR91}.
    Here, each node $x$ is assigned a~triple $(c, {\rm id}_{\tt L}, {\rm id}_{\tt R})$, where $c$ is the color of $x$, ${\rm id}_{\tt L}$ is the identifier of the left child of $x$ (or $\bot$ if $x$ has no left child), and ${\rm id}_{\tt R}$ is the identifier of the right child of $x$ (or $\bot$ if $x$ has no right child).
    We maintain a~parallel dictionary ${\cal D}$ storing a~mapping $\mu$ from all such triples to unique identifiers.
    Then the identifier of $x$ is precisely $\mu(c, {\rm id}_{\tt L}, {\rm id}_{\tt R})$.
    We also maintain the number $I$, initially $0$, of unique identifiers assigned so far.

    Creating a~one-node tree with the only node $x$ of color $c$ requires us to verify if $\mu(c, \bot, \bot)$ already exists; in this case, the identifier of $x$ is precisely $\mu(c, \bot, \bot)$; otherwise, we define the identifier of $x$ to be $I$, define $\mu(c, \bot, \bot) \coloneqq I$ and increase $I$ by $1$.
    In case of multiple single-node tree creation queries, we group the trees by the color of the node and ensure that the nodes of the same color are assigned the same identifier.
    This can be done in parallel in depth $\Ot(1)$.

    For prefix-replacement queries, we stress that the required work complexity $\Ot(|T^{+}|)$ may be significantly smaller than $\Oh(|T'|)$, which is the size of the representation of the tree $T'$ we need to construct. This is made possible by the fact that the representations of $T$ and $T'$ are allowed to share nodes of $\iso(T, T')$, which are already allocated as part of $T$. Hence, we do not have to allocate these nodes again for the representation of $T'$, and it only suffices to properly set pointers from the nodes of $T'$ to $\iso(T, T')$. That method is precisely the \emph{node (path) copying method} of design of fully persistent data structures \cite{persistent}.
    
    Firstly, for each node $x \in T^{+}$, we determine its corresponding node $B(x) \in T$ with the property that $W(x) = W(B(x))$ (or $\perp$ if such $B(x)$ does not exist). This part can be performed in $\Ot(|T^{+}|)$ work and $\Ot(b)$ depth by performing synchronous descents in $T$ and $T^{+}$ in parallel for each $x \in T^{+}$. Afterwards, for each $x \in T^{+}$ that does not have a left child, we check if there exists $y \in T$ such that $W(y) = W(x){\tt L}$ (where $W(x){\tt L}$ denotes the label of $x$ concatenated with a single letter~\texttt{L}), and if yes, we set $y$ as the left child of $x$. We perform this check by verifying if $B(x)$ has a left child. We perform the analogous procedure for the right children as well.
    
    It is clear that the identifiers of nodes of $\iso(T', T)$ are the same as the identifiers of the corresponding nodes of $\iso(T, T')$, so the only nodes that require the assignment of identifiers are the nodes of $T^{+}$.
    We will compute these identifiers bottom-up.
    That is, for each $i = b, b - 1, \ldots, 1$ in order, we gather all nodes of $T^{+}$ at depth $i$, map each node to its assigned triple $(c, {\rm id}_{\tt L}, {\rm id}_{\tt R})$, and use ${\cal D}$ to determine if the mapping $\mu$ already defines the identifier $\mu(c, {\rm id}_{\tt L}, {\rm id}_{\tt R})$ of the node; if this is the case, the work for the given node is done.
    Then we allocate new consecutive identifiers, starting from $I$ upwards, to all nodes for which the identifier is still unknown.
    Again, by grouping the nodes with the same assigned triple, we ensure that the nodes with the same assigned triple are allocated the same identifier.
    Then we update the mapping $\mu$ for all assigned triples with freshly allocated identifiers.
    The entire procedure takes work $\Ot(|T^{+}|)$ and depth $\Ot(b)$. A~batch of updates can also be processed in the same depth by performing the procedure for all queries in the batch, in parallel, for each $i = b, b - 1, \ldots, 1$ in order.

    Now we can compare two trees $T_1, T_2$ in work and depth $O(b)$ by performing a~single descent down the trees: First, letting $x_1$ (resp., $x_2$) be the root of $T_1$ (resp., $T_2$), we verify that the identifiers of $x_1$ and $x_2$ are different (otherwise, $T_1 = T_2$).
    If this is the case, we progress $x_1$ and $x_2$ down $T_1$ and $T_2$, respectively, until $x_1$ and $x_2$ witness that $T_1 \neq T_2$ (i.e., the two nodes have different sets of children, or have different colors).
    While performing the descent, we replace both $x_1$ and $x_2$ with their respective left children (if their identifiers are different), and with their respective right children otherwise (in which case these are guaranteed to have different identifiers).
    It is straightforward to verify that this procedure returns a~witness that $T_1 \neq T_2$ in work and depth $\Ot(b)$; and naturally, multiple such tree comparisons can be performed fully in parallel.
\end{proof}

We now use \Cref{lem:dynamic-tree-comparison} to show that the discussed semigroups can support their respective edge weight sets.

\begin{lemma}
    $\M_{\rm lex}$ can efficiently support $W_{\rm lex} = \{\{0\}, \ldots, \{m - 1\}\}$.
\end{lemma}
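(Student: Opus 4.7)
The plan is to represent each multiset $a \in \M_{\rm lex}$ as a pruned binary trie over $\{0, 1, \ldots, m - 1\}$ stored in the dynamic tree comparison data structure of \Cref{lem:dynamic-tree-comparison}. I will take the trie to have depth $b = \lceil \log_2 m \rceil = O(\log n)$, so that $\Ot(b) = \Ot(1)$; a descent to the left child will correspond to restricting to the larger half of the value range and to the right child to the smaller half. Each depth-$b$ leaf corresponds to a single value $v$, is colored by the multiplicity of $v$ in $a$, and is kept in the trie only if that multiplicity is positive; all internal nodes (including the sole node representing the empty multiset) share a single color $I$. Creating the representation of $0$ is a single-node creation in $\Ot(1)$ work and depth.

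With this representation, I claim that the data structure's comparison primitive directly implements $<_{\rm lex}$ in $\Ot(b) = \Ot(1)$ work and depth. The key observation is that the order on witness labels over $\{\texttt{L}, \texttt{R}\}$ (with $\texttt{L} < \texttt{R}$) used by the primitive coincides with the order in which $<_{\rm lex}$ inspects values from largest to smallest. After the primitive returns the first distinguishing label $w$, I will decide the comparison by case analysis on the witness type: if two leaves at $w$ have different colors, I will directly compare their integer multiplicities; if exactly one tree has a left (respectively, right) child at $w$, then this tree contains at least one element in the corresponding larger (resp.\ smaller) sub-range of values while the other contains none, and since all lex-smaller labels already agreed, the tree missing that child is lex-smaller as a multiset.

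For an addition of $k$ atoms $\{v_1\}, \ldots, \{v_k\}$ to the multiset $a$ represented by a tree $T$, I will first sort the $v_i$ in parallel, deduplicate to distinct values $v'_1, \ldots, v'_{k'}$ with added multiplicities $\Delta_j$, and read off the current multiplicity $c_{v'_j}$ of each $v'_j$ in $T$ by a parallel descent, all in $\Ot(k)$ work and $\Ot(1)$ depth. Next, I will assemble an auxiliary tree $T^+$ as the union of the root-to-leaf paths for $v'_1, \ldots, v'_{k'}$ in the depth-$b$ trie, placing color $c_{v'_j} + \Delta_j$ at each modified leaf and color $I$ on every internal node; processing this level by level and deduplicating ancestors keeps the total cost at $\Ot(|T^+|) = \Ot(k' b) = \Ot(k)$ work and $\Ot(1)$ depth. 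By construction, the nodes whose subtrees must change between $T$ and the target tree $T'$ are precisely those of $T^+$, while every other subtree of $T$ can be shared with $T'$; so the prefix-replacement operation of \Cref{lem:dynamic-tree-comparison} invoked on $T$ and $T^+$ returns $T'$ in $\Ot(|T^+|) = \Ot(k)$ work and $\Ot(b) = \Ot(1)$ depth, and leaves $T$ intact as required.

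The main obstacle is the correctness of the comparison routine, since one must verify that each of the three possible witnesses produced by \Cref{lem:dynamic-tree-comparison} maps faithfully to the right side of the $<_{\rm lex}$ decision, in particular in the presence of subtree pruning for all-zero subtries. Once this case analysis is checked, the stated complexity bounds follow from the guarantees of \Cref{lem:dynamic-tree-comparison} together with $b = O(\log n)$ and the observation that, under the $n^{O(1)}$-summand assumption in the definition of efficient support, every multiplicity is an integer on $O(\log n)$ bits that can be stored, added, and compared in $\Ot(1)$.
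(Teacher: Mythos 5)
Your proposal is correct and follows essentially the same route as the paper: the same pruned binary trie of depth $b=O(\log m)$ with leaves colored by multiplicities and the left child covering the larger half of the value range, maintained in the data structure of \Cref{lem:dynamic-tree-comparison}, with additions done by an $O(kb)$-size prefix replacement and comparisons decided from the first witness of inequality. Your case analysis for translating the witness into the $\leq_{\rm lex}$ outcome is in fact spelled out in more detail than in the paper's own (rather terse) argument, and it is sound.
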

\begin{proof}
    For convenience, assume that $m = 2^b$ for some $b \geq 1$.
    Let $\Sigma \coloneqq \N$.
    We assign to an~element $A$ of $\M_{\rm lex}$ (i.e., a~multiset of non-negative integers below $m$) a~rooted binary tree $T(A)$ defined as follows: First, create a full binary tree with $2^b$ leaves indexed $2^b - 1, 2^b - 2, \ldots, 0$ from left to right.
    The leaf of index $i \in \{0, \ldots, 2^b - 1\}$ receives the color equal to the number of occurrences of $i$ in~$A$.
    Non-leaf nodes receive color $0$.
    Eventually, all leaves of color $0$ are removed from $T(A)$, as well as non-leaf vertices whose subtrees do not contain any leaves of color other than~$0$.

    We initialize a~data structure of \Cref{lem:dynamic-tree-comparison} to maintain the family of trees $T(A)$ for every considered set $A$.
    It is trivial to see that $T(\emptyset)$ can be constructed in time and work $\Ot(1)$.
    Similarly, given an~element $A$ of $\M_{\rm lex}$ represented by the tree $T(A)$, it is straightforward to add $k$ elements $w_1, \ldots, w_k \in W$ to $A$ and create the tree representation of an~element $A' = A \cup \{w_1, \ldots, w_k\} \in \M_{\rm lex}$: Observe that $T(A')$ differs from $T(A)$ in a~prefix of size $O(kb)$, and this prefix can be determined in $O(kb)$ work and $O(b)$ depth.
    Therefore, $T(A')$ can be constructed from $T(A)$ in the data structure of \Cref{lem:dynamic-tree-comparison} in work $\Ot(kb)$ and depth $\Ot(b)$.
    Finally, given two elements $A_1, A_2$ of $\M_{\rm lex}$, the relation $A_1 \leq_{\rm lex} A_2$ can be determined by comparing the trees $T(A_1)$ and $T(A_2)$ by finding the first node belonging both to $T(A_1)$ and $T(A_2)$ that witnesses that $T(A_1) \neq T(A_2)$, and examining the label of this node and the set of children of this node in both $T(A_1)$ and $T(A_2)$.
    Thus the comparison can be performed in work and depth $\Ot(b)$.

    Since $b = \Ot(1)$ (recall that $b = \log m$), the claim follows.
\end{proof}

This proves \Cref{thm:lex-sssp}.
The analogous proof for $\M_{\rm bin}$ is slightly more technical.
\begin{lemma}
    $\M_{\rm bin}$ can efficiently support $W_{\rm bin} = \{2^0, \ldots, 2^{m^2 - 1}\}$.
\end{lemma}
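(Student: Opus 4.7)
The plan is to represent each non-negative integer by a~depth-$\Ot(1)$ binary tree within the framework of~\Cref{lem:dynamic-tree-comparison}, mirroring the encoding used for $\M_{\rm lex}$. Since every integer we need to represent is a~sum of at most $n^{O(1)}$ elements of $W_{\rm bin}$, it is bounded by $n^{O(1)} \cdot 2^{m^2}$; set $N = O(m^2 + \log n)$ as the required bit-length and $b = \lceil \log_2(N+1) \rceil = O(\log m) = \Ot(1)$ as the tree depth. I~would encode an~integer $a$ as a~depth-$b$ binary tree in which the leftmost leaf corresponds to the highest-order bit of $a$, each leaf is colored with its bit value, and each internal node is colored with a~summary flag indicating whether its subtree is saturated with 1-bits. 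Following the convention of the $\M_{\rm lex}$ proof, subtrees consisting only of 0-bits are omitted from the compressed representation; additionally, maximal saturated subtrees are collapsed and represented by a~single marked node, so that a~contiguous run of 1-bits in the binary expansion of $a$ occupies only $O(b)$ tree nodes regardless of its length. Under this encoding, integer comparison reduces to finding the highest bit position where two integers differ, which -- since leftmost leaves carry higher-order bits -- coincides with the lexicographically smallest label where the two trees diverge, as returned by the comparison query of~\Cref{lem:dynamic-tree-comparison}. Hence both comparison and construction of the tree for $0$ cost $\Ot(b) = \Ot(1)$ work and depth.

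The main obstacle is implementing the $k$-power addition within work $\Ot(k)$ and depth $\Ot(1)$: given the tree for $a$ and exponents $p_1, \ldots, p_k$, we must produce the tree for $a' := a + \sum_{j=1}^{k} 2^{p_j}$ without destroying the tree for $a$. The difficulty is carry propagation, since a~single new power can flip an~arbitrarily long run of 1-bits to~$0$ and the next 0-bit to~$1$. Without the saturation-compression refinement above, the set of bit positions that change between $a$ and $a'$ could be as large as $\Omega(N)$, and a~leaf-by-leaf update would produce a~diff tree $T^{+}$ of size $\Omega(Nb)$, far exceeding the $\Ot(k)$ budget that~\Cref{lem:dynamic-tree-comparison}'s prefix-replacement operation allows. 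With saturation compression in place, however, each carry chain -- whatever its length -- is encoded by $O(b)$ tree nodes along the ``thick path'' from the two chain endpoints up to their lowest common ancestor, so flipping such a~chain alters only $O(b)$ nodes.

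Given this, I~would perform the addition as follows. First, sort the exponents in $\Ot(k)$ work and $\Ot(1)$ depth. Second, for each $p_j$, locate in parallel the carry chain it triggers: walk from the leaf at $p_j$ upward to the first non-saturated ancestor (using the saturation coloring), then downward to the first 0-bit at or above $p_j$; each such locate takes $\Ot(b) = \Ot(1)$ depth and $\Ot(b)$ work, hence $\Ot(k)$ total. Third, resolve interactions between the $k$ carry chains -- merges when several exponents fall into overlapping or adjacent chains, and cascades when the resolution of one chain changes a~bit that is the starting point of another -- by a~single parallel-prefix sweep over the sorted exponents together with the sequence of maximal saturated subtrees of $a$, producing $O(k)$ mutually commuting local modifications in $\Ot(k)$ work and $\Ot(1)$ depth. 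Finally, for each resolved chain I~would build the $O(b)$-node local contribution to $T^{+}$, reusing a~globally maintained library of template subtrees (indexed by depth) for freshly created saturated regions; these templates are shared across all versions of the tree thanks to the hash-consing in~\Cref{lem:dynamic-tree-comparison} and can be materialized lazily in amortized $\Ot(1)$ cost per template. Summing yields $|T^{+}| = O(kb) = \Ot(k)$, so the prefix-replacement operation of~\Cref{lem:dynamic-tree-comparison} constructs the new tree in the required $\Ot(k)$ work and $\Ot(b) = \Ot(1)$ depth.

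I~expect the hardest step to be the rigorous bookkeeping of interacting carry chains in the third step above: ensuring that merges, nestings, and cascades keep the $O(b)$-per-chain bound on new nodes, and that the whole resolution truly runs in $\Ot(1)$ depth despite potential chain dependencies of depth up to~$k$. Exploiting the total order on exponents together with the structure of maximal saturated subtrees of $a$ -- both of which form sequences of length $\Ot(k)$ that can be merged in logarithmic depth via parallel prefix -- should make it possible to linearize the cascades cleanly and conclude that $\M_{\rm bin}$ efficiently supports $W_{\rm bin}$.
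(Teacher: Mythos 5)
Your overall architecture matches the paper's: represent each integer by a run-compressed binary tree of depth $b=\Ot(1)$ over its bit string (constant runs collapsed so that an integer with $k$ set bits costs $\Ot(k)$ nodes), store these trees persistently via \Cref{lem:dynamic-tree-comparison}, compare two integers by locating the first divergence (which is the highest differing bit), and bound the diff $T^{+}$ of an addition by $O(kb)$. All of that is sound and is exactly what the paper does.

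The gap is in the one step that carries the entire technical weight of the lemma: performing the addition itself in $\Ot(k)$ work and $\Ot(1)$ depth. You propose to locate, for each exponent $p_j$, the carry chain it triggers in $T(a)$ and then to resolve "merges" and "cascades" among the $k$ chains by "a single parallel-prefix sweep" yielding "$O(k)$ mutually commuting local modifications" --- but you never define the associative operator that would make this a prefix computation, nor do you handle the case of repeated or clustered exponents whose mutual sum already generates carries before $a$ is even touched (e.g.\ $k$ copies of $2^0$), and you explicitly flag this bookkeeping as the unresolved hard part. As stated, the chain-interaction resolution is an assertion, not a proof. The paper closes exactly this hole with a cleaner two-stage scheme: first it implements a general primitive adding two tree-represented integers $x$ and $y$ in work $\Ot(|T(y)|)$ and depth $O(b)$ via the Brent--Kung carry-lookahead recursion evaluated on the interval decomposition induced by $T(y)$ (carries bottom-up, sums top-down); then it aggregates the $k$ powers into a single tree $T(2^{i_1}+\cdots+2^{i_k})$ of size $\Ot(k)$ by halving divide-and-conquer using that same primitive, and finally performs one tree--tree addition with $a$. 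If you replace your ad hoc chain resolution with such a carry-lookahead tree-addition primitive (which is where the "prefix structure" of carries is actually formalized), your proof goes through; without it, the depth and work bounds for the addition operation are not established. A minor additional point: your "library of template subtrees for freshly created saturated regions" is unnecessary once saturated ranges are single colored leaves, as they are in your own representation.
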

\begin{proof}
    \newcommand{\Carry}{\mathsf{Carry}}
    \newcommand{\Sum}{\mathsf{Sum}}
    Choose $b \in \Ot(1)$ so that $2^b > m^2 \cdot n^{O(1)}$.
    Let also $\Sigma = \{0, 1\}$.
    Given an~integer $x \in [0, 2^{2^b} - 1]$ with binary representation $\langle x_{2^b-1}x_{2^b-2} \ldots x_0 \rangle$, we define the rooted binary tree $T(x)$ as $T(x, [0, 2^b - 1])$, where $T(x, [\ell, r])$ is defined recursively as follows:
    \begin{itemize}[nosep]
        \item If $x_\ell = x_{\ell + 1} = \ldots = x_r$, then $T(x, [\ell, r])$ is a~leaf node with color $x_\ell$.
        \item Otherwise, letting $p = \left\lfloor \frac{\ell + r}{2} \right\rfloor$, we choose $T(x, [\ell, r])$ to be a~non-leaf node with color $0$, the left child $T(x, [p + 1, r])$ and the right child $T(x, [\ell, p])$.
    \end{itemize}
    Observe that a~tree representation of an~integer with $k$ set bits has $\Ot(k)$ nodes; in particular, the tree representation of $0$ contains only $1$ node (the zero leaf).

    Now suppose we are given two integers $x, y \geq 0$ with $x + y < 2^{2^b}$ together with their tree representations $T(x)$, $T(y)$.
    We will present how to determine $T(x + y)$ in work $\Ot(|T(y)|)$ and depth $\Ot(1)$.
    Essentially, we reuse the classic Brent--Kung implementation of the \emph{carry-lookahead method} for binary addition~\cite{BrentK82}, exploiting the tree representation of input numbers to parallelize the computation.
    While the details are fairly straightforward, we include them below for completeness.

    Let $\Yc$ be the set of intervals $[\ell, r]$ for which $T(y)$ contains the node $T(y, [\ell, r])$.
    Define for $[\ell, r] \in \Yc$ and $c \in \{0, 1\}$ the value $\Carry(\ell, r, c)$ and the tree $\Sum(\ell, r, c)$:
    \[
        \begin{split}
            \Carry(\ell, r, c) &= \left\lfloor 2^{-(r - \ell + 1)} \cdot (\langle x_r x_{r-1} \ldots x_l \rangle + \langle y_r y_{r-1} \ldots y_\ell \rangle + c)\right\rfloor, \\
            \Sum(\ell, r, c) &= T\left( (\langle x_r x_{r-1} \ldots x_l \rangle + \langle y_r y_{r-1} \ldots y_\ell \rangle + c) \mod{2^{r - \ell + 1}},\ [0,\ r - \ell ]\right).
        \end{split}
    \]
    In other words, suppose that we have to sum together the parts of the binary representations of $x$ and $y$ between the $\ell$-th and the $r$-th binary digit, and additionally a~carry bit at the $\ell$-th bit if $c = 1$.
    Then $\Sum(\ell, r, c)$ is the tree representation of the resulting sum between the $\ell$-th and the $r$-th binary digit, while $\Carry(\ell, r, c)$ represents the carry bit transferred from the $r$-th bit of the result.
    Given $T_x \coloneqq T(x, [\ell, r])$ and $T_y \coloneqq T(y, [\ell, r])$, the value $\Carry(\ell, r, c)$ can be computed as follows:
    \begin{itemize}[nosep]
        \item If $T_x$ and $T_y$ are leaves of opposite colors, then $\Carry(\ell, r, c) = c$.
        \item Otherwise, if either $T_x$ or $T_y$ is a~leaf of some color $d \in \{0, 1\}$, then $\Carry(\ell, r, c) = d$.
        \item Otherwise, let $p = \left\lfloor \frac{\ell + r}{2} \right\rfloor$. Then $\Carry(\ell, r, c) = \Carry(p + 1, r, \Carry(\ell, p, c))$.
    \end{itemize}
    The values $\Carry(\ell, r, c)$ can be computed for all intervals $[\ell, r] \in \Yc$ and $c \in \{0, 1\}$ in a~bottom-up manner in work $\Ot(|T(y)|)$ and depth $O(b) = \Ot(1)$.
    Then, the tree $T(x + y) = \Sum(0, 2^b - 1, 0)$ is computed in the top-down manner:
    Suppose we need to construct $\Sum(\ell, r, c)$ for $0 \leq \ell \leq r < 2^b$ and $c \in \{0, 1\}$, given $T_x \coloneqq T(x, [\ell, r])$ and $T_y \coloneqq T(y, [\ell, r])$.
    \begin{itemize}[nosep]
        \item If at least one of the trees $T_x, T_y$ is a~leaf of color $c$, then $\Sum(\ell, r, c)$ is the other tree.
        \item Otherwise, if at least one of the trees $T_x, T_y$ is a~leaf of color $1 - c$ (without loss of generality, $T_y$), then $\Sum(\ell, r, c) = T(\langle x_r x_{r-1} \ldots x_\ell \rangle \pm 1,\ [0,\ r - \ell])$.
        It can be easily verified that $T^+ \coloneqq \diff(\Sum(\ell, r, c), T_x)$ has the shape of a~single vertical path with leaves attached to the path, and $T^+$ can be found efficiently in work and depth $O(b)$ by performing a~single recursive descent from the root of $T_x$. We omit the technical details here.
        \item Otherwise, let $p = \left\lfloor \frac{\ell + r}{2} \right\rfloor$ and $c' = \Carry(\ell, p, c)$. Compute $R = \Sum(\ell, p, c)$ and $L = \Sum(p + 1, r, c')$ in parallel. If $L, R$ are both leaves of the same color, say $d$, then $\Sum(\ell, r, c)$ is a~leaf of color $d$. Otherwise, $\Sum(\ell, r, c)$ is a~tree with left child $L$ and right child $R$.
    \end{itemize}
    This way, $T(x + y)$ can be constructed from $T(x)$ and $T(y)$ in work $\Ot(|T(y)|)$ and depth $\Ot(1)$.

    It remains to argue that $\M_{\rm bin}$ efficiently supports $W_{\rm bin}$.
    Naturally, zero-initializations and comparisons can be performed on the tree representations of the integers in work and depth $\Ot(1)$.
    Now suppose we are given a~tree representation $T(x)$ of an~integer $x < 2^{2^b}$, and additionally $k$ values $2^{i_1}, \ldots, 2^{i_k} \in W_{bin}$, and we are to compute $T(x + 2^{i_1} + \ldots + 2^{i_k})$.
    We first recursively find $T(2^{i_1} + \ldots + 2^{i_k})$: For each $j \in [k]$, the tree $T(2^{i_j})$ can be constructed in work and depth $\Ot(1)$.
    Also, given $T(2^{i_\ell} + \ldots + 2^{i_p})$ and $T(2^{i_{p+1}} + \ldots + 2^{i_r})$ for $1 \leq i \leq p < r \leq k$, the tree $T(2^{i_\ell} + \ldots + 2^{i_r})$ can be constructed in work $\Ot(r - \ell)$ and depth $\Ot(1)$ via the method sketched above.
    Thus, $T_y \coloneqq T(2^{i_1} + \ldots + 2^{i_k})$ can be constructed recursively, via a~standard halving divide-and-conquer approach, in work $\Ot(k)$ and depth $\Ot(1)$.
    The tree $T_y$ has just $\Ot(k)$ nodes; therefore, $T(x + 2^{i_1} + \ldots + 2^{i_k})$ can be finally constructed from $T(x)$ and $T_y$ in work $\Ot(k)$ and depth $\Ot(1)$.
\end{proof}

This, together with \Cref{lem:sssp-for-effective-path-weights}, concludes the proof of \Cref{thm:power2-sssp}.

\section{Incremental minimum cost-to-time ratio cycle}
\label{sec:mean-cycle}

In this section, we will use our improved shortest paths algorithm to design an~efficient data structure for the incremental variant of the generalization of the minimum-mean cycle problem, dubbed \textsc{Minimum Cost-to-Time Ratio Cycle}.
In the static version of the problem, we are given a~directed graph $G$ with $n$ vertices and $m$ edges, where each edge $e$ has an~arbitrary real cost $c(e)$ and positive real time $t(e)$.
The task is to determine a~cycle $C$ in $G$ with the minimum ratio of the total cost to the total time (further simply referred to as \emph{minimum-ratio cycle}), that is, the cycle minimizing the value $\frac{\sum_{e \in E(C)} c(e)}{\sum_{e \in E(C)} t(e)}$.
In the incremental variant, edges are dynamically inserted into the graph, and after each update, our aim is to report the current minimum-ratio cycle in $G$ (or correctly state that $G$ is acyclic).

Our result is as follows:

\begin{theorem}
    \label{thm:incremental-mean-cycle}
    There exists an~incremental data structure for \emph{\textsc{Minimum Cost-to-Time Ratio Cycle}} in an~$n$-vertex graph.
    The insertion of the $m$-th edge into the graph is processed in time $\Ot(mn^{21/22})$.
\end{theorem}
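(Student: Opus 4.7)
The strategy is to combine Megiddo's parametric search with the parallel SSSP algorithm of \Cref{thm:main}, while amortizing work across insertions by maintaining suitable vertex potentials.

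Between insertions, I would store the current minimum cost-to-time ratio $\lambda^\star$, a witness cycle attaining it, and a potential function $\phi \colon V \to \R$ such that every edge $uv$ of the current graph satisfies $\tilde w(uv) := c(uv) - \lambda^\star t(uv) + \phi(u) - \phi(v) \geq 0$; while no cycle exists yet, $\lambda^\star = +\infty$ and it suffices to maintain reachability information on the growing DAG. Upon inserting a new edge $xy$, first compute $\tilde w(xy)$. If $\tilde w(xy) \geq 0$, then every edge in the augmented graph has non-negative reduced weight, so any cycle has non-negative reduced (and hence unreduced) weight at ratio $\lambda^\star$; thus $\lambda^\star$ and the stored witness remain optimal and we are done. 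Otherwise, since the optimal ratio is monotonically non-increasing under insertions, the new optimum $\lambda^\star_{\mathrm{new}} \leq \lambda^\star$ must be realized by a cycle passing through $xy$; equivalently, it suffices to find the $y \to x$ path $P$ in $G - \{xy\}$ that minimizes $(c(P) + c(xy))/(t(P) + t(xy))$.

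To compute this min-ratio path, I would apply Megiddo's parametric search over the scalar $\lambda$. The sequential decision oracle at a fixed value of $\lambda$ returns whether the distance from $y$ to $x$ in $G - \{xy\}$ under the reduced weights $c(e) - \lambda t(e) + \phi(u) - \phi(v)$ is at most $\lambda t(xy) - c(xy) + \phi(x) - \phi(y)$, which takes a single Dijkstra run in $\Ot(m)$ time; this is legitimate because for every $\lambda \leq \lambda^\star$ and every edge $uv$ of $G - \{xy\}$, the non-negativity $t(uv) \geq 0$ together with the invariant $\tilde w(uv) \geq 0$ yields $c(uv) - \lambda t(uv) + \phi(u) - \phi(v) \geq 0$, while the sole potentially-negative edge $xy$ is explicitly excluded from the SSSP instance. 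As the parallel algorithm simulated inside Megiddo's framework I would invoke \Cref{thm:main} with parameter $t = n^{1/22} \leq n^{1/17}$, obtaining work $W = \Ot(m + n^{9/5}\cdot n^{17/110}) = \Ot(m + n^{43/22})$ and depth $D = \Ot(n^{21/22})$. Since the reduced edge weights are affine in $\lambda$ and the algorithm operates in the comparison--addition model, each internal comparison reduces to a linear test on $\lambda$, resolved in the usual way by binary-searching at the critical values using the decision oracle. The resulting total cost is
\[
 \Ot\!\left(W + D \cdot T_{\mathrm{dec}}\right) \;=\; \Ot\!\left(m + n^{43/22} + n^{21/22}\, m\right) \;=\; \Ot\!\left(m\, n^{21/22}\right),
\]
using $m \geq n$ to absorb the middle term.

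Once $\lambda^\star_{\mathrm{new}}$ and the distances $d(v)$ from $y$ under the reduced weights at $\lambda^\star_{\mathrm{new}}$ are known, I would refresh the potentials by the standard Johnson reweighting $\phi_{\mathrm{new}}(v) := \phi(v) + d(v)$; a direct verification shows that all pre-existing edges remain non-negative at $\lambda^\star_{\mathrm{new}}$ under $\phi_{\mathrm{new}}$, and that the new optimal cycle has total weight exactly zero under $c - \lambda^\star_{\mathrm{new}}\, t$, which forces the reduced weight of $xy$ under $\phi_{\mathrm{new}}$ to be non-negative as well. The main obstacle is reconciling the non-negativity requirement of \Cref{thm:main} with the signed weights that arise inside parametric search; the maintained potentials together with the removal of the single ``bad'' edge $xy$ from the SSSP instance resolve this cleanly. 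A secondary technical point is bootstrapping the invariant when the very first cycle is formed; this can be handled by one application of the same parametric-search machinery, noting that at that moment the underlying graph is a DAG and the decision oracle degenerates to a DAG shortest-path computation, for which the non-negativity issue does not arise.
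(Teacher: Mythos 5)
Your proposal matches the paper's proof essentially step for step: the same invariant (a potential function attesting the current optimum $\lambda^\star$, which makes all reduced weights $c(e)-\lambda t(e)$ non-negative for every $\lambda\le\lambda^\star$), the same reduction of the decision problem ``$\lambda\le\lambda^\star_{\mathrm{new}}$?'' to a single non-negative SSSP instance from $q$ excluding the new edge, the same invocation of Megiddo's parametric search with \Cref{thm:main} at $t=n^{1/22}$ as the parallel algorithm and Dijkstra as the sequential oracle, and the same Johnson-style potential refresh. The only detail you gloss over that the paper treats explicitly is the potential update at vertices unreachable from $q$ in the reweighted graph (where $d(v)=+\infty$), which the paper handles by capping $\delta(v)$ at a sufficiently large finite value $\Delta$ so that the attesting inequality still holds on edges leaving the unreachable set.
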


The problem was first analyzed in the static setting by Dantzig, Blattner, and Rao~\cite{DantzigBR67} and Lawler~\cite{Lawler1972}.
In this setting, the two best strongly polynomial algorithms were given by Bringmann, Hansen, and Krinninger~\cite{BringmannHK17}, who find a~minimum-ratio cycle in an~$n$-vertex, $m$\nobreakdash-edge graph in time $\Ot(m^{3/4} n^{3/2})$ or $n^3 / 2^{\Omega(\sqrt{\log n})}$.
Their algorithm reduces the minimum-ratio cycle problem to multiple instances of single-source shortest paths in graphs without negative cycles.
We observe that in the incremental setting, we can actually guarantee that all such instances contain only edges of non-negative weights.
This allows us to process edge insertions more efficiently than by recomputing the answer from scratch on every graph update.
In particular, this allows us to achieve $\Ot(mn^{1 - \epsilon})$ edge insertion time, which -- to the best of our knowledge -- has not been achieved before for this problem in the strongly polynomial model.

\paragraph{Approach.}
Let $\lambda \in \R$.
We say that a~potential function $\phi : V(G) \to \R$ \emph{attests} $\lambda$ in $G$ if $c(uv) - \lambda t(uv) - \phi(u) + \phi(v) \geq 0$ for every edge $uv \in E(G)$.
We use the fact that:
\begin{lemma}[{see~\cite[Lemma 2, 4]{BringmannHK17} and references therein}]
    There exists a~potential function attesting $\lambda$ in $G$ if and only if $\lambda$ does not exceed the minimum ratio of a~cycle in $G$.
\end{lemma}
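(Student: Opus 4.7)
The plan is to prove both directions via a standard Johnson-style reweighting argument, treating $w_\lambda(uv) := c(uv) - \lambda t(uv)$ as the primary edge weighting under consideration. The condition that $\phi$ attests $\lambda$ is then simply that the reweighted values $\tilde w(uv) := w_\lambda(uv) - \phi(u) + \phi(v)$ are non-negative on every edge.

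For the ``only if'' direction I would observe that, because the potential differences telescope on closed walks, for every cycle $C \subseteq G$ we have
\[
  \sum_{e \in C} \tilde w(e) \;=\; \sum_{e \in C} w_\lambda(e) \;=\; \sum_{e \in C} c(e) \;-\; \lambda \sum_{e \in C} t(e).
\]
Since each $\tilde w(e) \geq 0$ and $t(e) > 0$, the total time on $C$ is strictly positive, so dividing gives $\lambda \leq \sum_{e \in C} c(e) / \sum_{e \in C} t(e)$. Taking the minimum over all cycles $C$ yields the desired bound (and the statement is vacuous if $G$ is acyclic, in which case there is nothing to prove).

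For the ``if'' direction, assume $\lambda$ is at most the minimum cost-to-time ratio of any cycle, or that $G$ has no cycles at all. Then, reading the same identity in reverse, every cycle in $G$ has non-negative total $w_\lambda$-weight, i.e., the weighting $w_\lambda$ admits no negative cycles. The plan is to adjoin an auxiliary source vertex $s$ with a zero-weight edge to every vertex of $G$ and set $\phi(v) := \dist_{w_\lambda}(s, v)$, which is well-defined and finite because there are no negative cycles reachable from $s$. By the standard shortest-path relaxation inequality, $\phi(v) \leq \phi(u) + w_\lambda(uv)$ for every edge $uv \in E(G)$, which is exactly the attestation condition $c(uv) - \lambda t(uv) - \phi(u) + \phi(v) \geq 0$.

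The main conceptual obstacle here is not any one step in isolation but rather guaranteeing that $\phi$ is finite and well-defined in the presence of possibly negative $w_\lambda$-weights; the super-source trick together with the absence of negative cycles handles this cleanly. I would also flag that the strict positivity of $t(e)$ is essential in the forward direction for dividing the cycle inequality, whereas the backward direction only uses the ``no negative cycle'' consequence and is otherwise a textbook application of Bellman--Ford-style feasibility for difference constraints.
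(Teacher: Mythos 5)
The paper does not prove this lemma itself (it is imported from \cite{BringmannHK17}), so the comparison is against the standard argument, which is indeed the one you outline: telescoping potentials over cycles for one direction, and Johnson-style reweighting via shortest paths from a super-source for the other. Your ``only if'' direction is correct: potential differences cancel around a cycle, $t(e)>0$ makes the denominator positive, and the acyclic case is vacuous.

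There is, however, a concrete sign error in the final step of the ``if'' direction. The attestation condition $c(uv) - \lambda t(uv) - \phi(u) + \phi(v) \geq 0$ is the difference constraint $\phi(u) - \phi(v) \leq w_\lambda(uv)$. Setting $\phi(v) := \dist_{w_\lambda}(s,v)$ and invoking the relaxation inequality gives $\phi(v) \leq \phi(u) + w_\lambda(uv)$, i.e.\ $w_\lambda(uv) + \phi(u) - \phi(v) \geq 0$ --- the condition with the roles of $\phi(u)$ and $\phi(v)$ swapped, not the attestation condition. The repair is immediate: take $\phi(v) := -\dist_{w_\lambda}(s,v)$ (equivalently, compute the distances in the graph with all edges reversed, or to a super-sink), after which the relaxation inequality rearranges to exactly $w_\lambda(uv) - \phi(u) + \phi(v) \geq 0$. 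This convention is also the one the paper implicitly uses later, where the potential is updated by \emph{subtracting} distances ($\phi_{\rm new}(v) = \phi_{\rm old}(v) - \delta(v)$). Everything else --- the super-source guaranteeing reachability, finiteness from the absence of negative $w_\lambda$-cycles, and the role of $t(e)>0$ --- is in order.
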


The implementation of our data structure maintains, apart from the graph $G$ itself:
\begin{itemize}
    \item a~minimum-ratio cycle $C$ and its cost-to-time ratio $\lambda^\star$, and
    \item a~potential function $\phi : V(G) \to \R$ attesting $\lambda^\star$ in $G$.
\end{itemize}
(As a~special case, if $G$ is acyclic, we define that $C = \bot$, $\lambda^\star = +\infty$ and $\phi$ is undefined.)

Our edge insertion algorithm exploits the technique of \emph{parametric search}, introduced in a~work of Megiddo~\cite{Megiddo81} and already used successfully in the static variant of minimum-ratio cycle by Megiddo~\cite{Megiddo81} and Bringmann, Hansen, and Krinninger~\cite{BringmannHK17}.
This method allows us to design an~efficient strongly polynomial sequential algorithm determining the value of a~hidden real parameter $\lambda^\star$ (in our case, precisely the minimum ratio of a~cycle in a~graph), given two strongly polynomial algorithms -- a~low-depth parallel algorithm and an~efficient sequential algorithm -- \emph{comparing} a~given input parameter $\lambda$ with $\lambda^\star$.
Below we give the formal statement of the technique; an~intuitive explanation of its implementation can be found in \cite[Section 2.1]{BringmannHK17}.

\begin{theorem}[{\cite{Megiddo81, AgarwalST94}}]
    \label{thm:parametric-search}
    Let $\lambda^\star \in \R$ be a~hidden real value.
    Suppose that there exist two strongly polynomial algorithms that input a~real value $\lambda \in \R$ and test whether $\lambda \leq \lambda^\star$:
    \begin{itemize}
        \item a~parallel algorithm with work $W_p$ and depth $D_p$, and
        \item a~sequential algorithm with time complexity $T_s$.
    \end{itemize}
    We moreover require that in the parallel algorithm, all comparisons involving $\lambda$ test the sign of a~polynomial in $\lambda$ of constant degree.
    Then there exists a~sequential strongly polynomial algorithm computing $\lambda^\star$ in time $\Oh(W_p + D_pT_s \log W_p)$.
\end{theorem}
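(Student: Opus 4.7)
The plan is to simulate the parallel test algorithm on the formal input $\lambda = \lambda^\star$ without ever learning $\lambda^\star$ numerically, using the sequential test as an oracle that resolves comparisons of concrete reals against $\lambda^\star$. Throughout the simulation I maintain an interval $(\ell, r) \ni \lambda^\star$, initially $(-\infty, +\infty)$, that has been \emph{certified} so far; whenever the parallel algorithm's control flow needs the sign of some expression in $\lambda$, I reduce that need to queries of the form ``is $\rho \leq \lambda^\star$?'' for a concrete real $\rho$, each answered by a single invocation of the sequential algorithm, after which $(\ell, r)$ is shrunk to exclude $\rho$ from one side.

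View the parallel algorithm as organized into $D_p$ synchronous \emph{layers}: layer $i$ performs a batch $\mathcal{B}_i$ of operations whose inputs depend only on outputs of layers $1,\ldots,i-1$, and $\sum_i |\mathcal{B}_i| = O(W_p)$. By assumption, every comparison against $\lambda$ inside a layer tests $\mathrm{sign}\,p(\lambda)$ for some polynomial $p$ of degree at most a fixed constant $d$. When the simulation reaches layer $i$, the coefficients of each such $p$ are real numbers already produced by previous layers, and so are known concretely to the simulator.

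The key step is processing a single layer. For every polynomial $p$ attached to a comparison in $\mathcal{B}_i$, compute its at most $d$ real roots; in total this costs $O(|\mathcal{B}_i|)$ work since $d = O(1)$. Let $R_i$ be the multiset of those roots lying strictly inside the current certified interval $(\ell, r)$; roots outside $(\ell, r)$ have their comparison with $\lambda^\star$ already determined by the interval. Sort $R_i$ and binary-search for $\lambda^\star$ with the sequential test: each of the $O(\log|R_i|) = O(\log W_p)$ queries costs $T_s$ time and shrinks $(\ell, r)$. After the binary search, no root of any $p \in \mathcal{B}_i$ lies in $(\ell, r)$, so $\mathrm{sign}\,p(\lambda^\star)$ is forced by the constant sign of $p$ on this now root-free interval, and the rest of layer $i$ can be simulated in its native work. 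Summing over the $D_p$ layers yields $O(W_p)$ for the sequential emulation plus $O(D_p \cdot T_s \log W_p)$ for the binary searches, matching the claimed bound.

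The principal obstacle is ensuring that every pivot $\rho$ passed to the sequential oracle is an honestly concrete real number and not a symbolic expression in the still-unknown $\lambda^\star$; this is exactly what the layered decomposition plus constant polynomial degree buys us, since within a layer all polynomial coefficients have been committed to by earlier layers and the roots of a constant-degree polynomial are computable from those coefficients in strongly polynomial time. A minor subtlety is the tie case where a polynomial has $\lambda^\star$ itself as a root; fixing the sequential oracle to answer weak inequalities ($\rho \leq \lambda^\star$) and treating such ties as closed endpoints of $(\ell, r)$ preserves the invariant $\lambda^\star \in [\ell, r]$ and yields a consistent continuation of the simulation, recovering the algorithm described by Megiddo and the improvement of Agarwal, Sharir, and Toledo.
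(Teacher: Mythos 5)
The paper does not prove this statement itself --- it imports it from Megiddo and Agarwal--Sharir--Toledo and points to \cite{BringmannHK17} for intuition --- so your proposal has to be judged against the classical argument, and it does reproduce its main mechanism correctly: simulate the parallel decision algorithm layer by layer at the generic value $\lambda=\lambda^\star$, note that when a layer is reached the coefficients of each constant-degree comparison polynomial are concrete reals, collect the $O(1)$ roots per comparison that lie in the current interval, and resolve the whole layer with $O(\log W_p)$ calls to the sequential oracle while charging the remaining simulation to $W_p$.

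The genuine gap is that you never say how $\lambda^\star$ is actually produced, and that is where the real correctness argument of parametric search lives. Shrinking $(\ell,r)$ until each layer's roots are excluded does not by itself identify $\lambda^\star$: a priori the final interval could still contain $\lambda^\star$ strictly in its interior. The missing step is the argument that this cannot happen, so that at termination $\lambda^\star$ equals the left endpoint $\ell$, which is what the algorithm outputs. Concretely: the one-sided oracle maintains $\ell \le \lambda^\star < r$; if $\lambda^\star$ were strictly interior to the final interval, then no comparison polynomial encountered during the simulation has a root in $(\ell,r)$, so the resolved signs are the true signs for \emph{every} $\lambda\in(\ell,r)$, the simulated run is a faithful execution of the decision algorithm at each such $\lambda$, and its answer is constant on $(\ell,r)$ --- contradicting the fact that the predicate ``$\lambda\le\lambda^\star$'' changes value inside that interval. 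This same argument is what legitimizes your tie handling: with only queries of the form ``is $\rho\le\lambda^\star$?'' you cannot detect equality, and your assertion that treating ties as closed endpoints ``yields a consistent continuation'' is precisely the claim that needs the endpoint argument, because after such a tie the simulation is computing the algorithm's behaviour on the open interval to the right of $\lambda^\star$ rather than at $\lambda^\star$, which is harmless only because the answer is read off the interval endpoint, not off the simulated run. A minor accounting point: sorting $R_i$ costs $O(|R_i|\log|R_i|)$ per layer, i.e.\ $O(W_p\log W_p)$ overall, which overshoots the stated $O(W_p)$ term; the standard fix is repeated linear-time median selection (halving the root set after each oracle call), though at the $\Ot$ level used in this paper the difference is immaterial.
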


{
    \newcommand{\Cold}{C_{\rm old}}
    \newcommand{\Cnew}{C_{\rm new}}
    \newcommand{\lold}{\lambda_{\rm old}}
    \newcommand{\lnew}{\lambda_{\rm new}}
    \newcommand{\lpq}{\lambda_{pq}}
    \newcommand{\phiold}{\phi_{\rm old}}
    \newcommand{\phinew}{\phi_{\rm new}}
    Armed with~\Cref{thm:parametric-search}, we can implement the edge insertion.
    Suppose $G$ is a~directed graph just after an~insertion of the edge $pq$.
    Let $\Cold$ be a~minimum-ratio cycle in $G - pq$ (of ratio $\lold$) and $\phiold$ be a~potential function attesting $\lold$ in $G - pq$.
    Our aim is to determine a~minimum-ratio cycle $\Cnew$ in $G$ (of ratio $\lnew$) and a~potential function $\phinew$ attesting $\lnew$ in $G$.
    For our convenience, we assume that $m \geq n$, $\Cold \neq \bot$ and $\lold < +\infty$; it is straightforward to adapt the following description to the case where $m < n$ or $G - pq$ is acyclic.

    Observe that $\lnew = \min\{\lold, \lpq\} \leq \lold$, where $\lpq$ is the minimum-ratio cycle containing the edge $pq$ (where $\lpq = +\infty$ if no cycle in $G$ contains $pq$).
    Moreover,
    \[
        \lpq = \min \left\{
            \frac{c(pq) + \sum_{e \in E(P)} c(e)}{t(pq) + \sum_{e \in E(P)} t(e)} \,\colon
            P\text{ is a~path from $q$ to $p$ in $G - pq$}
        \right\}.
    \]
    In particular, for $\lambda \in \R$ we have $\lambda \leq \lpq$ if and only if for every path $P$ from $q$ to $p$ in $G - pq$ we have $\lambda \leq \frac{c(pq) + \sum_{e \in E(P)} c(e)}{t(pq) + \sum_{e \in E(P)} t(e)}$, or equivalently $\sum_{e \in E(P)} (c(e) - \lambda t(e)) \geq \lambda t(pq) - c(pq)$.

    Therefore, we can test whether $\lambda \leq \lnew$ for a~given $\lambda \in \R$ as follows.
    If $\lambda > \lold$, then also $\lambda > \lnew$ and the test is negative.
    So suppose that $\lambda \leq \lold$; then $\lambda \leq \lnew$ if and only if $\lambda \leq \lpq$.
    Construct an~edge-weighted graph $G_\lambda$ with the same set of edges as $G - pq$, where an~edge $uv \in E(G')$ is assigned weight $w_\lambda(uv) \coloneqq c(uv) - \lambda t(uv) - \phiold(u) + \phiold(v)$.
    Under our assumption, for every edge $uv \in E(G_\lambda)$, we have
    \[
        w_\lambda(uv) = c(uv) - \lambda t(uv) - \phiold(u) + \phiold(v) \geq c(uv) - \lold t(uv) - \phiold(u) + \phiold(v) \geq 0,
    \]
    since $\phiold$ attests $\lold$ in $G - pq$.
    Therefore, $G_\lambda$ has non-negative edge weights.
    Moreover, every path $P$ from $q$ to $p$ has total weight $\phiold(p) - \phiold(q) + \sum_{e \in E(P)} (c(e) - \lambda t(e))$, so $\lambda \leq \lpq$ if and only if every path from $q$ to $p$ in $G_\lambda$ has total weight at least $\Gamma_\lambda \coloneqq \lambda t(pq) - c(pq) + \phiold(p) - \phiold(q)$.
    Hence, in $O(m)$ work and $O(1)$ depth, testing whether $\lambda \leq \lnew$ can be reduced to a~single instance of non-negatively weighted single-source shortest paths in $G_\lambda$.
    Such an~instance can be solved by:
    \begin{itemize}[nosep]
        \item a~parallel algorithm with work $W_p = \Ot(m + n^{1 + 21/22})$ and depth $D_p = \Ot(n^{21/22})$ (\Cref{thm:main} for $t = n^{1/22}$), and
        \item a~sequential algorithm with running time $T_s = \Ot(n + m)$ (the usual Dijkstra's algorithm).
    \end{itemize}
    It can be verified that all edge and path weights in the parallel algorithm of \Cref{thm:main} are linear functions of $\lambda$.
    Thanks to that, \Cref{thm:parametric-search} applies and there exists a~sequential strongly polynomial algorithm computing $\lnew$ in time $\Oh(W_p + D_pT_s \log W_p) = \Ot(mn^{21/22})$.

    We now determine $\Cnew$.
    If $\lnew = \lold$, we can simply take $\Cnew = \Cold$.
    Otherwise, $\Cnew$ is a~cycle containing the new edge $pq$.
    In this case, the shortest path $P$ in $G_{\lnew}$ has total weight exactly $\Gamma_{\lnew}$ (and this path can be found by a~single run of Dijkstra's algorithm in $G_{\lnew}$), and then $\Cnew$ is constructed by concatenating $pq$ with $P$.

    It remains to update $\phiold$ to $\phinew$.
    Let $d : V(G) \to \R_{\geq 0} \cup \{+\infty\}$ denote the distance from $q$ to each vertex of $G_{\lnew}$ (where we set $d(v) = +\infty$ if $v$ is unreachable from $q$ in $G_{\lnew}$).
    For each $v \in V(G)$, we set $\phinew(v) = \phiold(v) - \delta(v)$, where
    \[
        \delta(v) \coloneqq \begin{cases}
            d(v) &\text{if } d(v) < +\infty, \\
            \Delta &\text{otherwise}
        \end{cases}
    \]
    for some large enough value $\Delta \geq 0$ (to be determined later).
    We verify now that $\phinew$ attests $\lnew$ in~$G$.
    Pick an~edge $uv \in E(G)$, aiming to show that
    \[ f(uv) \coloneqq c(uv) - \lnew t(uv) - \phinew(u) + \phinew(v) \geq 0. \]
    If $uv \neq pq$, then
    \[
        f(uv) \,=\, c(uv) - \lnew t(uv) - \phiold(u) + \phiold(v) + \delta(u) - \delta(v)
         \, =\, w_{\lnew}(uv) + \delta(u) - \delta(v).
    \]
    If $d(u) < +\infty$, then $d(v) < +\infty$ and $f(uv) = w_{\lnew}(uv) + d(u) - d(v) \geq 0$ since $d$ is a~distance function in $G_{\lnew}$.
    If $d(u) = d(v) = +\infty$, then $f(uv) = w_{\lnew}(uv) \geq 0$.
    Finally, if $d(u) = +\infty$ and $d(v) < +\infty$, then $f(uv) = w_{\lnew}(uv) + \Delta - d(v)$ and it is enough to ensure that $\Delta \geq d(v) - w_{\lnew}(uv)$.

    On the other hand, if $uv = pq$, then $\delta(q) = 0$ and
    \[
        f(pq) \,=\, c(pq) - \lnew t(pq) - \phiold(p) + \phiold(q) + \delta(p) - \delta(q)
         \, =\, \delta(p) - \Gamma_{\lnew}.
    \]
    If $d(p) < +\infty$, then $\delta(p) \geq \Gamma_{\lnew}$ follows from $\lnew \leq \lpq$.
    If $d(p) = +\infty$, then $f(pq) = \Delta - \Gamma_{\lnew}$ and it is enough to ensure that $\Delta \geq \Gamma_{\lnew}$.

    Hence $\phinew$ attests $G$ for
    \[ \Delta = \max\{0,\, \Gamma_{\lnew},\, \max\{d(v) - w_{\lnew}(uv) : uv \in E(G - pq),\, d(u) = +\infty,\, d(v) < +\infty\}\}. \]
    This concludes the edge insertion algorithm.
    We have shown it is correct and recomputes the minimum-ratio cycle and the potential function attesting the minimum ratio of a~cycle in $G$ in time $\Ot(mn^{21/22})$, thereby proving \Cref{thm:incremental-mean-cycle}.
}

\bibliographystyle{alpha}
\bibliography{references}

\clearpage
\appendix
\section{Simplifying assumptions}\label{sec:assumptions}

To see that Assumptions~\ref{ass:prefix}~and~\ref{ass:path-weights} can be always ensured in the addition-comparison~model with no loss of generality, interpret~$V$ as integers $\{0,\ldots,n-1\}$ and for each $uv=e\in E$, transform the weight of $e$ into the triple $w'(e):=(w(e),1,v-u)$
and add the tuples coordinate-wise when computing path weights.
Then, use the lexicographical order on triples to compare path lengths and distances from a single origin.
This way, the original edge weights are only manipulated via additions and comparisons, as desired.
On the other hand, manipulating the latter two coordinates of the augmented weights $w'$ of paths with $\poly(n)$ edges
only requires $O(1)$-time arithmetic operations on integers of magnitude $\poly(n)$; these are standard in the word RAM model. 

With this transformation, observe that~\Cref{ass:prefix} is guaranteed by non-negativity of $w$ and since the second coordinate of $w'(P')$ is strictly smaller than the second coordinate of $w'(P)$.
Similarly,~\Cref{ass:path-weights} is satisfied since $w'(P_1)$ and $w'(P_2)$, by a telescoping sum argument, have distinct third coordinates $u-x\neq u-y$.
If we optimize the distances lexicographically, the weights of shortest paths after the transformation have their respective first coordinates equal to the corresponding original distances in $G$, i.e., original distances can be retrieved trivially having solved the problem for transformed weights.

\section{Computing nearest vertices in low depth}\label{sec:nearest-neighbors}
\nearestneighbors*
\begin{proof}[Proof sketch]
  For $i=0,\ldots,\ell=\lceil \log_2{n}\rceil$, and $u\in V$, let $N_t^i(u)$ denote the closes $(t+1)$ (or all)
  vertices according to the constrained distance from $u$ \emph{via paths using at most $2^i$ edges}, \emph{including} $u$ itself.
  We will finally put $N_t(u) = N_t^{\ell}(u) \setminus \{u\}$, as $2^\ell\geq n$. Let us also define $d^i(u, v)$ to be the weight of the shortest path from $u$ to $v$ consisting of at most $2^i$ hops, if it exists, or $\infty$, if it does not. Observe that for any $u, v, y$ we have that $d^i(u, v) \ge d^{i+1}(u, v)$ and $d^{i+1}(u, y) \le d^i(u, v) + d^i(v, y)$. 
  
  Note that $N_t^0(u)$ for all $u\in V$ can be obtained within $\Ot(m)$ work and $\Ot(1)$ depth by identifying the $t$ lightest edges in the individual out-neighborhoods of the vertices in $G$.

  Suppose we have computed $N_t^i(u)$ for all $u\in V$, and the corresponding weights $d^i(u,v)$ of optimal paths from $u$ to $v\in N_t^i(u)$ using at most $2^i$ hops.
  We now explain how to compute all sets $N_t^{i+1}(\cdot)$ and the corresponding weights $d^{i+1}(\cdot,\cdot)$.
  Let us fix $u\in V$ and define \[ N'(u)=\bigcup_{v\in N_t^i(u)}N_t^i(v). \]
  For each $y\in N'(u)$, let \[ \delta(y) = \min\left\{d^i(u,v)+d^i(v,y) \,:\, v\in N_t^i(u),\, y\in N_t^i(v)\right\}.\]
  
  We prove the following claim:
  \begin{claim}
  $N_t^{i+1}(u)$ is the set consisting of $t+1$ vertices $y$ with the smallest values of $\delta(y)$ (or it is the whole $N'(u)$ if $|N'(u)| < t+1$). Moreover, the corresponding values $\delta(y)$ are equal to $d^{i+1}(u, y)$.
  \end{claim}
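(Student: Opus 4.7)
The plan is to sandwich $\delta(y)$ between $d^{i+1}(u,y)$ on both sides. The lower bound $\delta(y) \geq d^{i+1}(u,y)$ is immediate for every $y \in N'(u)$: every term $d^i(u,v)+d^i(v,y)$ in the minimum defining $\delta(y)$ is the total weight of a~concatenation of two walks of at most $2^i$ edges each, hence a~$u\to y$ walk with at most $2^{i+1}$ edges.

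The main work is establishing the reverse inequality $\delta(y) \leq d^{i+1}(u,y)$ for each $y \in N_t^{i+1}(u)$, which simultaneously establishes the containment $N_t^{i+1}(u) \subseteq N'(u)$. The plan is to take an~optimal $u\to y$ path $P$ with at most $2^{i+1}$ edges, split it at a~midpoint $v$ so that both halves have at most $2^i$ edges, and prove that $v \in N_t^i(u)$ and $y \in N_t^i(v)$; together with $d^i(u,v)+d^i(v,y) \leq w(P)$, this yields the bound. The key obstacle is verifying these two membership facts, and both are expected to follow by contradictions of a~similar flavor. For $v \in N_t^i(u)$: if $v$ were not among the $t+1$ closest by $d^i(u,\cdot)$, then each $v' \in N_t^i(u)$ would satisfy $d^i(u,v') < d^i(u,v) \leq d^{i+1}(u,y)$, with strictness by~\Cref{ass:path-weights}. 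Since $d^{i+1}(u,v') \leq d^i(u,v')$ and $v' \neq y$ (otherwise $d^i(u,y) < d^i(u,v) \leq d^{i+1}(u,y) \leq d^i(u,y)$), this produces $t+1$ distinct vertices $v' \neq y$ with $d^{i+1}(u,v') < d^{i+1}(u,y)$, contradicting $y \in N_t^{i+1}(u)$. A~symmetric argument -- replacing $d^i(u,v') < d^i(u,v)$ by $d^i(u,v)+d^i(v,y') < d^i(u,v)+d^i(v,y)$ for $y' \in N_t^i(v)$ -- gives $y \in N_t^i(v)$ when $v \neq y$; the degenerate case $v = y$ is automatic since $y \in N_t^i(y)$ always.

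Once both bounds are in place, the final selection step is routine. For any $y \in N'(u) \setminus N_t^{i+1}(u)$, \Cref{ass:path-weights} gives $d^{i+1}(u,y) > \max\{d^{i+1}(u,y') : y' \in N_t^{i+1}(u)\}$, so $\delta(y) \geq d^{i+1}(u,y) > \max\{\delta(y') : y' \in N_t^{i+1}(u)\}$. Hence the $t+1$ vertices of $N'(u)$ with smallest $\delta$-values are exactly $N_t^{i+1}(u)$. When $|N'(u)| < t+1$, the containment $N_t^{i+1}(u) \subseteq N'(u)$, combined with the observation that every vertex of $N'(u)$ is reachable from $u$ in at most $2^{i+1}$ hops (so $|N_t^{i+1}(u)| \geq |N'(u)|$), forces $N_t^{i+1}(u) = N'(u)$, completing the claim.
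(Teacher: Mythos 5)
Your proposal is correct and follows essentially the same route as the paper: the lower bound $\delta(y)\ge d^{i+1}(u,y)$ is immediate, and the upper bound comes from splitting an optimal $\le 2^{i+1}$-hop path at a midpoint $v$ and showing $v\in N_t^i(u)$ and $y\in N_t^i(v)$ by the same two contradiction arguments (exhibiting $t+1$ vertices strictly closer than $y$ under $d^{i+1}$, using \Cref{ass:path-weights} for strictness). Your treatment of the final selection step and the degenerate case $|N'(u)|<t+1$ is slightly more explicit than the paper's, but the substance is identical.
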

  \begin{claimproof}
  	Let us fix $y \in N_t^{i+1}(u)$. We aim to prove that $y \in N'(u)$ and that it is one of the $t+1$ vertices with the smallest value of $\delta$. Let $P$ be a shortest path from $u$ to $y$ of at most $2^{i+1}$ hops, and let $v$ be any vertex on $P$ such that its subpaths $P_1$ from $u$ to $v$ and $P_2$ from $v$ to $y$ consist of at most $2^i$ hops. We have \[ d^{i+1}(u, y) = w(P_1) + w(P_2) \ge d^i(u, v) + d^i(v, y) \ge d^{i+1}(u, y),\] so $w(P_1) = d^i(u, v)$ and $w(P_2) = d^i(v, y)$. Moreover, $w(P_1) \le w(P) = d^{i+1}(u, y)$ by non-negativity of weights, so $d^i(u, v) \le d^{i+1}(u, y)$. We now note that $v \in N_t^i(u)$, as otherwise there are at least $t+1$ distinct vertices $z\neq v$ such that $d^i(u, z) < d^i(u, v)$ (recall~\Cref{ass:path-weights}) which implies $d^{i+1}(u, z) \le d^i(u, z) < d^i(u, v) \le d^{i+1}(u, y)$ and thus $y \not\in N_t^{i+1}(u)$, a contradiction.
  	We can similarly show that $y \in N_t^i(v)$, as otherwise there exist at least $t+1$ vertices $z$ such that $d^i(v, z) < d^i(v, y)$ and for all such $z$ we have \[d^{i+1}(u, z) \le d^i(u, v) + d^i(v, z) < w(P_1) + d^i(v, y) = w(P_1) + w(P_2) = w(P)=d^{i+1}(u, y),\] a contradiction. The facts that $v \in N_t^i(u)$ and $y \in N_t^i(v)$ imply that $y \in N'(u)$ and that \linebreak $\delta(y) \le d^i(u, v) + d^i(v, y) = d^{i+1}(u, y)$. However, since for any vertices $a, b, c$ we have $d^{i+1}(a, c) \le d^i(a, b) + d^i(b, c)$, it is clear that $\delta(z) \ge d^{i+1}(u, z)$ for any $z \in N'(u)$. This implies $\delta(y) = d^{i+1}(u, y)$ and $\delta(y)$ indeed is among $t+1$ lowest values of~$\delta$, finishing the proof.
  \end{claimproof}
  
  The claim above shows that $N^{i+1}(u)$ can be constructed by computing $N'(u)$ and the corresponding values $\delta(y)$, and taking the $t+1$ lowest values of $\delta$. It is possible to do so for a particular $u \in V$ in $\Ot(t^2)$ work and $\Ot(1)$ depth using parallel sorting.
  Doing so for all vertices $u \in V$ in parallel takes $\Ot(nt^2)$ work and $\Ot(1)$ depth.
  We repeat this computation for subsequent $i=0, \ldots, \ell -1 = \ceil{\log n} -1$, achieving the goal of computing $N_t(u) \coloneqq N_t^{\ell}(u)$ along with the corresponding distances $\dist(u, v) \coloneqq d^{\ell}(u, v)$ within $\Ot(nt^2)$ work and $\Ot(\ell)=\Ot(1)$ depth.
\end{proof}

\end{document}